\documentclass[11pt,a4paper]{article}
\usepackage{authblk}

\usepackage[T1]{fontenc}
\usepackage{lmodern}
\usepackage[margin=26mm]{geometry}
\usepackage{amsmath,amssymb,amsthm,mathtools}
\usepackage{microtype,enumitem,xcolor}
\usepackage{tikz}
\usetikzlibrary{arrows.meta,positioning,calc}
\usepackage{booktabs,tabularx,array}
\tikzset{
  paperbox/.style={draw=black!75,line width=.45pt,fill=black!2,
    align=center,inner xsep=6pt,inner ysep=5pt,font=\footnotesize,
    execute at begin node={\thinmuskip=3mu\relax\medmuskip=4mu\relax\thickmuskip=5mu\relax}},
  paperarrow/.style={-{Latex[length=2mm,width=1.3mm]},line width=.55pt},
  paperlabel/.style={font=\footnotesize,align=center,fill=white,inner sep=2pt}}

\usepackage{hyperref}

\hypersetup{hidelinks,
 pdftitle={Vanishing Ideals and the Computational Tractability of Sum-of-Squares over Boolean Domains},
 pdfauthor={Monaldo Mastrolilli}}
\newtheorem{theorem}{Theorem}[section]
\newtheorem{lemma}[theorem]{Lemma}
\newtheorem{proposition}[theorem]{Proposition}
\newtheorem{corollary}[theorem]{Corollary}
\theoremstyle{definition}
\newtheorem{definition}[theorem]{Definition}
\theoremstyle{remark}
\newtheorem{remark}[theorem]{Remark}
\newcommand{\R}{\mathbb R}
\newcommand{\Q}{\mathbb Q}
\newcommand{\N}{\mathbb N}
\newcommand{\I}{\mathcal I}
\newcommand{\J}{\mathcal J}
\newcommand{\B}{\mathcal B}
\newcommand{\Haff}{\mathcal H}
\newcommand{\ml}{\operatorname{ml}}

\newcommand{\aff}{\operatorname{aff}}
\newcommand{\conv}{\operatorname{conv}}
\newcommand{\rank}{\operatorname{rank}}
\newcommand{\vol}{\operatorname{vol}}
\newcommand{\OPT}{\operatorname{OPT}}
\newcommand{\cl}{\operatorname{cl}}
\newcommand{\spanQ}{\operatorname{span}_{\Q}}
\newcommand{\spanR}{\operatorname{span}_{\R}}
\newcommand{\bits}[1]{\langle #1\rangle}
\newcommand{\norm}[1]{\lVert #1\rVert}

\DeclareMathOperator{\dist}{dist}
\setlist[enumerate]{itemsep=4pt,topsep=5pt}
\title{Vanishing Ideals and the Computational Tractability of Sum-of-Squares over Boolean Domains
\thanks{Supported by the Swiss National Science Foundation projects
no.~200021\_207429/1 ``Ideal Membership Problems and the Bit Complexity
of Sum of Squares Proofs'' and no.~200021\_212929/1 ``Computational
methods for integrality gaps analysis''.}}

\author{Monaldo Mastrolilli}
\affil{SUPSI-IDSIA, Lugano, Switzerland\\
\texttt{monaldo.mastrolilli@supsi.ch}}

\date{}
\begin{document}
\maketitle

\begin{abstract}
Building on the bit-complexity framework of Raghavendra--Weitz and the
moment-SOS criteria of Gribling--Polak--Slot, we study the effective use
of truncated vanishing identities over Boolean polynomial systems.
A complete, constructible identity space gives an augmented moment SDP
that can be optimized with exact rational feasibility and arbitrary
additive accuracy. We give a self-contained geometric implementation:
an explicit affine reduction and a simplex of Boolean evaluations supply
the radius bounds required by rational ellipsoids. The identity space
can be constructed directly or extracted from a supplied graded
Gr\"obner basis, including one supplied at a higher truncation degree.
An explicit transfer theorem connects this augmented formulation to the
original system. Two-sided SoS derivations eliminate the added equality
axioms from certificates, with controlled degree and coefficient growth,
and imply containment of a projected higher-level moment relaxation in
the augmented body. Together with spectral coefficient bounds, this
gives polynomial-time search for rational proofs with an additive
perturbation.

For Min-closed linear systems, propagation constructs the identity space in
polynomial time at fixed degree and gives degree-$4t+4$ certificates for
both signs of each degree-$t$ basis element. Consequently, a rational degree-$(8d+4)$ proof of
$f+\varepsilon\ge0$ can be found in polynomial time for fixed $d$
whenever $f\ge0$ has a degree-$2d$ proof. The augmented degree-$2d$ moment SDP
can be optimized in polynomial time with exact rational feasibility and
a comparison to the original degree-$8d+4$ relaxation. Boolean complementation gives the same
results for Max-closed systems, including generalized packing and covering.
Min-closed systems thus provide a concrete application of the general
criteria. All complexity bounds are in the Turing model.
\end{abstract}
\clearpage
\tableofcontents
\clearpage

\section{Introduction}
\label{sec:intro}
The sum-of-squares (SoS) hierarchy is a central semidefinite-programming
method for polynomial optimization and discrete optimization.  Its
moment and localizing-matrix formulation goes back to
Lasserre~\cite{Lasserre2001} and Parrilo~\cite{Parrilo03}; over Boolean
variables it is one of the standard lift-and-project hierarchies, closely
related to other systematic relaxations for \(0/1\) programming
studied by Laurent~\cite{Laurent2003}.  At low levels, SoS captures many
useful valid inequalities and has underpinned influential approximation
algorithms, beginning with semidefinite methods such as
Goemans--Williamson~\cite{GoemansWilliamson1995}.

For a fixed degree, the corresponding semidefinite program has polynomial
dimension.  This fact alone does not prove polynomial-time solvability in
the Turing model.  Two obstructions must be addressed.  On the dual side,
a low-degree SoS certificate may require coefficients of enormous bit
length.  On the primal side, valid polynomial identities impose linear
relations among the moments.  The feasible moment set may consequently
lie in a proper affine subspace of the displayed coordinate space and have
empty interior there, even though it has a well-behaved relative interior.

\paragraph{Earlier criteria.}
O'Donnell~\cite{ODonnell} showed that low SoS degree alone does not control
coefficient size.  Raghavendra--Weitz~\cite{RW} control this obstruction by combining
spectral bounds with efficient Nullstellensatz derivations from the
equality axioms.  Their criterion shows that effective ideal membership
can turn the existence of a low-degree SoS proof into the existence of one
whose coefficients have bounded bit length.  Mastrolilli~\cite{Mastrolilli} (see also \cite{BharathiM22, BharathiM25})
subsequently used truncated Gr\"obner bases to construct effective
generating sets for Boolean constraint problems, thereby making the
algebraic input of the Raghavendra--Weitz criterion algorithmically
available and obtaining efficient formulations of the associated theta
body relaxations.  Bortolotti--Mastrolilli--Vargas~\cite{BMV} later
developed degree-automatability criteria based on algebraic derivations
over finite domains.

Building on Raghavendra--Weitz and classical results on rational
semidefinite optimization, Gribling--Polak--Slot (GPS)~\cite{Gribling23}
give algebraic and geometric conditions for polynomial-time computation
of moment-SOS bounds.  Their
algebraic condition identifies the matrix kernels forced by polynomial
identities, while their spectral condition supplies quantitative
nondegeneracy.  They also observe that a Gr\"obner basis can provide the
required algebraic information.  Their algebraic criterion provides the moment-SOS framework for our
primal result. We give a direct geometric implementation using the
classical rational ellipsoid method. We specialize this framework to rational Boolean systems for which the
complete truncated identity space can be constructed efficiently, give
explicit affine coordinates and geometric bounds, and connect these
identities to proof search over the original axioms through a
degree-controlled transfer. For Min- and Max-closed linear systems we
construct the identities and prove the required two-sided derivations.
The contributions are stated precisely below.

\paragraph{The common principle.}
Our starting point is the complete vector space of identities of bounded
degree that vanish on the Boolean feasible set.  This space has two
different algorithmic roles.  To search for proofs over the original
system, its identities must be \emph{derivable} from the original axioms
with controlled degree and coefficient size.  To optimize over moments,
the same identity space must be \emph{constructible}: a rational spanning
list must be available with controlled encoding length.  Derivability and
constructibility are logically distinct requirements, and the paper keeps
their consequences separate.

On the dual side, a two-sided SoS derivation of an identity $b=0$ means
proofs of both $b\ge0$ and $-b\ge0$.  Such derivations allow the identity
to be used and later eliminated from an SoS certificate.  On the primal
side, imposing every truncated identity removes precisely the affine
directions that vanish on all feasible Boolean points.  Parametrizing the
resulting affine hull by independent coordinates exposes a genuine inner
ball.  Thus the same identity space links proof search and moment
optimization, but through two different kinds of algorithmic access.

Figure~\ref{fig:unified} displays these two branches.  The two ways of
constructing the primal input are developed in
Section~\ref{sec:primal-routes}.

\begin{figure}[htbp]
\centering
\begin{tikzpicture}
\node[paperbox,text width=8.3cm] (ideal) at (0,0)
  {Truncated vanishing identities\\$\I(S)\cap\Q[x]_{\le2d}$};
\node[paperbox,text width=6.0cm,minimum height=1.45cm,anchor=north]
  (derive) at ($(ideal.south)+(-3.65,-.85)$)
  {Two-sided SoS proofs of $\pm b\ge0$\\
   for $b\in\B_{2d}$; controlled degree\\and coefficient bounds};
\node[paperbox,text width=6.0cm,minimum height=1.45cm,anchor=north]
  (construct) at ($(ideal.south)+(3.65,-.85)$)
  {Construct a complete rational list\\
   Direct identities or graded $\B_D$\\at algebraic degree $D\ge2d$};
\node[paperbox,text width=6.0cm,minimum height=1.15cm,below=8mm of derive]
  (bounded)
  {Bounded certificates\\Controlled degree increase};
\node[paperbox,text width=6.0cm,minimum height=1.15cm,below=8mm of construct]
  (geometry)
  {Affine hull of moments\\Inner / outer radii and exact separation};
\node[paperbox,text width=6.0cm,minimum height=1.25cm,below=8mm of bounded]
  (dual)
  {\textbf{Dual:} rational proof search\\for $f+\varepsilon\ge0$};
\node[paperbox,text width=6.0cm,minimum height=1.25cm,below=8mm of geometry]
  (primal)
  {\textbf{Primal:} rational $y^*\in K_{2d}$\\
   Exact feasibility and $\varepsilon$-optimality};
\draw[paperarrow] (ideal.south) -- ++(0,-.4) -| (derive.north);
\draw[paperarrow] (ideal.south) -- ++(0,-.4) -| (construct.north);
\draw[paperarrow] (derive) -- (bounded);
\draw[paperarrow] (bounded) -- (dual);
\draw[paperarrow] (construct) -- (geometry);
\draw[paperarrow] (geometry) -- (primal);
\end{tikzpicture}
\caption{The common algebraic object and its two algorithmic uses.
The left branch assumes bounded two-sided certificates; the right branch
assumes an effective description of all truncated identities. Two-sided
derivations also yield the hierarchy comparison in
Figure~\ref{fig:primal-roles}.}
\label{fig:unified}
\end{figure}

\paragraph{Contributions.}
The paper makes the preceding principle quantitative and algorithmic in
four steps.  We use ordinary degree throughout: a degree-$2d$
pseudoexpectation has moments through degree $2d$ and a moment matrix
indexed by monomials through degree $d$.

\emph{1. A dual criterion and an explicit transfer theorem.}
Theorem~\ref{thm:dual} assumes that both signs of every element of a
truncated reduced graded Gr\"obner basis have degree-$O(d)$ SoS proofs
with bounded coefficients.  Under this assumption, every polynomial that
has a degree-$2d$ SoS proof also has a degree-$O(d)$ proof with controlled
coefficient size.  The bounded-certificate search reduction in
Appendix~\ref{app:search}, using standard convex optimization and an
explicit Boolean residual correction, then finds a rational Gram certificate of
$f+\varepsilon\ge0$ in polynomial time whenever $f\ge0$ has
a degree-$2d$ proof.

The spectral estimate follows Raghavendra--Weitz, with the rational
moment and localizing-matrix bounds of Gribling--Polak--Slot.
Theorem~\ref{thm:transfer} quantifies the standard algebraic substitution
needed when the useful identities are not original equality axioms.  It shows how
to eliminate every added equation $b=0$ by substituting two-sided SoS
proofs of $b\ge0$ and $-b\ge0$.  The theorem tracks the resulting degree
and coefficient growth: if $b$ has degree $t$ and its two signs have
degree-$D_t$ proofs, then its use in a degree-$2d$ certificate can be
eliminated within degree $D_t+2(2d-t)$.  This permits identities derived
using inequalities to play the role that equality consequences play in
the Raghavendra--Weitz criterion.

\emph{2. A constructive Boolean realization of the primal criterion.}
Theorem~\ref{thm:main} starts from a rational, polynomial-size spanning
list of all identities of degree at most $2d$.  Adding their moment
equalities to the original degree-$2d$ relaxation gives the augmented body
$K_{2d}$.  We prove directly that these equalities describe its complete
affine hull, compute independent rational coordinates, construct an inner
ball of inverse-exponential radius and a polynomially bounded outer ball,
and give a strong rational separation oracle.  The rational ellipsoid
method then returns a rational point of $K_{2d}$ that satisfies every
constraint exactly and is optimal within any prescribed additive error.

This optimization principle is a Boolean instance of
Gribling--Polak--Slot~\cite[Proposition~6 and Theorem~7]{Gribling23}:
the complete identities supply their algebraic hypothesis, and Boolean
evaluation moments supply their spectral hypothesis. Their SDP algorithm
already has an exactly feasible rational output
\cite[Theorem~15 and Section~3.4]{Gribling23}. Here we give an explicit,
self-contained construction of the affine reduction and quantitative
ellipsoid data from the supplied identity list. This construction, and
its combination with degree-controlled derivability, are the parts
developed in this paper.

\emph{3. Two effective constructions of the same primal body.}
The input to the primal theorem can be obtained in two ways.  The direct
route constructs any rational spanning list of the truncated identity
space, for example from closure information.  The Gr\"obner-basis route starts from a supplied or effectively computed reduced graded basis $\B_D$ of the vanishing ideal, where
$D\ge 2d$; the case $D=2d$ is sufficient, while a basis supplied at
a higher degree can simply be truncated to the required degree.  Graded
division extracts all identities needed at degree $2d$, while standard
monomials provide the independent moment coordinates.  Theorem
\ref{thm:supplied-route} records the running time and encoding length of
this reduction.  Both routes impose the same affine equations and hence
optimize the same body $K_{2d}$; they are alternative implementations of
one geometric idea, not different relaxations.  The use of truncated
Gr\"obner bases as effective generators connected with the
Raghavendra--Weitz criterion, and for the efficient formulation of theta
bodies, originates in Mastrolilli~\cite{Mastrolilli}.  Our supplied-basis
theorem integrates that algebraic route with the present exact-feasible
moment optimization result.

\emph{4. Connection with the original moment hierarchy.}
Constructibility of the identities suffices to optimize $K_{2d}$.
Derivability supplies an additional conclusion.  If every identity used
in $K_{2d}$ has two-sided SoS derivations of degree at most the even integer
$\kappa\ge2d$, then soundness gives
\[
 \pi_{2d}(S_\kappa)\subseteq K_{2d}\subseteq S_{2d}.
\]
Here $S_{2d}$ is the original moment relaxation with moments through degree
$2d$, and $\pi_{2d}$ discards moments of degree above $2d$.  Consequently,
for a maximization problem the pseudoexpectation produced over $K_{2d}$
has objective value at least the optimum of the original degree-$\kappa$
relaxation, up to the requested additive error.  The moment degree of the
algorithm remains $2d$; $\kappa$ records the degree needed to derive and
eliminate the added identities.

\paragraph{Min-closed systems as an application.}
Min-closed systems supply both the algebraic and algorithmic hypotheses
of the general criteria explicitly. The same closure structure supports
the two algorithms.
A Boolean linear inequality is \emph{Min-closed} if the coordinatewise
minimum of two satisfying Boolean vectors also satisfies it. We call a
system of such rows a generalized packing system. Classical packing
inequalities $\sum_i a_i x_i\le b$, $a_i\ge0$, are examples, but Min-closed
rows can have mixed signs and need not define a downward-closed family.
The example $x_2\le x_1$ illustrates this difference. Generalized covering
is defined by Max-closed rows, including classical nonnegative covering
inequalities. The assumption is on each given row, not only on the
intersection of the rows.

For Min-closed systems, closures of small supports construct all truncated
identities without enumerating the feasible set.  This verifies the
constructibility hypothesis and yields the primal algorithm.  Separately,
Boolean propagation admits degree-four SoS certificates and refutes an
infeasible Min-closed system in degree four.  Conditioning these
refutations proves both signs of every degree-$t$ basis element in degree
$4t+4$, with polynomial coefficient bit length.  This verifies the
derivability hypothesis and yields the dual algorithm and the comparison
with a higher level of the original moment hierarchy.

For a moment vector $u$, let $L_u$ denote the associated linear
functional, defined on monomials by $L_u(x^\alpha)=u_\alpha$ and
extended linearly to polynomials.

\begin{theorem}[Algorithmic consequences for generalized packing and covering]
\label{thm:intro}
Fix $d\ge1$. For a rational Boolean linear system whose rows are all
Min-closed, or whose rows are all Max-closed, the following tasks have
polynomial bit complexity in the input length and the number of accuracy
bits.
\begin{enumerate}[label=\textup{(\roman*)}]
\item Under the promise that a given rational $f\ge0$ has a degree-$2d$
SoS proof, find a rational Gram proof of $f+\varepsilon\ge0$ of degree
at most $8d+4$.
\item Either report that the Boolean system is infeasible, or return a
rational degree-$2d$ pseudoexpectation that is exactly feasible and
$\varepsilon$-optimal for $K_{2d}$. Its objective value is at least
$\max_{u\in S_{8d+4}}L_u(f)-\varepsilon$.
\end{enumerate}
In \textup{(ii)} the degree of $f$ is at most $2d$ and maximization is used.
The precision may be specified as $\varepsilon=2^{-s}$, with time polynomial
in $s+1$. The augmented SDP uses only moments through degree $2d$.
\end{theorem}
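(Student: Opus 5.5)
The proof assembles the general criteria with the two Min-closed ingredients announced in the introduction: constructibility of the truncated identities by closure, and two-sided derivability by propagation. First we reduce the Max-closed case to the Min-closed one by Boolean complementation $x_i\mapsto1-x_i$. This is an affine, degree-preserving, invertible change of variables with integer coefficients and polynomially bounded bit length. It maps Max-closed rows to Min-closed rows, SoS proofs to SoS proofs of the same degree, and pseudoexpectations to pseudoexpectations. Moreover, it maps $K_{2d}$ and $S_\kappa$ of one system onto those of the other. Hence it suffices to treat Min-closed systems, transporting inputs and outputs through the substitution at polynomial cost.

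For (i), we invoke the dual criterion, Theorem~\ref{thm:dual}, with the transfer theorem, Theorem~\ref{thm:transfer}. The hypothesis requires that both signs of every element $b$ of the truncated reduced graded Gr\"obner basis $\B_{2d}$ have SoS proofs of controlled degree and coefficient size. For Min-closed systems this is the conditioning-of-propagation result: each degree-$t$ basis element has two-sided proofs of degree $D_t=4t+4$ with polynomial bit length. Transfer then eliminates each use of $b=0$ inside a degree-$2d$ certificate within degree $D_t+2(2d-t)=4t+4+4d-2t=2t+4d+4$. Since $t\le 2d$, this is at most $8d+4$. Theorem~\ref{thm:dual} then gives a degree-$(8d+4)$ proof with bounded coefficients, and the search reduction of Appendix~\ref{app:search} returns a rational Gram certificate of $f+\varepsilon\ge0$ in time polynomial in the input length and $s+1$.

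For (ii), we first run propagation. If it refutes the system, we report infeasibility; the degree-four refutation certifies this. Otherwise we construct, by closures of supports of size at most $2d$, a rational polynomial-size spanning list of $\I(S)\cap\Q[x]_{\le2d}$. This list is the input to Theorem~\ref{thm:main}, which computes independent affine coordinates, the inner and outer balls, and a strong separation oracle. The rational ellipsoid method then returns an exactly feasible, $\varepsilon$-optimal $y^*\in K_{2d}$. For the comparison, soundness with the two-sided derivations of degree at most $\kappa=8d+4$ gives $\pi_{2d}(S_{8d+4})\subseteq K_{2d}$. Here the relevant degree is the largest $D_t+2(2d-t)$, again at most $8d+4$, and $8d+4$ is even and at least $2d$. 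Since $\deg f\le 2d$, we get $L_{y^*}(f)\ge\max_{K_{2d}}L(f)-\varepsilon\ge\max_{u\in S_{8d+4}}L_u(f)-\varepsilon$.

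The main obstacle is the bookkeeping at the degree-$\kappa$ containment. We must check that the soundness argument only needs each identity $b$ to be derivable at degree $D_t+2(2d-t)$, rather than $D_t$ multiplied against arbitrary degree-$2d$ monomials. We must also verify that the closure construction yields exactly the identity space whose basis elements are covered by the propagation derivations. I would first try to restate both as consequences of Theorem~\ref{thm:transfer} applied to multiples $m\,b$ with $\deg m\le 2d-t$.
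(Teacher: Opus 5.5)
Your proposal is correct and follows the paper's own route. The paper obtains the theorem by combining three results: Corollary~\ref{cor:min-dual}, which applies the dual criterion with $C=1$ and $D_b=4t+4$ to get $4d+2t+4\le 8d+4$; Corollary~\ref{cor:min-primal}, which tests infeasibility by closure and feeds the closure list of Lemma~\ref{lem:minspan} into Theorem~\ref{thm:main}; and Proposition~\ref{prop:complement} for Max-closed rows. The two bookkeeping points you flag are exactly what Corollary~\ref{cor:basiscompare} settles: graded division writes each degree-$2d$ identity as $\sum_b a_b b$ with $\deg a_b\le 2d-\deg b$, Theorem~\ref{thm:transfer} then gives two-sided degree-$(8d+4)$ proofs, and the closure list (together with the Boolean identities $x^\alpha-x_{\operatorname{supp}(\alpha)}$) and the admissible basis multiples both span $\I(S)\cap\Q[x]_{\le2d}$, so they define the same $K_{2d}$.
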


Here $L_u$ is the linear functional associated with the moment vector
$u$: it assigns $u_T$ to the squarefree monomial $x_T$ and is extended
to polynomials by linearity and the Boolean equations. The notation
$S_t$ denotes the original degree-$t$ moment relaxation, while $K_{2d}$
also imposes all vanishing identities through degree $2d$.
Theorem~\ref{thm:intro} combines Corollaries~\ref{cor:min-dual} and~\ref{cor:min-primal}
with Proposition~\ref{prop:complement}, giving polynomial-time algorithms
on both the certificate and moment sides for these systems.

The Min-closed application therefore contributes the missing bridge from
the abstract criteria to compact systems of rational linear inequalities.
It gives, for the same original formulation, polynomial-time proof search
in degree $8d+4$ and polynomial-time optimization of the augmented
degree-$2d$ moment body, together with
$\pi_{2d}(S_{8d+4})\subseteq K_{2d}$.  Max-closed systems follow by
Boolean complementation.  The geometric and optimization arguments are
included in full, with the rational ellipsoid theorem of
Gr\"otschel--Lov\'asz--Schrijver~\cite{GLS} as the algorithmic tool.

Huber and Mastrolilli~\cite{HuberMastrolilli2026} apply
Theorem~\ref{thm:intro} to the approximation of Chv\'atal--Gomory closures.
They obtain a polynomial-time approximation scheme for the first
Chv\'atal--Gomory closure generated by multipliers in
$\{0\}\cup[1/f,1]$, for fixed $f\ge2$, over capacity-bounded
Min-closed systems. The primal criterion developed here supplies the
algorithmic ingredient: polynomial-time computation of rational
pseudoexpectations satisfying the required bounded-degree SoS
constraints exactly, with controlled additive error in the objective.

A further application is given by Benedetto Proen\c{c}a, Encz, and
Mastrolilli~\cite{ProencaEnczMastrolilli2026}, who construct truncated
vanishing identities and their two-sided SoS derivations for soft
Majority-closed Boolean linear systems. Applying the criteria developed
here yields polynomial-time proof search with controlled degree increase
and exact-feasible rational optimization of the augmented moment
relaxations at fixed degree, providing the computational foundation
for their approximation algorithm.

\paragraph{Organization.}
Section~\ref{sec:prelim} fixes the degree and encoding conventions.
Sections~\ref{sec:dual} and~\ref{sec:primal} prove the two general criteria.
Section~\ref{sec:min} applies the criteria to Min-closed systems,
compares the two primal constructions, and transfers the conclusions
to Max-closed systems by Boolean complementation in
Section~\ref{sec:complement}. Its final subsection discusses generalized
packing and covering.
After the references, Appendices~\ref{app:algebra}--\ref{app:search}
collect the technical tools used in the main text. They recall standard
facts (see e.g. \cite{GLS}) from rational linear algebra, positive-semidefinite separation,
the rational ellipsoid method, and perturbation-and-repair arguments,
while proving the quantitative affine-hull, radius, reconstruction, and
encoding estimates required by our applications.

\section{Preliminaries and conventions}
\label{sec:prelim}
Let $n\ge1$ and consider a rational polynomial system
\begin{equation}\label{eq:system}
 g_i(x)\ge0\quad(i\in[m]),\qquad
 h_j(x)=0\quad(j\in[\ell]),
\end{equation}
including the Boolean equations $x_a^2-x_a=0$ for all $a\in[n]$.
Its real feasible set is $S\subseteq\{0,1\}^n$, and
\[
 \I(S)=\{q\in\R[x_1,\ldots,x_n]:q(s)=0\text{ for every }s\in S\}.
\]
Fix $d\ge1$. Write $\J_t=\{T\subseteq[n]:|T|\le t\}$ and
\[
 N=|\J_{2d}|,\qquad W=\binom{n+2d}{2d},\qquad N\le W\le(n+1)^{2d}.
\]
Here $2d$ is the maximum moment degree and $d$ is the moment-matrix order.
Let $\mathsf{len}$ be the total binary encoding length of the input system, including
dimensions and monomial supports. Rational numbers have binary numerators
and positive denominators; $\bits u$ denotes total encoding length.
All computational bounds count bit operations. All logarithms are to
base two, and $B(a,\rho)$ denotes a closed Euclidean ball.
Accuracy parameters $\varepsilon$ are positive rationals. For a polynomial
$p=\sum_\alpha p_\alpha x^\alpha$, write
$\norm p_1=\sum_\alpha|p_\alpha|$. For real symmetric matrices, use the
trace inner product $\langle A,B\rangle=\operatorname{tr}(AB)$.

\paragraph{Moment and proof conventions.}
For $T\subseteq[n]$, write $x_T=\prod_{a\in T}x_a$ and
$x_\varnothing=1$. Multilinear reduction $\ml$ replaces each positive
exponent by one. For even $t\ge2$, a vector $y\in\R^{\J_t}$ defines
\[
 L_y(p)=\sum_{T\in\J_t}p_Ty_T,
 \qquad \ml(p)=\sum_{T\in\J_t}p_Tx_T,\qquad \deg p\le t.
\]
The original pseudoexpectation relaxation $S_t$ consists of the vectors
satisfying
\begin{align}
 y_\varnothing&=1,\qquad
 M_{t/2}(y)=(y_{A\cup B})_{A,B\in\J_{t/2}}\succeq0,
 \label{eq:moment}\\
 M_{a_i}(g_i y)&=(L_y(g_i x_Ax_B))_{A,B\in\J_{a_i}}\succeq0,
 \quad a_i=\left\lfloor\frac{t-\deg g_i}{2}\right\rfloor,
 \quad \deg g_i\le t,\label{eq:local}\\
 L_y(x^\alpha h_j)&=0
 \quad(\alpha\in\N^n,\ |\alpha|+\deg h_j\le t).
 \label{eq:equalities}
\end{align}
Zero input polynomials are discarded. A constraint of degree greater
than $t$ has no block or multiplier at this truncation, but still defines
$S$. Degrees refer to ordinary total degree before multilinearization.

A degree-$t$ SoS proof of $p\ge0$ is an identity
\begin{equation}\label{eq:sos}
 p=\sigma_0+\sum_i\sigma_i g_i+\sum_j u_jh_j,
 \qquad
 \deg\sigma_0,\ \deg(\sigma_i g_i),\ \deg(u_jh_j)\le t,
\end{equation}
where every $\sigma_i$ is a sum of squares and the $u_j$ are arbitrary
polynomials. Real coefficients suffice for soundness: if $y\in S_t$,
then $L_y(p)\ge0$, by applying $L_y$ to the squares, localizing squares,
and equality multiples in~\eqref{eq:sos}. Boolean reductions preserve
degree: repeatedly use
$x_a^v-x_a^{v-1}=x_a^{v-2}(x_a^2-x_a)$ for $v\ge2$, with the other
monomial factors retained. Thus polynomial identities proved modulo the
Boolean ideal can be lifted to this convention without increasing degree.

For even $\kappa\ge2d$, let $\pi_{2d}$ discard all moments of degree
greater than $2d$. It maps $S_\kappa$ into $S_{2d}$: every lower-degree
PSD matrix is a principal submatrix of the corresponding higher-degree
matrix, and every required equality is already imposed at degree $\kappa$.

\paragraph{Gram encoding and certificate size.}
Write $v_s$ for the vector of all ordinary monomials of degree at most $s$.
A sum of squares $\sigma$ of degree at most $2s$ is represented by
$\sigma=v_s^\top Qv_s$, where $Q\succeq0$ is symmetric.
We encode a rational certificate by these Gram matrices and the coefficients
of the equality multipliers. A size bound controls the magnitudes of these
entries, which may be real. A rational PSD
Gram matrix is an exact rational certificate even if a chosen square-root
factorization has irrational entries. Degree automatability below means
polynomial-time search for a rational certificate of $f+\varepsilon\ge0$,
under the promise that $f\ge0$ has a degree-$2d$ certificate, allowing degree
$O(d)$ and time polynomial in the number of monomials at that degree and
the requested number of accuracy bits.

\paragraph{Vanishing identities and graded bases.}
The ideal $\I(S)$ is considered over $\R$; its degree-bounded coefficient
spaces have rational bases, since they are kernels of Boolean evaluation
matrices. We always use a fixed graded lexicographic order. The reduced
Gr\"obner basis is monic; $\B_r$ denotes its elements of ordinary degree at
most $r$. An identity list is \emph{complete at degree $r$} if its rational
span is $\I(S)\cap\Q[x]_{\le r}$. A truncated basis is generally not such a
linear spanning list until its admissible monomial multiples are included.
Appendix~\ref{app:algebra} proves the standard division and Boolean reduction
facts used for this purpose.

\begin{definition}[Strong rational separation]\label{def:strong}
For a closed convex $C\subseteq\R^p$, a strong rational separation oracle
accepts a rational query $z$ only if $z\in C$; otherwise it returns
$a\in\Q^p\setminus\{0\}$ and $b\in\Q$ satisfying
$a^\top x\le b<a^\top z$ for every $x\in C$.
Its time and output length are polynomial in the oracle-instance encoding
and the query encoding. This is exact separation, including on the boundary.
\end{definition}

\section{The dual criterion: controlling certificates through vanishing identities}
\label{sec:dual}
We use the spectral coefficient-control method of
Raghavendra--Weitz~\cite{RW}, closely related to the moment-SOS
certificate bounds in~\cite[Theorems~9 and~22]{Gribling23}.
The first lemma gives the bounded representation modulo the vanishing
ideal. Theorem~\ref{thm:transfer} next shows how to eliminate added identities,
with degree and coefficient bounds. Combining these two results gives
the dual criterion.

A degree-\(2d\) SoS proof may contain Gram-matrix components whose associated
polynomials vanish at every feasible point. Such components do not affect
the values of the proof on the feasible set, but they can carry arbitrarily
large coefficients. We project them away and collect the resulting
difference in a polynomial of the vanishing ideal. This projection argument
is used only to prove the existence of an equivalent certificate whose
coefficients are bounded by a polynomially encoded number \(B\). The algorithm
does not construct the projected Gram matrices, nor does it enumerate the
feasible set \(S\). Once the bound \(B\) is known, it searches directly, in
the original Gram encoding, for any rational certificate satisfying the
coefficient bound \(B\).

The following lemma adapts the Gram-projection step in the proof of
Raghavendra--Weitz~\cite[Theorem~10]{RW} to the present Boolean
setting. We retain the resulting vanishing residual as an element of
the truncated vanishing ideal; it will be eliminated later through
two-sided SoS derivations.
For the arithmetic spectral bounds we use the integer-matrix argument
of~\cite[Lemmas~6 and~7]{RW} and its moment and localizing-matrix
formulations in~\cite[Theorem~7 and Propositions~20 and~21]{Gribling23}.
These bounds apply to rational Boolean evaluation moments, including
when some inequalities vanish at feasible points.

\begin{lemma}[Bounded certificate modulo the vanishing ideal]
\label{lem:bounded-modulo}
Suppose $S\ne\varnothing$ and a rational polynomial $f$ has a degree-$2d$
SoS proof. Put $\mathsf{len}_f=\mathsf{len}+\bits f$. There exist PSD Gram matrices $Q_i'$ and
$q\in\I(S)\cap\R[x]_{\le2d}$ such that
\[
 f=\sum_{i=0}^m g_i v_{a_i}^{\top}Q_i'v_{a_i}+q,
 \qquad g_0=1,\quad a_i=\left\lfloor(2d-\deg g_i)/2\right\rfloor,
\]
where negative $a_i$ are omitted. Every matrix entry and every coefficient
of $q$ has magnitude at most $2^{\operatorname{poly}(W,\mathsf{len}_f)}$, for a
computable uniform polynomial. The coefficients here may be real.
\end{lemma}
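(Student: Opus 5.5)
We follow the Raghavendra--Weitz Gram-projection argument. Start from a given degree-$2d$ proof
\[
f=\sum_{i=0}^m g_i\,v_{a_i}^\top Q_i v_{a_i}+\sum_j u_jh_j ,
\]
with $Q_i\succeq0$ possibly huge. For each $i$, consider the real subspace $V_i\subseteq\R^{\text{monomials of degree}\le a_i}$ spanned by the evaluation vectors
\[
w_{i,s}=\sqrt{g_i(s)}\,v_{a_i}(s),\qquad s\in S,
\]
where only points with $g_i(s)>0$ contribute. Since $g_i(s)\ge0$ on $S$, these vectors make sense. Let $P_i$ be the orthogonal projector onto $V_i$ and put $Q_i'=P_iQ_iP_i\succeq0$. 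The difference $g_i v^\top(Q_i-Q_i')v$ vanishes at every $s\in S$. Indeed, at $s$ with $g_i(s)>0$ the vector $v(s)$ lies in $V_i$, so $P_iv(s)=v(s)$; at $s$ with $g_i(s)=0$ the factor $g_i$ kills the term. The equality terms $u_jh_j$ also vanish on $S$. Hence
\[
q:=f-\sum_i g_i v_{a_i}^\top Q_i' v_{a_i}\in\I(S)
\]
has degree at most $2d$, because each retained term has degree at most $2d$. This gives the required form, and it remains to bound the entries of $Q_i'$ and the coefficients of $q$.

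The bound on $Q_i'$ is the spectral step and the main obstacle. Its trace is controlled through evaluation at Boolean points. We have
\[
\sum_{s\in S} g_i(s)\,v(s)^\top Q_i v(s)\;\le\;\sum_{s\in S} f(s),
\]
since every summand of the certificate is nonnegative on $S$. Moreover $|f(s)|\le\norm f_1$, so the right-hand side is at most $|S|\,\norm f_1\le 2^n\norm f_1$. This shows $\langle Q_i, G_i\rangle\le 2^n\norm f_1$, where
\[
G_i=\sum_{s\in S} w_{i,s}w_{i,s}^\top
\]
is a rational PSD matrix (its entries are $g_i(s)v(s)v(s)^\top$) whose range is exactly $V_i$. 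From $\langle Q_i',G_i\rangle=\langle Q_i,G_i\rangle$ we get
\[
\operatorname{tr}Q_i'\le \frac{\langle Q_i',G_i\rangle}{\lambda_{\min}^+(G_i)} .
\]
We then invoke the integer-matrix lemma of \cite[Lemmas~6 and~7]{RW}, equivalently \cite[Propositions~20 and~21]{Gribling23}. After clearing denominators of the $g_i(s)$, which have encoding polynomial in $\mathsf{len}$, $G_i$ becomes an integer PSD matrix of dimension at most $W$ with entries at most $2^{\operatorname{poly}(\mathsf{len})}\cdot 2^n$. Its smallest nonzero eigenvalue is then at least $2^{-\operatorname{poly}(W,\mathsf{len})}$, since the product of the nonzero eigenvalues is a nonzero integer coefficient of the characteristic polynomial and each eigenvalue is at most the trace. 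Entries of the PSD matrix $Q_i'$ are bounded by its trace, which gives
\[
|(Q_i')_{AB}|\le 2^{\operatorname{poly}(W,\mathsf{len}_f)} .
\]

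Finally we bound $q$. Its coefficients are those of $f$ minus those of $\sum_i g_iv^\top Q_i'v$. Each coefficient of the latter is a sum of at most $W^2$ products of a $Q_i'$ entry with a coefficient of $g_i$, summed over $i\le m\le\mathsf{len}$. Hence $\norm q_1\le 2^{\operatorname{poly}(W,\mathsf{len}_f)}$, with all polynomials explicit and therefore computable uniformly.

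The delicate points are twofold. One is the handling of inequalities that vanish on parts of $S$, which the choice of $V_i$ via $g_i(s)>0$ addresses. The other is the use of degree-$\le a_i$ ordinary monomials rather than multilinear ones, which is harmless because $v(s)$ is still well defined at Boolean points.
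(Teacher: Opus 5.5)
Your proposal is correct and follows the paper's proof essentially step by step. Both project each Gram matrix onto the range of the $g_i$-weighted Boolean evaluation matrix, absorb the difference into a residual $q$ vanishing on $S$, and bound $\operatorname{tr}Q_i'$ by averaging the certificate over $S$ together with the integer-spectrum lower bound on positive eigenvalues. The differences are cosmetic: you use the unnormalized sum $G_i$ instead of the average $M_i$, and you bound $\sum_{s\in S} f(s)$ by $|S|\norm f_1$ rather than via $W2^{\bits f}$.
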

\begin{proof}
Start with any Gram proof, without a coefficient bound. For each retained
block form, for the proof only,
\[
 M_i=\frac1{|S|}\sum_{s\in S}g_i(s)v_{a_i}(s)v_{a_i}(s)^\top\succeq0.
\]
Let $\Pi_i$ be the orthogonal projection onto its range and put
$Q_i'=\Pi_i Q_i\Pi_i$. A vector $u$ in $\ker M_i$ satisfies
$\sum_{s\in S}g_i(s)(u^\top v_{a_i}(s))^2=0$. Every summand is nonnegative;
thus $u^\top v_{a_i}(s)=0$ whenever $g_i(s)>0$. Consequently
$g_i(s)v_{a_i}(s)^\top Q_i'v_{a_i}(s)$ equals the original summand at
every $s\in S$. The equality terms in the original proof vanish there,
so the displayed residual $q$ vanishes there as well.

Choose a positive common denominator $b_i$ of the coefficients of $g_i$,
with $\log b_i\le \mathsf{len}$. The matrix
$A_i=|S|b_iM_i$ is an integer PSD matrix of order at most $W$.
Its entries have magnitude at most
$U_0=2^{n+2\mathsf{len}}(\mathsf{len}+1)$: ordinary monomials evaluate to zero or one, there
are at most $\mathsf{len}$ input terms, and $|S|\le2^n$. By
Lemma~\ref{lem:integer-spectrum}, every positive eigenvalue of $A_i$
is at least $(WU_0)^{-W}$. Therefore every positive eigenvalue of $M_i$
is at least
\[
 \delta=\frac{(WU_0)^{-W}}{2^{n+\mathsf{len}}}>0.
\]
If $M_i=0$, take $Q_i'=0$. Otherwise $M_i\succeq\delta\Pi_i$.
Averaging the original proof on $S$ gives
\[
 0\le\langle M_i,Q_i'\rangle
 \le \frac1{|S|}\sum_{s\in S} f(s)
 \le W2^{\bits f}.
\]
Here every other averaged Gram term is nonnegative. Since
$Q_i'=\Pi_iQ_i'\Pi_i\succeq0$, the left side is at least
$\delta\operatorname{tr}(Q_i')$. Hence every entry of $Q_i'$ has
magnitude at most $W2^{\bits f}/\delta$; for a PSD matrix this follows
from $|Q_{uv}|\le\sqrt{Q_{uu}Q_{vv}}\le\operatorname{tr}Q$.
Expanding at most $(m+1)W^2$ Gram terms and the input coefficients now
bounds $\norm q_1$ by $2^{\operatorname{poly}(W,\mathsf{len}_f)}$.
All the displayed majorants are computable from the encoded input, even
though the projections and the set $S$ need not be computed.
\end{proof}

The following theorem allows identities with two-sided SoS proofs
from the original system to be used as additional equality axioms and
then eliminated from a certificate. The final certificate proves the
same inequality from the original axioms alone, with controlled degree
and coefficient growth. On the moment side, this gives containment of
a projected higher-degree relaxation in the augmented lower-degree body.
The substitution is classical: it shows that the support $M\cap(-M)$
of a quadratic module is an ideal
\cite[Section~4.1, Eq.~(16)]{KlepSchweighofer12}.

\begin{theorem}[Transfer from added identities to the original system]
\label{thm:transfer}
Let $r\ge2$ be even, and let $\mathcal E$ be a finite list of nonzero
polynomials of degree at most $r$. Suppose both signs of every
$b\in\mathcal E$ have SoS proofs from~\eqref{eq:system} of degree
$D_b$. Define the even degree
\[
 \Delta_{\mathcal E}
 =2\left\lceil\frac12\max\left\{r,
       \max_{b\in\mathcal E}[D_b+2(r-\deg b)]\right\}\right\rceil,
\]
omitting the inner maximum if $\mathcal E$ is empty. Then:
\begin{enumerate}[label=\textup{(\roman*)}]
\item Every degree-$r$ SoS certificate of $f\ge0$ from the original
system augmented by the equality axioms $b=0$, $b\in\mathcal E$,
can be transformed into an SoS certificate of the same inequality
$f\ge0$ from the original system alone, of degree at most
$\Delta_{\mathcal E}$.
If all entries of the augmented certificate, including its equality
multipliers, have magnitude at most $A\ge1$, and the two-sided
certificates have entries of magnitude at most $B\ge1$, the transferred
certificate has entries of magnitude at most
\[
 A+2|\mathcal E|W_{\Delta_{\mathcal E}}^2(A+1)^2B,
 \qquad W_{\Delta_{\mathcal E}}=\binom{n+\Delta_{\mathcal E}}{
                                      \Delta_{\mathcal E}}.
\]
Rational input certificates give a rational output by an explicit
polynomial-bit-time substitution in their encoded sizes and
$W_{\Delta_{\mathcal E}}$.
\item Let $S_r^{\mathcal E}$ be the degree-$r$ moment relaxation
with these equality axioms and all degree-allowed multipliers. Then
\[
 \pi_r(S_{\Delta_{\mathcal E}})\subseteq S_r^{\mathcal E}.
\]
If the admissible multiples of $\mathcal E$ span
$\I(S)\cap\R[x]_{\le r}$, then $S_r^{\mathcal E}=K_r$, the moment
body obtained by imposing all vanishing identities through degree $r$.
The inclusion requires only the two-sided degree bounds.
\end{enumerate}
\end{theorem}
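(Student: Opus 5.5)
The proof of (i) is the classical substitution showing that the support of a quadratic module is an ideal. Fix a degree-$r$ certificate from the augmented system,
\[
 f=\sigma_0+\sum_i\sigma_ig_i+\sum_ju_jh_j+\sum_{b\in\mathcal E}w_bb,
\]
with $\deg(w_bb)\le r$, so $\deg w_b\le r-\deg b$. Everything except the terms $w_bb$ is already a certificate from~\eqref{eq:system} of degree $r\le\Delta_{\mathcal E}$, so only these terms need rewriting.

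For each $b$, split the multiplier as a difference of squares:
\[
 w_b=\tfrac14\bigl[(w_b+1)^2-(w_b-1)^2\bigr].
\]
Take the given two-sided proofs $b=\tau_0^++\sum_i\tau_i^+g_i+\sum_j\mu_j^+h_j$ and $-b=\tau_0^-+\dots$, each of degree $D_b$. Then
\[
 w_bb=\tfrac14(w_b+1)^2\,b+\tfrac14(w_b-1)^2\,(-b).
\]
Substituting the two proofs into this expression works as follows.
\begin{itemize}
\item A square times an SoS is an SoS.
\item A square times $\tau_i^\pm g_i$ is an SoS times $g_i$.
\item A square times $\mu_j^\pm h_j$ is a new equality multiple.
\end{itemize}
The degree of each new term is at most $D_b+2(r-\deg b)$, because $\deg(w_b\pm1)^2\le2(r-\deg b)$. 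Taking the maximum over $b$ and rounding up to an even number gives $\Delta_{\mathcal E}$.

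\textbf{Coefficient bound.} In Gram form, if $w_b\pm1=c^\top v$ and $\tau=v^\top Tv$, then $(c^\top v)^2\,v^\top Tv$ has Gram matrix $(cc^\top)\otimes T$, re-indexed onto monomials of degree at most $\Delta_{\mathcal E}/2$ by merging products of monomials. Each entry of $c$ is at most $A+1$, and each entry of $T$ is at most $B$. Each entry of the merged Gram matrix is a sum of at most $W_{\Delta_{\mathcal E}}^2$ products of this form, so it is bounded by $W_{\Delta_{\mathcal E}}^2(A+1)^2B/4$. The same bound holds for the equality multipliers, which are obtained by polynomial multiplication. Two signs per $b$, summed over $|\mathcal E|$ identities and added to the original entries bounded by $A$, give the stated bound $A+2|\mathcal E|W_{\Delta_{\mathcal E}}^2(A+1)^2B$, with slack.

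\textbf{Main obstacle.} The re-indexing step needs care. One must check that the Kronecker-type product is a genuine PSD Gram matrix in the basis $v_{\Delta_{\mathcal E}/2}$. Several index pairs collapse to the same monomial, and summing over them with a fixed assignment is a congruence $P^\top(\cdot)P$ by a 0/1 matrix, which keeps the result PSD. This construction is also what shows the substitution is explicit, rational, and runs in polynomial bit time.

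For (ii), it suffices to verify the added moment equalities on $\pi_r(S_{\Delta_{\mathcal E}})$. Take $y\in S_{\Delta_{\mathcal E}}$, $b\in\mathcal E$, and a monomial $x^\alpha$ with $|\alpha|+\deg b\le r$. Apply the argument above with $w=x^\alpha$: the polynomial $x^\alpha b$ and its negative both have proofs from~\eqref{eq:system} of degree at most $\Delta_{\mathcal E}$. Soundness (the remark after~\eqref{eq:sos}) then gives $L_y(x^\alpha b)\ge0$ and $L_y(x^\alpha b)\le0$, so $L_y(x^\alpha b)=0$. The PSD and localizing conditions of $S_r$ hold because $\pi_r$ maps $S_{\Delta_{\mathcal E}}$ into $S_r$. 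Hence $\pi_r(y)\in S_r^{\mathcal E}$.

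Finally, suppose the admissible multiples $x^\alpha b$ span $\I(S)\cap\R[x]_{\le r}$. Then the linear equalities imposed in $S_r^{\mathcal E}$ are exactly $L_y(q)=0$ for all $q$ in that space, which is the defining system of $K_r$. So $S_r^{\mathcal E}=K_r$.
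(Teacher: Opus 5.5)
Your proposal is correct and follows essentially the same route as the paper. It uses the same signed-square identity $w_b b=\bigl(\tfrac{w_b+1}{2}\bigr)^2 b+\bigl(\tfrac{w_b-1}{2}\bigr)^2(-b)$, the same degree count $D_b+2(r-\deg b)$, and the same soundness argument (both signs of $x^\alpha b$, then restriction) for (ii). Your Kronecker-product-then-$0/1$-collapse Gram construction is a factorization of the paper's congruence $T_uQT_u^\top$ by the multiplication matrix of $u=(w_b\pm1)/2$, so it gives the same $W_{\Delta_{\mathcal E}}^2$ term count, and your extra factor $1/4$ is simply slack.
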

\begin{proof}
Write the augmented certificate as
\[
 f=\sum_{i=0}^m\sigma_i g_i+\sum_j u_jh_j+
                  \sum_{b\in\mathcal E}a_b b,
 \qquad \deg a_b\le r-\deg b.
\]
For every multiplier $a=a_b$, use the polynomial identity
\begin{equation}\label{eq:signed-product}
 ab=\left(\frac{a+1}{2}\right)^2b
       +\left(\frac{a-1}{2}\right)^2(-b).
\end{equation}
Substituting the two assumed certificates and multiplying by the
displayed squares gives a certificate for $ab$ of degree at most
$D_b+2\deg a_b$. Summing proves the degree assertion in~\textup{(i)}.

Multiplication of a Gram polynomial by $u^2$ replaces its matrix $Q$
by $T_uQT_u^\top$, where $T_u$ is the coefficient matrix for
multiplication by $u$. Here every coefficient of
$u=(a_b\pm1)/2$ has magnitude at most $A+1$.
Each new Gram entry is therefore a sum of at most
$W_{\Delta_{\mathcal E}}^2$ terms of magnitude at most $(A+1)^2B$.
The same bound holds for coefficients of the multiplied equality
terms, by expanding $u^2u_j$. There are two substituted certificates
per $b$, in addition to the original terms. This proves the displayed
magnitude bound. With rational certificates, the indicated additions
and multiplications are polynomially many operations on rationals of
polynomial encoding length, which proves the bit-complexity statement.

For~\textup{(ii)}, take $v\in S_{\Delta_{\mathcal E}}$ and any
polynomial $a$ with $\deg a+\deg b\le r$. The same substitution gives
proofs of both $ab\ge0$ and $-ab\ge0$ within degree
$\Delta_{\mathcal E}$. Soundness yields $L_v(ab)=0$.
Restriction to degree $r$ preserves this identity and the original
moment constraints, proving the inclusion. If the admissible
multiples span the full truncated identity space, their moment
equalities are exactly those defining $K_r$.
\end{proof}

The transfer is an algebraic substitution, while the quantitative
coefficient bound makes it usable in the Turing model. Applied after
the spectral lemma and graded division, it proves the dual criterion.
Applied to the moment functionals alone, it gives the hierarchy
comparison below. Effective construction of the identity space supplies
the separate algorithm for optimizing the augmented body.

The dual criterion should be compared with the equality-based
Nullstellensatz condition in~\cite[Theorem~10]{RW} and the
polynomial-calculus criterion in~\cite[Theorem~2.1]{BMV}.
Berkholz's simulation gives an SoS derivation of $-b^2\ge0$ from a
degree-$D$ polynomial-calculus derivation of $b=0$, within degree $2D$
\cite[Lemma~3.1]{Berkholz18}. The hypothesis below instead asks directly
for certificates of both $b\ge0$ and $-b\ge0$, with controlled degree
and coefficient bounds; inequalities may participate in these
certificates.

\begin{theorem}[Dual criterion]
\label{thm:dual}
Let the rational Boolean system be as in Section~\ref{sec:prelim}.
Let $\B_{2d}$ be its reduced graded basis truncated at degree $2d$.
Assume its coefficients have magnitude at most $C\ge1$ and, for each
$b\in\B_{2d}$, both $b\ge0$ and $-b\ge0$ have Gram certificates with
entries and equality coefficients of magnitude at most $B\ge1$, of degree
$D_b\le c d$, for a fixed constant $c$. Put
\begin{equation}\label{eq:dual-degree}
 \Delta=2\left\lceil\frac12\max\left\{2d,
       \max_{b\in\B_{2d}}[D_b+2(2d-\deg b)]\right\}\right\rceil.
\end{equation}
The maximum over an empty basis is omitted. Every degree-$2d$ SoS proof of
a rational $f\ge0$ can be replaced by a degree-$\Delta$ proof whose Gram entries
and equality coefficients have magnitude at most
$2^{\operatorname{poly}(W,\mathsf{len}+\bits f,\log C,\log B)}$.
Given computable numerical bounds $B,C$ and an even search degree $O(d)$ bounding
the right side of~\eqref{eq:dual-degree}, one can search at that degree for a rational
Gram proof of $f+\varepsilon\ge0$ in polynomial time in these parameters
and $\bits\varepsilon$, under the promise that a degree-$2d$ proof of
$f\ge0$ exists. Neither the basis nor its two-sided certificates must be
supplied to the search algorithm.
\end{theorem}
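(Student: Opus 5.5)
We combine Lemma~\ref{lem:bounded-modulo} with graded division and Theorem~\ref{thm:transfer}. Empty $S$ is a separate case: then $1\in\I(S)$ and the reduced basis is $\{1\}$, so $-1\ge0$ has a bounded two-sided certificate of degree $D_1\le cd$. Write $f$ as $f_+-f_-$ with $f_\pm$ nonnegative combinations of squares of monomials times constants, and multiply the refutation to obtain a bounded proof of $f$ within degree about $D_1+2d\le\Delta$. Otherwise apply Lemma~\ref{lem:bounded-modulo}. It writes $f=\sum_i g_i v_{a_i}^\top Q_i' v_{a_i}+q$ with $q\in\I(S)\cap\R[x]_{\le2d}$, and all entries are bounded by $2^{\operatorname{poly}(W,\mathsf{len}_f)}$. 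These Gram blocks already form part of a degree-$2d$ certificate from the original system.

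Next, express $q$ through $\B_{2d}$. Since $q$ has degree at most $2d$ and lies in the ideal, graded division by the reduced basis (the division facts of Appendix~\ref{app:algebra}) yields $q=\sum_{b\in\B_{2d}}a_b b$ with $\deg a_b\le 2d-\deg b$. For the coefficient bound I would avoid tracking the division steps. Instead, view the identity $q=\sum a_b b$ as a linear system in the unknown coefficients of the $a_b$. Its matrix has entries bounded by $C$ and dimension polynomial in $W$ and $|\B_{2d}|\le W$. The system is consistent, so Cramer's rule (Hadamard bound) on a nonsingular subsystem gives a solution whose entries are at most $\norm q_1\cdot(W C)^{\operatorname{poly}(W)}$, which is $2^{\operatorname{poly}(W,\mathsf{len}_f,\log C)}$. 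Possibly real coefficients of $q$ are fine, since Cramer's rule is linear in the right-hand side.

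This gives an augmented degree-$2d$ certificate from the original system together with the equality axioms $b=0$, $b\in\B_{2d}$, with magnitude bound $A=2^{\operatorname{poly}(W,\mathsf{len}_f,\log C)}$. Apply Theorem~\ref{thm:transfer}(i) with $r=2d$ and $\mathcal E=\B_{2d}$; its $\Delta_{\mathcal E}$ is exactly $\Delta$ in~\eqref{eq:dual-degree}. The output bound $A+2|\B_{2d}|W_\Delta^2(A+1)^2B$ has the required form, because $\Delta\le cd+4d=O(d)$, so $\log W_\Delta=O(d\log n)$, which is polynomial in $W$.

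For the search statement, use the bounded-certificate reduction of Appendix~\ref{app:search}. Take the computed majorant $R=2^{\operatorname{poly}(\cdot)}$ and the given even search degree. The set of Gram certificates of degree at most the search degree with entries at most $R$ is a bounded spectrahedral feasibility region, nonempty by the above. Solve a slightly relaxed version to accuracy related to $\varepsilon$ by the ellipsoid method, round to rationals, and absorb the residual identity error. That residual is a polynomial of small $\ell_1$ norm; it is reduced multilinearly via Boolean equations and dominated using $\varepsilon$ times a constant square plus small terms $\pm\eta x_T$. For the latter, the certificate $\eta(1\pm x_T)\ge0$ is degree-controlled via Boolean squares. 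This step needs no basis or certificates, only $R$.

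I expect the main obstacle to be this perturbation-and-repair step. The rounding must keep the Gram matrices exactly PSD; shrinking toward a strictly positive slack such as $\varepsilon I$ added to $Q_0$ handles this. The residual correction must stay within the search degree and have polynomial bit size. The coefficient control of the division is the secondary delicate point, and the Cramer argument above should settle it.
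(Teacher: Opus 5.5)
Your overall structure matches the paper's proof: Lemma~\ref{lem:bounded-modulo}, then graded division of the residual, then Theorem~\ref{thm:transfer}(i) with $\mathcal E=\B_{2d}$, then the bounded-certificate search of Lemma~\ref{lem:certificate-search}, with $S=\varnothing$ treated separately. The one place you deliberately depart from the paper is the coefficient bound for the division, and that step is not justified as written.

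Cramer's rule writes each unknown as a ratio of two minors. Hadamard's inequality bounds the numerator, but you also need a lower bound on the nonsingular minor in the denominator. The hypothesis only bounds the basis coefficients in magnitude by $C$. They are rationals, not integers, so $|\det|\ge1$ is not available. An arbitrary maximal nonsingular subsystem can have a tiny determinant and return a huge solution even when small solutions exist. For example, with columns $e_1,e_2,e_1+\epsilon e_2$ and right-hand side $e_2$, choosing $\{e_1,e_1+\epsilon e_2\}$ gives coefficients $\pm1/\epsilon$. So the bound $\norm q_1(WC)^{\operatorname{poly}(W)}$ does not follow from what you wrote.

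The repair is to choose the subsystem using monicity:
\begin{itemize}
\item For each nonstandard monomial $m$ of degree at most $2d$, take one admissible multiple $x^\alpha b$ with leading monomial $m$. It exists because the order is graded.
\item These multiples have distinct leading monomials, and their number equals $\dim(\I(S)\cap\R[x]_{\le2d})$, so they form a basis of that space. The subsystem is therefore maximal.
\item With rows indexed by the same monomials $m$, the minor is unit triangular in the monomial order, so its determinant is $\pm1$.
\end{itemize}
Cramer plus Hadamard then gives $\norm q_1(\sqrt W\,C)^W$. This triangularity is exactly what the paper's Lemma~\ref{lem:division} uses directly. Each monic cancellation multiplies the $\ell_1$ norm by at most $1+WC$, which yields $W\norm q_1(1+WC)^W$ with no linear-algebra detour. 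Alternatively, the denominator bounds of Lemma~\ref{lem:basisbits} lower-bound every nonzero minor by $2^{-\operatorname{poly}(W)}$. That route, however, uses Boolean structure beyond the stated parameter $C$.

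Two smaller points.
\begin{itemize}
\item In the empty case, a polynomial is in general not a difference of nonnegative combinations of squares of monomials; $x_1x_2$ is already a counterexample. Use the signed-product identity~\eqref{eq:signed-product} with $a=f$ and $b=1$, i.e.\ $f=\bigl(\tfrac{f+1}{2}\bigr)^2-\bigl(\tfrac{f-1}{2}\bigr)^2$, as the paper does. Its degree is $D_1+4d\le\Delta$.
\item For the running time of the search, it is not enough that $\log W_t$ is polynomial in $W$. The SDP at search degree $t=O(d)$ has size about $\binom{n+t}{t}$, so you need $\binom{n+t}{t}\le W^{O(1)}$. This follows by splitting each monomial into a bounded number of factors of degree at most $2d$.
\end{itemize}
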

\begin{proof}
For nonempty $S$, apply Lemma~\ref{lem:bounded-modulo} and divide its
residual by the monic graded basis. Lemma~\ref{lem:division}, valid also
for real coefficients, gives
\[
 q=\sum_{b\in\B_{2d}} a_b b,\qquad
 \deg a_b\le2d-\deg b,\qquad
 \norm{a_b}_1\le W\norm q_1(1+WC)^W.
\]
At most $W$ cancellation steps and hence at most $W$ basis elements are
used. Regard the resulting identity as a certificate with these basis
elements added as equality axioms. Theorem~\ref{thm:transfer}\textup{(i)}
eliminates them with degree~\eqref{eq:dual-degree} and a coefficient
bound polynomial in the representation size and the magnitudes already
bounded above. Write $W_\Delta=\binom{n+\Delta}{\Delta}$.
Since $\Delta=O(d)$,
$W_\Delta\le W^{O(1)}$: split any monomial of degree at most $\Delta$ into a fixed
number of factors of degree at most $2d$, choosing a deterministic split.
This proves the asserted coefficient bound, with computable majorants.

If $S=\varnothing$, its reduced basis is $\{1\}$. Skip the averaging
lemma and use $q=f=a_1\cdot1$ directly in~\eqref{eq:signed-product}.
The same degree and coefficient conclusions follow from the two assumed
proofs of $1$ and $-1$.

Finally use bounded-certificate search: a computable magnitude bound
on a real certificate suffices to find a rational certificate of
$f+\varepsilon\ge0$ at the same degree, with polynomial cost in the
monomial count, input encodings, bound encoding and accuracy bits.
Lemma~\ref{lem:certificate-search} proves this by finding a rational
point in a small relaxation of the coefficient SDP and repairing its
residual with Boolean equations. Its input consists only of the original
coefficient equations and the computed bound, so the search does not
require the basis or its two-sided proofs.
\end{proof}

The criteria include refutations by taking $f=-1$. For proof search,
choose $\varepsilon=1/2$ and multiply the returned certificate of
$-1/2\ge0$ by two. This statement is subject to the hypotheses of the
dual criterion; for Min-closed systems, Lemma~\ref{lem:propagation}
constructs a degree-four refutation whenever the system is infeasible.

\section{The primal criterion: identities determine the right geometry}
\label{sec:primal}
The primal argument starts from the vector space
\[
\I(S)\cap\Q[x]_{\le 2d}
\]
of all polynomial identities of degree at most $2d$ that vanish on the
feasible set. The geometric criterion assumes that this space is available
through a rational spanning list that can be constructed efficiently.
We then describe two ways to produce such a list: a direct construction,
and graded division by a suitable truncated Gr\"obner basis.
Both constructions impose the same affine equalities on the moments and
therefore define the same augmented body $K_{2d}$.

This is related to quotient-ring moment formulations: combinatorial
moment matrices over finite varieties are described
in~\cite[Section~2.1]{Laurent2007}, and degree-compatible quotient bases
underlie the theta-body construction in~\cite[Section~2]{GPT10}.
Here $K_{2d}$ also retains the localizing constraints for the original
inequalities.

The underlying moment-solvability principle belongs to the algebraic
framework of Gribling--Polak--Slot~\cite{Gribling23}. The theorem below
gives a constructive Boolean specialization, with an explicit affine
reduction, quantitative radius and separation bounds, and an exactly
feasible rational output. When the same identities admit two-sided SoS
derivations, the transfer theorem additionally relates this augmented
moment body to the original hierarchy.

\begin{theorem}[Primal criterion]
\label{thm:main}
Let $S\ne\varnothing$ be the Boolean feasible set of~\eqref{eq:system}.
Consider the following two properties.
\begin{enumerate}[label=\textup{(\roman*)}]
\item\label{hyp:construction}
One can construct, from the system and $d$, a finite list
$Q_{2d}\subseteq\Q[x]_{\le2d}$ satisfying
\begin{equation}\label{eq:spanning}
 \spanQ Q_{2d}=\I(S)\cap\Q[x]_{\le2d},
\end{equation}
in time polynomial in $W,\mathsf{len}$, with total rational encoding length
$\mathsf{len}_Q=\operatorname{poly}(W,\mathsf{len})$.
\item\label{hyp:derivation}
For every $q\in Q_{2d}$, both $q\ge0$ and $-q\ge0$ have SoS proofs
from the original system of degree at most $\kappa$, where
$\kappa\ge2d$ is even and $\kappa=O(d)$.
\end{enumerate}
Under hypothesis~\ref{hyp:construction}, define
\begin{equation}\label{eq:K}
 K_{2d}=\{y\in S_{2d}:L_y(q)=0\text{ for every }q\in Q_{2d}\}.
\end{equation}
For every $f\in\Q[x]_{\le2d}$ and every rational $\varepsilon>0$, one
can compute a rational $y^*\in K_{2d}$ satisfying
\begin{equation}\label{eq:optimization}
 L_{y^*}(f)\ge\max_{y\in K_{2d}}L_y(f)-\varepsilon.
\end{equation}
The running time and output length are polynomial in
$W,\mathsf{len},\bits f,\bits\varepsilon$. For fixed $d$ and $\varepsilon=2^{-s}$,
this is polynomial in the input length and $s+1$. All normalization,
equality and PSD constraints are satisfied exactly.

If hypothesis~\ref{hyp:derivation} also holds, then
\begin{equation}\label{eq:sandwich}
 \pi_{2d}(S_\kappa)\subseteq K_{2d}\subseteq S_{2d},
 \qquad
 L_{y^*}(f)\ge\max_{u\in S_\kappa}L_u(f)-\varepsilon.
\end{equation}
Consequently $\widetilde{\mathbb E}[p]=L_{y^*}(p)$, for $\deg p\le2d$,
is the required rational pseudoexpectation.
\end{theorem}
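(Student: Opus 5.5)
The plan is to reduce optimization over $K_{2d}$ to the rational ellipsoid method of Gr\"otschel--Lov\'asz--Schrijver, applied to a full-dimensional copy of $K_{2d}$ inside its affine hull. The sandwich~\eqref{eq:sandwich} then follows from Theorem~\ref{thm:transfer}\textup{(ii)}.

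\emph{Step 1: affine reduction.} Let $E\in\Q^{S\times\J_{2d}}$ be the Boolean evaluation matrix, $E_{s,T}=s_T$. By multilinear reduction, a coefficient vector $p$ lies in $\I(S)\cap\Q[x]_{\le2d}$ exactly when $Ep=0$. Hypothesis~\ref{hyp:construction} then gives
\[
\{y:L_y(q)=0\ \forall q\in Q_{2d}\}=\operatorname{row}(E).
\]
The right-hand side is the span of the evaluation vectors $\chi_s=(s_T)_T$, $s\in S$.

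Using rational Gaussian elimination on the coefficient matrix of $Q_{2d}$, compute an independent set of free coordinates $F\subseteq\J_{2d}$ with $\varnothing\in F$. Then write every $y$ in the affine space $\{y_\varnothing=1,\ L_y(Q_{2d})=0\}$ as $y=c+Gz$, with $z\in\Q^{F\setminus\{\varnothing\}}$ and $c,G$ of polynomial encoding length. I claim this affine space is exactly $\aff K_{2d}$.

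The inclusion $\aff K_{2d}\subseteq$ (affine space) is by definition. For the reverse, each $\chi_s$ with $s\in S$ lies in $K_{2d}$, because moment and localizing matrices at Boolean feasible points are PSD. The affine hull of $\{\chi_s\}$ already equals $\{y_\varnothing=1\}\cap\operatorname{row}(E)$.

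\emph{Step 2: radii.} For the outer ball, PSD-ness of $M_d(y)$ with $y_\varnothing=1$ and the Boolean identities should give $y_T=y_{A\cup B}\in[0,1]$. Each $y_T$ with $|T|\le2d$ is a diagonal-dominated entry $y_{A\cup B}$ with $y_A,y_B\le1$, using $y_A=y_{A\cup A}$ and $|y_{A\cup B}|\le\sqrt{y_Ay_B}$. So $y\in[-1,1]^{\J_{2d}}$, and $z$ lies in a ball of radius $2^{\operatorname{poly}}$ after bounding the left inverse of $G$ on its range.

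The inner ball is the main obstacle. I would take the barycenter $\bar y=|S|^{-1}\sum_s\chi_s$ and pick affinely independent $s_0,\dots,s_k\in S$ spanning $\aff K_{2d}$, where $k=\dim$. Their simplex lies in $K_{2d}$, and the barycenter of this simplex has an inradius bounded below by a volume/diameter argument. The volume is $\sqrt{\det}$ of a Gram matrix of integer vectors, hence at least $1/k!$ times a nonzero integer root. The diameter is at most $\sqrt N$. This yields radius $\ge 2^{-\operatorname{poly}(W)}$ in the $z$-coordinates, after accounting for the distortion of $G$.

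The points $s_i$ need not be computed. GLS only needs the radius bound together with a known center, or alternatively only the radius bound in the shallow-cut/centered version. If a center is needed, I would avoid it by invoking the variant of the ellipsoid theorem requiring only an a priori inner-radius bound and outer ball (\emph{well-bounded} convex body with guaranteed full dimension).

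\emph{Step 3: separation and output.} Given rational $z$, form $y=c+Gz$ and test each PSD constraint exactly by rational $LDL^\top$ decomposition. On failure, this produces a rational vector $w$ with $w^\top M(y)w<0$, which yields the rational cut $z\mapsto -w^\top M(c+Gz)w\le 0$. This is a strong rational separation oracle in the sense of Definition~\ref{def:strong}.

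GLS then returns a rational $z^*$ in the body that is $\varepsilon$-optimal for the linear objective $z\mapsto L_{c+Gz}(f)$. Setting $y^*=c+Gz^*$ gives exact feasibility, since membership was certified by the exact oracle. The polynomial-time bounds follow, with $W\le(n+1)^{2d}$ for fixed $d$.

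Finally, under hypothesis~\ref{hyp:derivation}, apply Theorem~\ref{thm:transfer}\textup{(ii)} with $\mathcal E=Q_{2d}$ and $r=2d$. Then $\Delta_{\mathcal E}\le\kappa+4d$, and in fact only $ab$ with $\deg a=0$ matter, since $Q_{2d}$ already spans the truncated space; so degree $\kappa$ suffices. This gives $\pi_{2d}(S_\kappa)\subseteq K_{2d}$, and therefore $L_{y^*}(f)\ge\max_{S_\kappa}L_u(f)-\varepsilon$.
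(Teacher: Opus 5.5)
Your proposal is correct and takes essentially the paper's route: reduce to free coordinates that are themselves original moments, get the inner ball from a simplex of Boolean evaluations and the outer bound from $|y_T|\le1$, separate by exact rational PSD tests, run a centre-free rational ellipsoid method whose output is an oracle-accepted point, and use SoS soundness with constant multipliers (so degree $\kappa$ suffices) for $\pi_{2d}(S_\kappa)\subseteq K_{2d}$. The differences are minor. You bound the inradius in $y$-space by a volume/facet estimate and pull it back through $G$, whereas the paper notes that the simplex vertices are already $0/1$ vectors in $z$-space and reads off $\rho_*=1/(2p^2(p+1)p!)$ directly. Your one-line exact-feasibility claim is what Lemma~\ref{ell:lem:main} proves, via accepted centres and a volume-threshold bisection.
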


The list in~\eqref{eq:spanning} spans all ordinary vanishing polynomials
through degree $2d$. Multilinearization expresses their equalities in
moment coordinates. Construction supplies the optimization algorithm;
SoS soundness supplies the higher-level comparison, as shown in
Figure~\ref{fig:primal-roles}.

\begin{figure}[tbp]
\centering
\begin{tikzpicture}
\node[paperbox,text width=6.0cm,minimum height=1.15cm] (construct) at (-3.65,0)
  {\textbf{Constructibility}\\
   Compute $Q_{2d}$ spanning $\I(S)_{\le2d}$};
\node[paperbox,text width=6.0cm,minimum height=1.15cm] (derive) at (3.65,0)
  {\textbf{Derivability}\\
   SoS proofs of $\pm q\ge0$ in degree $\kappa$};
\node[paperbox,text width=6.0cm,minimum height=1.35cm,below=8mm of construct]
  (geometry)
  {$y=y^0+Vz$, inner / outer balls\\
   Strong separation and rational ellipsoids};
\node[paperbox,text width=6.0cm,minimum height=1.35cm,below=8mm of derive]
  (soundness)
  {For $u\in S_\kappa$: $L_u(q)=0$\\
   for every $q\in Q_{2d}$};
\node[paperbox,text width=6.0cm,minimum height=1.35cm,below=8mm of geometry]
  (output)
  {Compute $y^*\in K_{2d}\cap\Q^{\J_{2d}}$\\
   $L_{y^*}(f)\ge\max_{K_{2d}}L_y(f)-\varepsilon$};
\node[paperbox,text width=6.0cm,minimum height=1.35cm,below=8mm of soundness]
  (inclusion)
  {Restriction preserves the identities\\
   $\pi_{2d}(S_\kappa)\subseteq K_{2d}\subseteq S_{2d}$};
\node[paperbox,text width=12.6cm,minimum height=1.45cm,anchor=north]
  (combined) at ($(output.south)!0.5!(inclusion.south)+(0,-.85)$)
  {$L_{y^*}(f)\ge\max_{u\in S_\kappa}L_u(f)-\varepsilon$\\[3pt]
   Min-closed: $\kappa=8d+4$,\quad
   $\pi_{2d}(S_{8d+4})\subseteq K_{2d}\subseteq S_{2d}$};
\draw[paperarrow] (construct) -- (geometry);
\draw[paperarrow] (geometry) -- (output);
\draw[paperarrow] (derive) -- (soundness);
\draw[paperarrow] (soundness) -- (inclusion);
\draw[paperarrow] (output.south) -- (combined.north -| output.south);
\draw[paperarrow] (inclusion.south) -- (combined.north -| inclusion.south);
\end{tikzpicture}
\caption{The two hypotheses of the primal criterion, for the same complete
list $Q_{2d}$ and nonempty $S$. Constructibility gives exact-feasible
optimization over $K_{2d}$; derivability gives the displayed containment.
Their combination gives the objective comparison. For Min-closed systems
the comparison degree is $8d+4$, while the algorithm stores only
degree-$2d$ moments. Here $\I(S)_{\le2d}$ abbreviates its rational
degree-$2d$ part.}
\label{fig:primal-roles}
\end{figure}

\subsection{Proof of the primal criterion}
\label{sec:proof}

\begin{proof}[Proof of Theorem~\ref{thm:main}]
The proof follows a geometric sequence: the complete identities give
independent moment coordinates; Boolean evaluations supply a simplex
with a controlled inner radius; moment bounds and exact PSD tests give
an outer radius and strong separation. The ellipsoid method then
optimizes in these coordinates without needing the simplex or its
centre. The final step uses two-sided derivability only to compare
hierarchy levels. We give each construction and its bit bound below.

\paragraph{1. Pass to moment coordinates and identify the affine hull.}
Construct $Q_{2d}$ and multilinearize its elements. Put
\[
 \mathcal V_{2d}(S)=
 \{h\in\I(S)\cap\Q[x]_{\le2d}:h\text{ is multilinear}\}.
\]
Equation~\eqref{eq:spanning} gives
\begin{equation}\label{eq:mlspan}
 \spanQ\{\ml(q):q\in Q_{2d}\}=\mathcal V_{2d}(S).
\end{equation}
Indeed, each reduced polynomial is an identity. Conversely, if
$h\in\mathcal V_{2d}(S)$, express $h$ as a rational linear combination
of $Q_{2d}$ and apply $\ml$; the left side remains $h$.
Grouping coefficients by monomial support constructs these reductions
in polynomial time and bit length in $W,\mathsf{len}_Q$.

Let $A$ have their coefficient vectors as rows, indexed by $\J_{2d}$,
and let $\Haff=\{y:Ay=0,\ y_\varnothing=1\}$. For the proof only,
let $E$ have rows $e(s)^\top=(x_T(s))_{T\in\J_{2d}}^\top$, $s\in S$.
Its rational kernel is the coefficient space of $\mathcal V_{2d}(S)$.
Since $E$ is a rational matrix, its real kernel has a rational basis.
Thus~\eqref{eq:mlspan} implies
\[
 \ker_{\R}E=\operatorname{rowspan}_{\R}A,
 \qquad \ker A=\spanR\{e(s):s\in S\}.
\]
Every $e(s)$ has empty-monomial coordinate one. Imposing
$y_\varnothing=1$ therefore makes the coefficients in a linear
combination of these vectors sum to one. Consequently
\begin{equation}\label{eq:hull}
 \Haff=\aff\{e(s):s\in S\},\qquad
 \conv\{e(s):s\in S\}\subseteq K_{2d}\subseteq\Haff,
 \qquad \aff(K_{2d})=\Haff.
\end{equation}
Evaluation at a feasible point satisfies all PSD constraints and
identities, proving the middle inclusion. Every original equality
multiple of degree at most $2d$ vanishes on $S$, so its multilinear
reduction is also in the row span of $A$.

The matrix $E$ and the feasible set are not enumerated. Different complete lists
$Q_{2d}$ define the same $K_{2d}$. Any equality multiple of ordinary
degree at most $2d$ is already imposed, because it is itself a truncated
vanishing identity.

\paragraph{2. Parametrize using free original moments.}
Let $q_0=\rank A$ and $p=N-q_0-1\ge0$.
Rational elimination constructs rational $y^0,V$ and distinct
nonempty sets $T_1,\ldots,T_p\in\J_{2d}$ such that
\begin{equation}\label{eq:Phi}
 \Phi(z)=y^0+Vz:\R^p\longrightarrow\Haff
 \text{ is a bijection},\qquad \Phi(z)_{T_i}=z_i.
\end{equation}
Its rank premise holds because no nonzero constant polynomial vanishes
on the nonempty set $S$: deleting the empty-monomial column of $A$
preserves row rank. Choose $q_0$ nonempty pivot columns $P$ and
$q_0$ rows $I$ with $A_{I,P}$ nonsingular. After setting
$y_\varnothing=1$ and $y_{T_i}=z_i$, the formula is
\[
 y_P=-A_{I,P}^{-1}A_{I,\varnothing}
                  -\sum_{i=1}^p A_{I,P}^{-1}A_{I,T_i}z_i.
\]
When $q_0=0$, there are no pivot equations.

The dense matrix encoding $\mathsf{len}_A$ is polynomial in $W,\mathsf{len}_Q$. Clearing
denominators gives integer entries of $O(\mathsf{len}_A+1)$ bits, and all minors
have $O(N(\mathsf{len}_A+\log(N+1)))$ bits. The exact elimination and cofactor
argument in Appendix~\ref{app:linear} give polynomial construction time and
polynomial bit lengths for $y^0,V$. In particular,
\begin{equation}\label{eq:actualvertices}
 v(s)=(x_{T_1}(s),\ldots,x_{T_p}(s))\in\{0,1\}^p,
 \qquad \Phi(v(s))=e(s).
\end{equation}
If $p=0$, $\Haff=\{y^0\}$ contains every actual evaluation and is
therefore feasible. Return $y^0$, which is exactly optimal. Assume
$p\ge1$ in the next four steps.

\paragraph{3. Form the reduced SDP.}
Let $v_0=y^0$ and let $v_i$ be column $i$ of $V$, $1\le i\le p$.
Substitution in the PSD matrices gives
\begin{align*}
 (F_{0,i})_{A,B}&=(v_i)_{A\cup B},&&A,B\in\J_d,\\
 (F_{j,i})_{A,B}&=\sum_U g_{j,U}(v_i)_{U\cup A\cup B},
 &&A,B\in\J_{\lfloor(2d-\deg g_j)/2\rfloor},
\end{align*}
where $\ml(g_j)=\sum_U g_{j,U}x_U$ and only $\deg g_j\le2d$ is
retained. The block size uses the original ordinary degree of $g_j$.
Define
\begin{equation}\label{eq:F}
 F_j(z)=F_{j,0}+\sum_{i=1}^p z_iF_{j,i},\qquad
 \widehat K_{2d}=\{z:F_j(z)\succeq0\text{ for every retained }j\}.
\end{equation}
All equalities hold identically under $\Phi$, so
$\Phi:\widehat K_{2d}\to K_{2d}$ is an affine bijection.
There are at most $(m+1)(p+1)N^2$ scalar table entries. Each is a sum
of at most $W$ products of an input coefficient and an entry of $y^0,V$.
If $H_V$ bounds the latter bit lengths, an entry has at most
$O(W(\mathsf{len}+H_V+\log(W+1)))$ bits. Thus the rational coefficient table,
including dimensions, has length $\mathsf{len}_K=\operatorname{poly}(W,\mathsf{len},\mathsf{len}_Q)$
and is computable in that time.

\paragraph{4. Supply inner and outer radii.}
Equations~\eqref{eq:hull} and~\eqref{eq:actualvertices} imply
$\aff\{v(s):s\in S\}=\R^p$. Choose existentially $p+1$ affinely
independent vectors $v^0,\ldots,v^p$ in this set. Their simplex is
contained in $\widehat K_{2d}$. Since they are $0/1$ vectors,
$U=[v^1-v^0\ \cdots\ v^p-v^0]$ is an invertible integer matrix with
entries in $\{-1,0,1\}$ and $|\det U|\ge1$.
Applying Lemma~\ref{lem:ball} to this integral simplex gives
\begin{equation}\label{eq:ball}
 B(z^\circ,\rho_*)\subseteq\widehat K_{2d},\qquad
 z^\circ=\frac1{p+1}\sum_{j=0}^p v^j,\qquad
 \rho_* =\frac1{2p^2(p+1)p!}.
\end{equation}
The estimate $|(U^{-1})_{ij}|\le(p-1)!$ bounds the
change in barycentric coordinates under a perturbation of norm
$\rho_*$. Hence $\log(1/\rho_*)=O(p\log(p+1))$.
Only the numerical radius, computable from $p$, is required;
neither the vertices nor the centre must be found.

Lemma~\ref{lem:outer} gives $|y_T|\le1$ for $y\in S_{2d}$,
using principal $2\times2$ moment matrices and a decomposition of $T$
into two sets of size at most $d$. Because free coordinates are original
moments, $\widehat K_{2d}\subseteq[-1,1]^p\subseteq B(0,p)$.
This set is nonempty, closed and convex by~\eqref{eq:F}, and bounded,
so it is compact. Persistent kernels of the PSD matrices cause no
problem: the ball is in the space of free moments.

\paragraph{5. Supply strong rational separation.}
Given $\bar z\in\Q^p$, evaluate $F_j(\bar z)$ exactly.
The standard exact PSD test, Lemma~\ref{lem:psd}, tests rational PSD matrices in polynomial
bit complexity and, on rejection, produces rational $w$ such that
$w^\top F_j(\bar z)w<0$. If every block is PSD, accept the query.
Otherwise set
\[
 c_0=w^\top F_{j,0}w,\qquad c_i=w^\top F_{j,i}w\quad(i\in[p]).
\]
For every $z\in\widehat K_{2d}$,
\[
 (-c)^\top z\le c_0<(-c)^\top\bar z.
\]
The normal is nonzero, since a negative constant quadratic form would
contradict nonemptiness. Each separator entry is a sum of at most $N^2$ products of two witness
entries and one table entry. Matrix evaluation, the PSD test and these sums
therefore have polynomial time and output length in $\mathsf{len}_K+\bits{\bar z}$.
The oracle reads only the rational matrix table; all sign tests are exact.

\paragraph{6. Optimize and reconstruct.}
Write $\ml(f)=\sum_T f_Tx_T$ and $\mathbf f=(f_T)_T$. Then
\[
 L_{\Phi(z)}(f)=\mathbf f^\top y^0+\widehat c^\top z,
 \qquad\widehat c=V^\top\mathbf f.
\]
Compute the powers of two
\[
 R=2^{\lceil\log p\rceil},\qquad
 \rho=2^{\lfloor\log\rho_*\rfloor},\qquad
 \tau=2^{\lfloor\log\min\{1,\varepsilon\}\rfloor}
\]
by exact integer comparisons. They satisfy $R\ge1$, $0<\rho,\tau\le1$,
$\tau\le\varepsilon$, and
$B(z^\circ,\rho)\subseteq\widehat K_{2d}\subseteq B(0,R)$.
The exact-feasible approximate optimization lemma
(Lemma~\ref{ell:lem:main}, proved in Appendix~\ref{app:ellipsoid}) returns
$z^*\in\widehat K_{2d}\cap\Q^p$ with
\[
 \widehat c^\top z^*\ge
       \max_{z\in\widehat K_{2d}}\widehat c^\top z-\tau.
\]
This lemma requires known radius bounds but not a supplied centre.
Set $y^*=y^0+Vz^*$. The bijection gives exact feasibility in $K_{2d}$,
and restoring the objective constant proves~\eqref{eq:optimization}.

For bit complexity, $p\le N-1\le W-1$;
$\mathsf{len}_A,H_V,\mathsf{len}_K$ are polynomial in $W,\mathsf{len},\mathsf{len}_Q$;
$\log R=O(\log(p+1))$ and $\log(1/\rho)=O(p\log(p+1))$;
the objective length is polynomial in $W,\mathsf{len}_Q,\bits f$;
and $\log(1/\tau)\le\bits\varepsilon+1$.
Lemma~\ref{ell:lem:main} and rational reconstruction therefore have polynomial
bit cost and output length in $W,\mathsf{len},\mathsf{len}_Q,\bits f,\bits\varepsilon$.
Hypothesis~\ref{hyp:construction} absorbs construction of $Q_{2d}$ and
$\mathsf{len}_Q$ into $\operatorname{poly}(W,\mathsf{len})$. No step increases the moment degree.

\paragraph{7. Use the SoS derivations to compare hierarchy levels.}
Take $u\in S_\kappa$ and $q\in Q_{2d}$. Soundness and
hypothesis~\ref{hyp:derivation} give
$L_u(q)\ge0$ and $L_u(-q)\ge0$, hence $L_u(q)=0$.
The restriction $y=\pi_{2d}(u)$ belongs to $S_{2d}$ and satisfies
$L_y(q)=L_u(q)=0$, because $\deg q\le2d$. Therefore $y\in K_{2d}$.
This proves the first inclusion in~\eqref{eq:sandwich}; the second
follows from the definition of $K_{2d}$.

The sets $S_{2d}$ and $S_\kappa$ are nonempty, closed and bounded by
the Boolean moment bound, so their maxima exist. Projection preserves
the objective $f$, and thus
\begin{equation}\label{eq:chain}
 \max_{s\in S}f(s)\le\max_{u\in S_\kappa}L_u(f)
 \le\max_{y\in K_{2d}}L_y(f)\le\max_{y\in S_{2d}}L_y(f).
\end{equation}
Combining~\eqref{eq:chain} with~\eqref{eq:optimization} proves the
remaining assertion.
\end{proof}

\paragraph{Relation to the GPS criterion.}
For fixed $d$, Proposition~6 of~\cite{Gribling23} applies to the
degree-$2d$ formulation augmented by $Q_{2d}$. Here only the original
axioms of degree at most $2d$ are retained, as in the definition of
$S_{2d}$; the full solution set $S$ supplies the following witness.
To check the hypotheses, take
$L(p)=|S|^{-1}\sum_{s\in S}p(s)$. If $g\in\{1,g_1,\ldots,g_m\}$ and
$\deg(gp^2)\le2d$, then nonnegativity on $S$ gives
\[
 L(gp^2)=0\quad\Longrightarrow\quad gp^2\in\I(S)_{\le2d}
       =\spanR Q_{2d}.
\]
The resulting constant-multiplier representation satisfies their
degree restriction for the equality axioms. Each moment of $L$ is an
integer divided by $|S|\le2^n$, so Theorem~7 of~\cite{Gribling23}
supplies the required positive-eigenvalue bounds, also for the
localizing matrices. Their explicit ball constraint can be added
without changing the relaxation, since
\[
 n-\sum_i x_i^2
   =\sum_i(1-x_i)^2-2\sum_i(x_i^2-x_i).
\]
Multiplication by any square of degree at most $2d-2$ preserves the
degree-$2d$ SoS bound, so all its localizing conditions are already
implied. The same spectral argument applies to this additional block.
Thus the known solvability criterion covers the augmented body.
The proof above constructs its affine coordinates and radius bounds
directly from $Q_{2d}$, without computing $L$ or enumerating $S$.
Its ellipsoid reduction, detailed in Appendix~\ref{app:ellipsoid}, is
included as a self-contained alternative proof of this solvability
conclusion, rather than as an additional hypothesis needed to apply GPS.
The comparison with $S_\kappa$ additionally uses the two-sided
derivations, through SoS soundness.

\begin{remark}[The body optimized]\label{rem:scope}
The algorithm optimizes $K_{2d}$ at moment degree $2d$. Two-sided
derivations give the comparison $\pi_{2d}(S_\kappa)\subseteq K_{2d}$.
Equality with $S_{2d}$, or an extension of an output to $S_\kappa$,
would require additional hypotheses. The effective identity description
is supplied by construction or by the graded-basis route below.
\end{remark}

\subsection{Two effective routes to the same affine hull}
\label{sec:primal-routes}
Throughout this subsection the target moment degree is $r=2d$ and
$S\ne\varnothing$. For the fixed polynomial system, define
\[
 K_r=S_r\cap\aff\{e_r(s):s\in S\},
 \qquad e_r(s)=(x_T(s))_{T\in\J_r}.
\]
This definition is independent of the chosen identity representation.

\paragraph{Route I: construct the truncated identity space directly.}
Construct a rational list $Q_r$ with
$\spanQ Q_r=\I(S)\cap\Q[x]_{\le r}$. Multilinearize its rows, impose
normalization, and use rational elimination as in the proof of
Theorem~\ref{thm:main}. The resulting affine space is exactly
$\Haff_r=\aff\{e_r(s):s\in S\}$, and the augmented SDP is $K_r$.
The construction time and encoding length of $Q_r$ determine the
algebraic preprocessing cost.

\paragraph{Route II: use a supplied or computed graded basis.}
Here the algebraic input is a correct degree-$D$ truncation of the reduced
graded basis of the \emph{vanishing ideal} $\I(S)$, with $D\ge r$.
Graded division through degree $r$ extracts the required identity space.
The use of truncated Gr\"obner bases to make the algebraic input of the
Raghavendra--Weitz criterion effective, and to formulate theta-body
relaxations, appears in~\cite{Mastrolilli}. Gribling--Polak--Slot also
identify Gr\"obner bases as a sufficient source of their algebraic
condition~\cite[discussion following Proposition~6]{Gribling23}.
Here we spell out the degree-preserving extraction from the supplied
vanishing-ideal basis and its rational encoding cost. The following
theorem covers both $D=r$ and a larger supplied truncation.

The normal-form coordinates used below are the standard quotient-basis
coordinates of combinatorial moment matrices
\cite[Section~2.1]{Laurent2007}; see also the degree-compatible bases
in~\cite[Section~2]{GPT10}. We make the truncated identity extraction
and its bit complexity explicit.

\begin{theorem}[Supplied-Gr\"obner-basis primal optimization]
\label{thm:supplied-route}
Let $S\ne\varnothing$ and $r=2d\ge2$. Suppose the correct reduced graded lexicographic basis
truncation $\B_D$ of $\I(S)$ is supplied, with $D\ge r$ and total rational
encoding length $\mathsf{len}_B$. Set
\[
 \B_r=\{b\in\B_D:\deg b\le r\},\qquad
 Q_r^{\rm mult}=\{x^\alpha b:b\in\B_r,\ |\alpha|+\deg b\le r\}.
\]
Then the following statements hold.
\begin{enumerate}[label=\textup{(\roman*)}]
\item The list $Q_r^{\rm mult}$ spans $\I(S)\cap\Q[x]_{\le r}$.
Adding $\B_r$ as equality axioms, with all degree-allowed multipliers,
has moment formulation exactly $K_r$.
\item Graded normal forms supply an alternative complete identity list
and an explicit rational bijection $y=y^0+Vz$ onto $\Haff_r$. Each free
coordinate is an original moment indexed by a standard monomial.
\item For $f\in\Q[x]_{\le r}$ and rational $\varepsilon>0$, one
can compute a rational $y^*\in K_r$ with
$L_{y^*}(f)\ge\max_{y\in K_r}L_y(f)-\varepsilon$, in time and output
length polynomial in $W,\mathsf{len},\mathsf{len}_B,\bits f,\bits\varepsilon$.
All degree-$r$ feasibility conditions hold exactly.
\end{enumerate}
The supplied basis is promised to be correct. The moment degree remains
$r$ for every $D\ge r$.
\end{theorem}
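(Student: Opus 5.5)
The proof reduces everything to Theorem~\ref{thm:main} via hypothesis~\ref{hyp:construction}, so the work is to show that a truncated Gr\"obner basis yields a complete, polynomially encoded identity list and the affine coordinates. The steps follow items (i)--(iii).

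For (i), first observe that the truncation is harmless. Since the order is graded, the reduced basis elements of degree at most $r$ are the same whether we start from $\B_D$ or from the full reduced basis. So $\B_r$ is exactly the degree-$r$ truncation of the reduced graded basis of $\I(S)$. Every element of $Q_r^{\rm mult}$ lies in $\I(S)\cap\Q[x]_{\le r}$ because $\I(S)$ is an ideal. Conversely, take $q\in\I(S)\cap\Q[x]_{\le r}$ and divide it by the monic basis using Lemma~\ref{lem:division}. Because the order is graded, each cancellation step uses a basis element $b$ whose leading monomial divides a monomial of degree at most $r$. Hence $\deg b\le r$, and the multiplier $x^\alpha$ satisfies $|\alpha|+\deg b\le r$. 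The remainder lies in $\I(S)$ and has no monomial divisible by a leading monomial, so it is zero by the Gr\"obner property. Thus $q$ is a rational combination of $Q_r^{\rm mult}$. The moment formulation with $\B_r$ and all degree-allowed multipliers imposes exactly $L_y(x^\alpha b)=0$. By the spanning statement, these are the equalities defining $K_r$, as in the last clause of Theorem~\ref{thm:transfer}(ii) and equation~\eqref{eq:hull}.

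For (ii), let $\mathcal N_r$ be the standard monomials of degree at most $r$. For every monomial $x^\beta$ with $|\beta|\le r$, compute its graded normal form $\operatorname{nf}(x^\beta)$, a combination of standard monomials of degree at most $|\beta|$. The list $\{x^\beta-\operatorname{nf}(x^\beta)\}$ is again complete. Its elements are identities of degree at most $r$. By the division argument above, any $q$ in the truncated identity space equals $\sum_\beta q_\beta(x^\beta-\operatorname{nf}(x^\beta))$ plus a combination of standard monomials that vanishes on $S$, and that combination must be zero. After multilinearization these relations express each nonstandard moment as an affine function of the standard ones. Here $1$ is standard because $S\neq\varnothing$, and the squarefree standard monomials other than $1$ give the free coordinates $z$. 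I will also need to check that a nonsquarefree standard monomial cannot occur. This holds because $x_a^2-x_a\in\I(S)$ has leading monomial $x_a^2$, so multilinearization and normal forms are compatible. That yields $y=y^0+Vz$, and independence of the free coordinates follows from the independence of standard monomials modulo $\I(S)$.

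For (iii), feed this list (or $Q_r^{\rm mult}$) into Theorem~\ref{thm:main} under hypothesis~\ref{hyp:construction}, replacing $\operatorname{poly}(W,\mathsf{len})$ by $\operatorname{poly}(W,\mathsf{len},\mathsf{len}_B)$ throughout. The main obstacle is bit complexity of the extraction. Iterated division could in principle blow up coefficients, and Lemma~\ref{lem:division}'s bound $(1+WC)^W$ is only single-exponential in magnitude. I would argue instead via linear algebra. The normal form of $x^\beta$ is the unique solution of a rational linear system whose matrix has rows given by the admissible multiples $x^\alpha b$, of size at most $W$ and with entries from $\mathsf{len}_B$. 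Cramer's rule, as in Appendix~\ref{app:linear}, then bounds the encoding by $\operatorname{poly}(W,\mathsf{len}_B)$, and the linear system is solved by exact elimination. Since $Q_r^{\rm mult}$ has at most $W^2$ elements, each of polynomial size, it already satisfies the encoding requirement directly. The remaining steps, namely radii, separation and the ellipsoid method, are inherited verbatim from the proof of Theorem~\ref{thm:main}, and the moment degree stays $r$.
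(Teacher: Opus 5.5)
Your proposal is correct and follows the paper's proof essentially step for step: graded division by $\B_r$ gives the spanning claim in (i), the list $\{x^\beta-\operatorname{NF}(x^\beta)\}$ gives the standard-monomial coordinates in (ii), and (iii) reduces to the proof of Theorem~\ref{thm:main}. The one deviation, bounding the normal forms by Cramer's rule, is valid but unnecessary, because the paper simply invokes Lemma~\ref{lem:division}: its single-exponential magnitude bound and its common-denominator estimate already give polynomial bit length, so the concern that iterated division might blow up coefficients does not arise.
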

\begin{proof}
\emph{Identity extraction.} In a graded order, every basis element has
degree equal to its leading-monomial degree. A divisor usable in the
division of a polynomial of degree at most $r$ therefore belongs to
$\B_r$. Division never raises degree. Lemma~\ref{lem:division} now proves
the spanning assertion. There are at most $W|\B_r|$ admissible multiples;
they are constructed just by shifting exponents. Pairing with all
polynomial multipliers of permitted degree is equivalent to pairing with
these monomial multipliers. Hence the augmented certificate system has
exactly the asserted moment equalities. The affine-hull argument in
Theorem~\ref{thm:main} identifies its feasible body with $K_r$.

\emph{Normal forms and coordinates.} Let $\mathcal M_r$ be the monomials
of degree at most $r$ not divisible by any leading monomial in $\B_r$.
They are precisely the standard monomials of the full basis in this
degree range. They are squarefree because $x_i^2$ belongs to the initial
ideal of $\I(S)$, and they include $1$ because $S$ is nonempty.
For $\deg p\le r$, write $\operatorname{NF}(p)$ for its reduced normal
form, computed using $\B_r$. It is a linear combination of $\mathcal M_r$
of degree at most $\deg p$, and
\begin{equation}\label{eq:nf-kernel}
 p\in\I(S)\quad\Longleftrightarrow\quad\operatorname{NF}(p)=0.
\end{equation}
Uniqueness and linearity follow within degree $r$: the difference of two reduced remainders would be an ideal polynomial
supported on standard monomials. A nonzero such polynomial has a standard
leading monomial, contradicting the defining Gr\"obner property. This
also proves linearity and~\eqref{eq:nf-kernel}.

For all ordinary $|\alpha|\le r$, form
\[
 Q_r^{\rm NF}=\{x^\alpha-\operatorname{NF}(x^\alpha):|\alpha|\le r\},
\]
omitting zero polynomials. Each is an identity. If $h=\sum h_\alpha
x^\alpha$ is an identity of degree at most $r$, linearity gives
$h=\sum h_\alpha(x^\alpha-\operatorname{NF}(x^\alpha))$.
Thus $Q_r^{\rm NF}$ is another complete spanning list.

Write $\mathcal M_r=\{1,u_1,\ldots,u_p\}$ and, for $T\in\J_r$, compute
\begin{equation}\label{eq:nf-coordinates}
 \operatorname{NF}(x_T)=a_{T,0}+\sum_{j=1}^p a_{T,j}u_j,
 \qquad y_T=a_{T,0}+\sum_{j=1}^p a_{T,j}z_j.
\end{equation}
If $x_T=u_j$, this says $y_T=z_j$; for $T=\varnothing$ it says
$y_\varnothing=1$. Hence the map is injective. Conversely, assign
arbitrary real values to $z_1,\ldots,z_p$ and value one to the constant
standard monomial. Extend linearly to a functional $\lambda$ on their
span and define $\Lambda(p)=\lambda(\operatorname{NF}(p))$. Boolean reduction
does not change normal forms, so $\Lambda$ is represented by the moments in
\eqref{eq:nf-coordinates}. Equation~\eqref{eq:nf-kernel} shows that it
annihilates every truncated identity. Thus its moment vector belongs
to $\Haff_r$. Every vector in $\Haff_r$ satisfies these same normal-form
identities and is reconstructed from its standard moments. This proves
the affine bijection; in particular there are no remaining affine
dependencies among these $p$ coordinates. PSD feasibility is imposed
after this parametrization, exactly as in~\eqref{eq:F}.

\emph{Bit complexity and optimization.} Reading and filtering $\B_D$
costs polynomial time in $\mathsf{len}_B$. Each of the at most $W$ divisions takes
at most $W$ cancellations, with polynomial rational bit length by
Lemma~\ref{lem:division}. The normal-form table, both identity lists,
and $y^0,V$ therefore have polynomial construction time and encoding
length in $W,\mathsf{len}_B$. The standard Boolean coefficient estimate of
Lemma~\ref{lem:basisbits} also bounds the length of the retained reduced
basis by a polynomial in $W$. In normal-form coordinates the feasible
evaluation vectors have entries $u_j(s)\in\{0,1\}$. The same integral
simplex, outer-radius and separation arguments from
Theorem~\ref{thm:main} apply. Its optimization proof, counting $\mathsf{len}_B$
as part of the input, proves~\textup{(iii)}.
\end{proof}

\begin{corollary}[The same-degree supplied-basis result]\label{cor:supplied}
Supplying $\B_{2d}$ as additional equality axioms to degree-$2d$ SoS,
with all allowed multipliers, makes its moment formulation optimizable
in polynomial time with exact rational feasibility and arbitrary
additive accuracy at the same moment degree.
\end{corollary}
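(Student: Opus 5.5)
This corollary is the special case $D=r=2d$ of Theorem~\ref{thm:supplied-route}. The plan is to reinterpret the supplied basis $\B_{2d}$ as extra equality axioms and identify the resulting moment formulation with $K_{2d}$. Concretely, consider the degree-$2d$ SoS system obtained from~\eqref{eq:system} by adjoining each $b\in\B_{2d}$ as an equality $b=0$. Following~\eqref{eq:equalities}, its moment relaxation imposes $L_y(x^\alpha b)=0$ for all $|\alpha|+\deg b\le 2d$. Pairing with arbitrary polynomial multipliers of admissible degree is equivalent to pairing with these monomial multiples, by linearity. This moment body is therefore $S_{2d}$ intersected with the equalities coming from $Q_{2d}^{\rm mult}$.

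The key step is Theorem~\ref{thm:supplied-route}\textup{(i)}, with $D=2d$. It says that $Q_{2d}^{\rm mult}$ spans $\I(S)\cap\Q[x]_{\le 2d}$. Hence the augmented moment body equals $K_{2d}$. This uses the graded-order observation that every divisor relevant in degree $\le2d$ lies in $\B_{2d}$, together with the division Lemma~\ref{lem:division}. Equivalently, the affine-hull identification~\eqref{eq:hull} in the proof of Theorem~\ref{thm:main} applies to the list $Q_{2d}^{\rm mult}$, because it satisfies~\eqref{eq:spanning}. The list has at most $W|\B_{2d}|$ elements and encoding length polynomial in $W$ and $\mathsf{len}_B$. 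Moreover, $\mathsf{len}_B$ is itself polynomially bounded in $W$ by Lemma~\ref{lem:basisbits}.

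Optimization then follows from Theorem~\ref{thm:supplied-route}\textup{(iii)}, or equally from Theorem~\ref{thm:main} applied with $Q_{2d}=Q_{2d}^{\rm mult}$. For every $f\in\Q[x]_{\le2d}$ and rational $\varepsilon>0$, one computes a rational $y^*\in K_{2d}$ with $L_{y^*}(f)\ge\max_{K_{2d}}L_y(f)-\varepsilon$. Normalization, equality and PSD constraints hold exactly, and the moment degree stays $2d$. The running time is polynomial in $W,\mathsf{len},\mathsf{len}_B,\bits f,\bits\varepsilon$. For fixed $d$, this is polynomial in the input length and the number of accuracy bits.

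The only point needing care is the case $S=\varnothing$, which Theorem~\ref{thm:supplied-route} excludes. Here the reduced basis is $\{1\}$, and the axiom $1=0$ gives the moment equality $y_\varnothing=0$, contradicting $y_\varnothing=1$. The augmented formulation is therefore empty, and the algorithm simply reports infeasibility upon seeing $1\in\B_{2d}$. Beyond this case I expect no real obstacle, since the substantive work (the spanning property and the geometric ellipsoid data) is already done in Theorems~\ref{thm:main} and~\ref{thm:supplied-route}.
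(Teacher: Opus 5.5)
Your proposal is correct and matches the paper, which proves the corollary simply by applying Theorem~\ref{thm:supplied-route} with $D=r=2d$. Your unpacking of parts (i) and (iii) is faithful to that theorem, and your handling of $S=\varnothing$ via the test $1\in\B_{2d}$ agrees with the remark the paper makes right after the corollary.
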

\begin{proof}
Apply Theorem~\ref{thm:supplied-route} with $D=r=2d$.
\end{proof}

The empty case is also decidable under the supplied-basis promise:
$S=\varnothing$ if and only if $1\in\B_D$. The reduced basis of the
unit ideal is $\{1\}$, and its constant element is present at every
truncation degree. Thus the supplied-basis algorithm can first perform
this test and report infeasibility before applying the nonempty case.

\paragraph{Three different degrees.}
The target degree $r=2d$ determines the moment variables. The algebraic
input degree $D\ge r$ only specifies how much of the graded basis is
available: elements above degree $r$ are discarded for this construction.
If $\B_D$ is computed rather than supplied, its computation cost must
also be included; a polynomial-time computation for $D=O(d)$ preserves
the fixed-$d$ polynomial-time conclusion. An unbounded input degree is
accounted for by $\mathsf{len}_B$.
The third degree, $\kappa$, bounds SoS derivations and is used solely for
the inclusion $\pi_r(S_\kappa)\subseteq K_r$ proved below.

For example, a degree-$d$ truncation need not determine all identities
needed by a degree-$2d$ moment body. For the Min-closed system
$x_1+x_2\le1$ on the Boolean cube, there is no nonzero degree-one
vanishing identity, whereas $x_1x_2$ is a degree-two identity.
Thus at moment degree two the degree-one truncated basis misses a
required affine equation. The relevant degree is the degree of all
stored moments, not just the order of the moment matrix.

Figure~\ref{fig:two-routes} summarizes the common geometric endpoint.
Different complete lists and different choices of representatives or
standard moments can change the coordinate description, but not
$\Haff_r$ or $K_r$.

\begin{figure}[tbp]
\centering
\begin{tikzpicture}
\node[paperbox,text width=6.0cm,minimum height=1.6cm] (direct) at (-3.65,0)
  {\textbf{Route I: direct construction}\\
   $\spanQ Q_r=\I(S)\cap\Q[x]_{\le r}$\\
   Rational elimination};
\node[paperbox,text width=6.0cm,minimum height=1.6cm] (basis) at (3.65,0)
  {\textbf{Route II: graded basis}\\
   Supply / compute $\B_D$, $D\ge r$; retain $\B_r$\\
   Admissible multiples or normal forms};
\node[paperbox,text width=12.6cm,minimum height=1.2cm,anchor=north]
  (hull) at ($(direct.south)!0.5!(basis.south)+(0,-.85)$)
  {The same affine hull: $\Haff_r=\aff\{e_r(s):s\in S\}$\\
   Rational coordinates $y=y^0+Vz$; $z_j$ are original moments};
\node[paperbox,text width=12.6cm,minimum height=1.2cm,below=8mm of hull]
  (body)
  {The same augmented body: $K_r=S_r\cap\Haff_r$\\
   In free coordinates: an integral simplex, bounded moments, exact separation};
\node[paperbox,text width=12.6cm,minimum height=1.2cm,below=8mm of body]
  (opt)
  {Exact rational feasibility and $\varepsilon$-optimization on $K_r$\\
   Target moment degree $r=2d$ throughout};
\draw[paperarrow] (direct.south) -- (hull.north -| direct.south);
\draw[paperarrow] (basis.south) -- (hull.north -| basis.south);
\draw[paperarrow] (hull) -- (body);
\draw[paperarrow] (body) -- (opt);
\end{tikzpicture}
\caption{Two implementations of one geometric principle. An appropriate
graded basis supplies the full identity space and can directly supply
standard-monomial coordinates. Both routes optimize $K_r$ at moment
degree $r$. For Min-closed
systems, closures construct either input, and canonical representatives
make the two coordinate descriptions coincide (Proposition~\ref{prop:min-routes}).}
\label{fig:two-routes}
\end{figure}

\subsection{SoS derivability supplies the higher-level comparison}

\begin{corollary}[From two-sided basis derivations to the level comparison]
\label{cor:basiscompare}
Suppose a complete identity list at degree $2d$ is constructible and both
signs of each $b\in\B_{2d}$ have proofs of degree $D_b$.
Set
\[
 \kappa=2\left\lceil\frac12\max\left\{2d,
     \max_{b\in\B_{2d}}[D_b+2(2d-\deg b)]\right\}\right\rceil.
\]
Then every degree-$2d$ vanishing identity has two-sided degree-$\kappa$
proofs, and $\pi_{2d}(S_\kappa)\subseteq K_{2d}$. If $D_b=O(d)$,
this is the comparison in Theorem~\ref{thm:main}.
\end{corollary}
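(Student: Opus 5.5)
The plan is to reduce everything to Theorem~\ref{thm:transfer} with the choice $r=2d$ and $\mathcal E=\B_{2d}$. The proof has three steps: first express every truncated vanishing identity through $\B_{2d}$ with admissible multipliers, then transfer two-sided proofs to arbitrary identities, and finally apply part~(ii) of the transfer theorem.

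\emph{Step 1: spanning.} Let $h\in\I(S)\cap\R[x]_{\le 2d}$. I would use graded division by the monic reduced basis, as in Lemma~\ref{lem:division} and the identity-extraction step of Theorem~\ref{thm:supplied-route}(i). This gives
\[
 h=\sum_{b\in\B_{2d}}a_bb,\qquad \deg a_b\le 2d-\deg b .
\]
The degree bound on the multipliers holds because, in a graded order, only basis elements whose leading monomial has degree at most $2d$ can act as divisors. The remainder vanishes since $h$ lies in the ideal. Consequently the admissible multiples of $\B_{2d}$ span the truncated identity space. If $\B_{2d}$ is empty, that space is $\{0\}$ and the claim is trivial.

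\emph{Step 2: two-sided proofs for every identity.} Fix the representation from Step~1. Write $\pm h=\sum_b(\pm a_b)b$ and apply the signed-product identity~\eqref{eq:signed-product} to each term $\pm a_bb$. Substituting the assumed proofs of $b\ge0$ and $-b\ge0$, each multiplied by the squares $((\pm a_b\pm1)/2)^2$, gives a proof of each term in degree at most $D_b+2(2d-\deg b)\le\kappa$. Summing these proofs yields degree-$\kappa$ proofs of both $h\ge0$ and $-h\ge0$. Equivalently, this is Theorem~\ref{thm:transfer}(i) applied to the trivial augmented certificate $\pm h=\sum(\pm a_b)b$ of degree $2d$, and the resulting degree bound is exactly $\Delta_{\mathcal E}=\kappa$.

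\emph{Step 3: comparison.} Theorem~\ref{thm:transfer}(ii) gives $\pi_{2d}(S_\kappa)\subseteq S_{2d}^{\B_{2d}}$. By Step~1 the admissible multiples span $\I(S)\cap\R[x]_{\le2d}$, so $S_{2d}^{\B_{2d}}=K_{2d}$. Since a complete list is assumed constructible, $K_{2d}$ is the body of Theorem~\ref{thm:main}, and the two definitions agree by Theorem~\ref{thm:supplied-route}(i). Alternatively one can argue directly from soundness: for $u\in S_\kappa$, Step~2 gives $L_u(h)=0$ for every identity $h$ of degree at most $2d$, so $\pi_{2d}(u)\in K_{2d}$. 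When $D_b=O(d)$, the degree $\kappa$ is even, at least $2d$, and $O(d)$. Hypothesis~(ii) of Theorem~\ref{thm:main} then holds for any complete list $Q_{2d}$ by Step~2, which recovers~\eqref{eq:sandwich}.

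The only delicate point is Step~1, namely the degree bound on the multipliers. It relies on $\B_{2d}$ being the truncation of a graded reduced basis of $\I(S)$ itself, so that division never uses a basis element of degree above $2d$ and never raises degree. The remaining steps are bookkeeping with the transfer theorem.
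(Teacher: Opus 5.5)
Your proof is correct and follows the paper's argument. Graded division by $\B_{2d}$ gives degree-$2d$ certificates of $\pm h$ with the basis elements as equality axioms, Theorem~\ref{thm:transfer}(i) eliminates them within degree $\kappa=\Delta_{\B_{2d}}$, and Theorem~\ref{thm:transfer}(ii), together with the spanning property of the admissible basis multiples, gives $\pi_{2d}(S_\kappa)\subseteq K_{2d}$; your direct soundness check and the verification that $\kappa$ is even, at least $2d$, and $O(d)$ are accurate elaborations of the same route.
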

\begin{proof}
Graded division expresses each identity as
$h=\sum_b a_b b$ with $\deg a_b\le2d-\deg b$.
Thus both signs of $h$ have degree-$2d$ certificates when the basis
elements are equality axioms. Apply
Theorem~\ref{thm:transfer}\textup{(i)} to these two certificates and
\textup{(ii)} to the resulting moment constraints. The admissible basis
multiples span the identity space by Theorem~\ref{thm:supplied-route}.
\end{proof}

The two criteria use the same identity space in different ways: bounded
two-sided certificates control proof search for a given target, while an
effective identity description determines the moment geometry. Min-closed
systems supply both ingredients through closures, as we show next.

\section{Application to Min-closed and Max-closed systems}
\label{sec:min}
Let $\mathcal C=\{C_i(x)\ge0:i\in[m]\}$ be a rational linear system,
where $C_i(x)=\beta_i+\sum_j\alpha_{ij}x_j$, together with the Boolean equations. Assume each individual row is
Min-closed on $\{0,1\}^n$, under this input promise. Its solution set $S$
is closed under coordinatewise minimum. We verify the two criteria using
the same closure structure, with all degree and bit bounds explicit.
The closure description is part of the Boolean ideal-membership
framework of~\cite{Mastrolilli}. For the present input of rational
linear rows, we give the construction and the SoS derivations in full.
The constructions provide the equality data for the moment criterion;
the derivations provide the transfer to the original rows, with the
explicit comparison degree $8d+4$.

\subsection{Construct the full list \texorpdfstring{$Q_{2d}$}{Q2d} by closures}

Identify a Boolean vector with its support, and write $\mathbf1_D$
for the vector with support $D$. For $T\subseteq[n]$, define
$S_T=\{s\in S:s_j=1\text{ for all }j\in T\}$.
If $S_T$ is empty, put $\cl(T)=\bot$; otherwise put
\[
 \cl(T)=\bigcap_{s\in S_T}\operatorname{supp}(s).
\]
Min-closure implies $\mathbf1_{\cl(T)}\in S_T$; hence $\cl(T)$ is
the least feasible support containing $T$.

The next lemma gives a propagation algorithm for rational Boolean
Min-closed linear systems and bounds its bit complexity.

\begin{lemma}[Effective closure construction]\label{lem:closure}
Each $\cl(T)$ is computable in time polynomial in $n,m,\mathsf{len}$.
In particular, $\cl(\varnothing)=\bot$ detects $S=\varnothing$.
\end{lemma}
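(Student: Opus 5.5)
We compute $\cl(T)$ by monotone propagation (unit propagation for Min-closed rows), started from the forced set $T$.

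\textbf{Propagation procedure.} Maintain a current set $D\supseteq T$, initially $D=T$. Repeat the following step for each row $C_i$: test whether $\mathbf1_D$ satisfies $C_i(x)\ge0$. If it does not, use the Min-closed structure of $C_i$ to argue that every Boolean solution $x$ of $C_i$ with $x\ge\mathbf1_D$ must set some specific coordinate $j\notin D$ to one. That coordinate is the variable with positive coefficient $\alpha_{ij}$ that is forced; add it to $D$. If no such coordinate exists, then no $x\ge\mathbf1_D$ satisfies $C_i$, and we output $\bot$. Stop when $\mathbf1_D$ satisfies every row and output $\cl(T)=D$. The set $D$ only grows, so there are at most $n$ augmentations. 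Each test evaluates a rational linear form at a $0/1$ point, which costs polynomial time in $\mathsf{len}$.

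\textbf{Key structural fact.} The main step is a characterization of a single Min-closed row $\beta+\sum_j\alpha_jx_j\ge0$ restricted to the face $\{x\ge\mathbf1_D\}$. The set of solutions on that face is again closed under minimum. Hence, if it is nonempty, it has a least element $\mathbf1_{D'}$, and we need $D'$ to be computable. I would compute $D'$ directly in one of two ways.

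\emph{Greedy form.} Order the variables outside $D$ with positive coefficient by decreasing $\alpha_j$. Then check whether each such variable is forced, namely whether the point obtained by setting all other positive-coefficient variables to one (and the negative ones to zero) still violates the row when $x_j=0$.

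\emph{Min-closure form.} A variable $j$ is in $D'$ iff every solution on the face has $x_j=1$. By Min-closure this is equivalent to $\mathbf1_{D'}$ having $j$, so it suffices to find one solution with $x_j=0$, or certify that none exists. Because the row is linear, the test "is there a solution on the face with $x_j=0$" is a one-row knapsack feasibility question. Maximizing the left side is trivial: set the positive-coefficient variables to one and the negative ones to zero, subject to the fixed coordinates. This gives a polynomial test per $j$, using exact rational arithmetic.

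\textbf{Correctness.} Every added coordinate is forced in all of $S_T$. So by induction $D\subseteq s$ for each $s\in S_T$, and a $\bot$ output is correct. At termination $\mathbf1_D\in S$ and $D\supseteq T$, so $D$ is the least feasible support containing $T$, which is $\cl(T)$.

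I expect the obstacle to be the claim that "$\mathbf1_D$ violates the row" yields a forced new coordinate or an infeasibility certificate. The per-coordinate test above avoids needing a closed-form description of Min-closed rows. Iterating it over all rows until a fixpoint is reached suffices: the fixpoint set $D$ is contained in every solution above $T$. We still must check that $\mathbf1_D$ is itself feasible. For each row, the Min-closedness of that row's face solutions gives its least solution $\mathbf1_{D_i'}\subseteq D$, with $D_i'\supseteq D$ at the fixpoint, so $D_i'=D$ and $\mathbf1_D$ satisfies row $i$. The total cost is $O(n^2m)$ tests, each polynomial in $\mathsf{len}$.
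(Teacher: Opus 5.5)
Your proposal is correct and is essentially the paper's proof. It uses the same ingredients: propagation from $T$ with the exact test ``the row maximum on the face $x\ge\mathbf1_D$ with $x_j=0$ also imposed is negative,'' the forcing invariant to justify both $\bot$ and $D\subseteq s$ for all $s\in S_T$, and rowwise Min-closure of the face family to show that $\mathbf1_D$ is feasible at the fixpoint (your last paragraph is the paper's intersection-of-witnesses argument, phrased via the least face solution). The only difference is the stopping rule: your main loop stops once $\mathbf1_D$ satisfies every row and uses the least face solution of a violated row to guarantee progress or infeasibility, whereas the paper stops when nothing is forced, having first returned $\bot$ whenever some row's maximum on the face is negative; your fixpoint variant also needs that emptiness check before it can invoke a least solution.
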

\begin{proof}
Use the following exact propagation algorithm.
Maintain a set $P$ of coordinates forced to one, initially $P=T$.
For every row compute its maximum under this restriction:
\[
 U_i(P)=\beta_i+\sum_{j\in P}\alpha_{ij}
                        +\sum_{j\notin P}\max\{\alpha_{ij},0\}.
\]
If some $U_i(P)<0$, return $\bot$. Otherwise, whenever some free
coordinate $j$ and row $i$ satisfy
$U_i(P)-\max\{\alpha_{ij},0\}<0$, add $j$ to $P$ and restart.
If no such coordinate exists, return $P$.

The latter expression is exactly the maximum of that row when
$x_j=0$ is also imposed. Thus every added coordinate must be one
in every solution extending $T$, and an infeasible row proves
$S_T=\varnothing$. At termination, fix a row. Its feasible assignments
extending $P$ form a nonempty Min-closed family. For each free $j$,
the failed forcing test supplies an assignment in that family with
$x_j=0$. Their intersection is $\mathbf1_P$, which consequently
satisfies the row. If no coordinate is free, nonemptiness directly
implies this assertion. It holds for every row, so $\mathbf1_P\in S_T$;
the forcing invariant proves $P=\cl(T)$.

There are at most $n$ additions and $O(mn)$ rational operations per
scan. All tested quantities are sums of input coefficients, with
polynomial bit length in $\mathsf{len}+n$. This proves the bit bound, including
constant rows and the case $m=0$.
\end{proof}

Assume now that $S\ne\varnothing$. Compute the closures of all
$T\in\J_{2d}$. For each distinct finite closure $C$, select a
representative $T_C\in\J_{2d}$ with $\cl(T_C)=C$, choosing
$T_{C_0}=\varnothing$ for $C_0=\cl(\varnothing)$. Define
\begin{align}
 Q^{\mathrm{cl}}_{2d}
 &=\{x_T:\ T\in\J_{2d},\ \cl(T)=\bot\}\nonumber\\
 &\quad\cup\{x_T-x_{T_C}:\ T\in\J_{2d},\ \cl(T)=C\ne\bot\},
 \label{eq:closurelist}\\
 Q^{\mathrm{bool}}_{2d}
 &=\{x^\alpha-x_{\operatorname{supp}(\alpha)}:
                             \alpha\in\N^n,\ |\alpha|\le2d\},
 \label{eq:booleanlist}\\
 Q_{2d}&=Q^{\mathrm{cl}}_{2d}\cup Q^{\mathrm{bool}}_{2d}.
 \label{eq:minlist}
\end{align}
Zero polynomials may be omitted. Representatives have degree at most
$2d$, even when their closures have larger cardinality. Each nonzero
listed polynomial has at most two terms with coefficients in $\{1,-1\}$.

\begin{lemma}[Completeness and bit complexity of the constructed list]
\label{lem:minspan}
The list~\eqref{eq:minlist} satisfies
$\spanQ Q_{2d}=\I(S)\cap\Q[x]_{\le2d}$.
Its construction time and total encoding length are polynomial in $W,\mathsf{len}$.
\end{lemma}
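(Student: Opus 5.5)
We prove the two inclusions separately and then count the work.

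\emph{Soundness.} Every element of $Q^{\mathrm{bool}}_{2d}$ vanishes on $\{0,1\}^n\supseteq S$. For $Q^{\mathrm{cl}}_{2d}$ there are two cases.
\begin{itemize}
\item If $\cl(T)=\bot$, then $S_T=\varnothing$. So no feasible $s$ has $x_T(s)=1$, and $x_T$ vanishes on $S$.
\item If $\cl(T)=C\ne\bot$, we claim that for $s\in S$ we have $x_T(s)=1$ if and only if $\operatorname{supp}(s)\supseteq C$. One direction holds because $s\in S_T$ implies $\operatorname{supp}(s)\supseteq\cl(T)$. Conversely, $\operatorname{supp}(s)\supseteq C\supseteq T$ gives $x_T(s)=1$. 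The same holds for $T_C$, since $\cl(T_C)=C$. Hence $x_T(s)=x_{T_C}(s)$ for every $s\in S$.
\end{itemize}
Thus $\spanQ Q_{2d}\subseteq\I(S)\cap\Q[x]_{\le2d}$.

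\emph{Completeness.} Let $h\in\I(S)$ with $\deg h\le2d$.
\begin{enumerate}
\item Subtract Boolean list elements to replace $h$ by $\ml(h)$, which is multilinear and still vanishes on $S$.
\item Subtract multiples of the closure list. Each $x_T$ with $\cl(T)=\bot$ is removed, and each other $x_T$ is replaced by $x_{T_C}$. This leaves a polynomial $h'=\sum_C c_C x_{T_C}$ over the distinct finite closures $C$, still vanishing on $S$.
\item Show $h'=0$. The points $\mathbf1_C$ lie in $S$, because $\mathbf1_{\cl(T)}\in S_T$. Evaluation gives $x_{T_{C'}}(\mathbf1_C)=1$ if and only if $T_{C'}\subseteq C$. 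We claim this holds if and only if $C'\subseteq C$. If $T_{C'}\subseteq C$, then $C'=\cl(T_{C'})\subseteq\cl(C)=C$, using that $\cl$ is monotone and that $\cl(C)=C$ since $\mathbf1_C\in S$. The converse is immediate from $T_{C'}\subseteq C'$.
\item Conclude: the evaluation matrix $\bigl(x_{T_{C'}}(\mathbf1_C)\bigr)_{C,C'}$ is the zeta matrix of inclusion among the distinct closures. Ordered by a linear extension of inclusion it is unitriangular, hence invertible, so all $c_C=0$.
\end{enumerate}
This unitriangularity step is the only nontrivial point. The care needed is in using the closure property $\cl(C)=C$ together with monotonicity of $\cl$.

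\emph{Complexity.} There are $|\J_{2d}|=N\le W$ closures to compute. By Lemma~\ref{lem:closure} each costs polynomial time in $n,m,\mathsf{len}$. Choosing representatives needs only the comparisons of $N$ subsets of $[n]$. The Boolean list has at most $W$ members. Every listed polynomial has at most two terms with coefficients $\pm1$ and degree at most $2d$, so its encoding is $O(d\log n)$ up to constants. The total is polynomial in $W$ and $\mathsf{len}$.
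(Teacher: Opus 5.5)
Your proposal is correct and follows essentially the same argument as the paper. You show vanishing via $T\subseteq D\iff\cl(T)\subseteq D$ for feasible supports $D$, reduce to the representatives $x_{T_C}$, show the inclusion matrix $\mathbf1[C'\subseteq C]$ evaluated at the points $\mathbf1_C$ is unitriangular, and count closure calls and two-term polynomials in the same way. The only differences are cosmetic: you multilinearize first rather than treating the multilinear case first, and you order the closures by a linear extension of inclusion instead of by cardinality.
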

\begin{proof}
For a feasible support $D$,
\[
 x_T(\mathbf1_D)=
 \begin{cases}
 0,&\cl(T)=\bot,\\
 \mathbf1[\cl(T)\subseteq D],&\cl(T)\ne\bot.
 \end{cases}
\]
Indeed $T\subseteq D$ is equivalent to $\cl(T)\subseteq D$ when
a feasible extension exists. Thus all closure polynomials vanish on
$S$; the Boolean polynomials vanish on the entire cube.

First consider a multilinear $h\in\I(S)\cap\Q[x]_{\le2d}$.
Order the distinct finite closures $C_0,\ldots,C_p$ by nondecreasing
cardinality. Subtract rational linear combinations of
$Q^{\mathrm{cl}}_{2d}$ to remove impossible monomials and replace each
remaining monomial by its representative. The residual is
$h'=\sum_{i=0}^p a_i x_{T_{C_i}}$ and still vanishes on $S$.
Evaluate it at the feasible supports $C_j$, $0\le j\le p$.
The resulting coefficient matrix is
\[
 Z_{ij}=x_{T_{C_i}}(\mathbf1_{C_j})
        =\mathbf1[C_i\subseteq C_j].
\]
It is upper triangular with diagonal entries one: if $i>j$, either
$C_i$ is larger or the sets are distinct of equal size. Hence
$Z^\top a=0$ implies $a=0$. All multilinear identities are therefore
spanned by $Q^{\mathrm{cl}}_{2d}$.

For an arbitrary ordinary polynomial $h=\sum_\alpha h_\alpha x^\alpha$
in $\I(S)\cap\Q[x]_{\le2d}$, write the exact identity
\[
 h-\ml(h)=\sum_\alpha h_\alpha
            (x^\alpha-x_{\operatorname{supp}(\alpha)}).
\]
The right side is spanned by $Q^{\mathrm{bool}}_{2d}$, while $\ml(h)$
is a multilinear identity and is spanned by $Q^{\mathrm{cl}}_{2d}$.
This proves the full ordinary-polynomial spanning equality.

There are $N$ closure calls, at most $N$ nonzero closure polynomials,
and at most $W$ Boolean polynomials. Closures are stored as $n$-bit
sets and grouped by exact comparison. Every output polynomial has
degree at most $2d$ and at most two unit coefficients, with supports
and exponents of polynomial encoding length in $n,d$.
Lemma~\ref{lem:closure} and $n,d,N\le O(W)$ give polynomial total
construction time and length in $W,\mathsf{len}$. This establishes
hypothesis~\ref{hyp:construction} of the criterion.
\end{proof}

\subsection{Comparing the direct and Gr\"obner-basis routes}
\label{sec:min-routes}
The closure construction yields both a complete identity list and a
truncated reduced graded basis. These give two implementations of the
primal algorithm on the same input system.

\begin{proposition}[Two closure-based implementations for Min-closed systems]
\label{prop:min-routes}
For nonempty rowwise Min-closed systems and $r=2d$, one can construct
either the direct list~\eqref{eq:minlist} or the reduced graded basis
$\B_r$ in time polynomial in $W,\mathsf{len}$. Both give exactly the same
$\Haff_r$ and $K_r$. Choosing the least graded-lexicographic
representative in each finite closure class makes the direct class
coordinates identical to the standard-monomial coordinates of
Theorem~\ref{thm:supplied-route}.
\end{proposition}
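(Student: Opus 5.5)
The plan is to obtain both inputs from the closure data of Lemma~\ref{lem:closure} and the completeness argument of Lemma~\ref{lem:minspan}. Everything depends only on the partition of $\J_{2d}$ into closure classes and the impossible class $\cl(T)=\bot$.

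\emph{Direct list.} This part is already done. Lemma~\ref{lem:minspan} constructs~\eqref{eq:minlist} in time polynomial in $W,\mathsf{len}$ and proves it complete. Theorem~\ref{thm:main}, steps 1--2, then gives $\Haff_r=\aff\{e_r(s):s\in S\}$ and $K_r=S_r\cap\Haff_r$.

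\emph{Reduced graded basis.} Fix graded lex and, in each finite class $C$, take $T_C$ to be the graded-lex least element of $\{T\in\J_{2d}:\cl(T)=C\}$. The class of $C_0$ contains $\varnothing$, so $T_{C_0}=\varnothing$ as the construction requires. Candidate standard monomials are $\mathcal M=\{x_{T_C}\}$.

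The normal form of an ordinary monomial $x^\alpha$ with $|\alpha|\le r$ is computed as follows. If $\cl(\operatorname{supp}\alpha)=\bot$, it is $0$. Otherwise it is $x_{T_C}$ with $C=\cl(\operatorname{supp}\alpha)$. By~\eqref{eq:nf-kernel}, $x^\alpha$ minus this candidate lies in $\I(S)$. Independence of $\mathcal M$ modulo $\I(S)$ follows from the triangular matrix $Z$ in Lemma~\ref{lem:minspan}, so $\mathcal M$ is exactly the set of standard monomials of degree $\le r$.

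The reduced basis elements of degree $\le r$ are then $m-\operatorname{NF}(m)$ for the minimal non-standard monomials $m$, that is, those whose proper divisors are all standard. Checking all $m$ with $|m|\le r$ is polynomial in $W$. These polynomials have the form $x_i^2-x_i$, $x_T$ with $\cl(T)=\bot$, or $x_T-x_{T_C}$. The leading term is $m$ because $T_C$ is graded-lex smaller than $T$. The reduced form holds because the tail is standard.

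The main obstacle is certifying that this is the truncation $\B_r$ of the \emph{true} reduced Gröbner basis of $\I(S)$ without computing beyond degree $r$. My plan is to argue by degree. For a graded order, the degree-$\le r$ part of the initial ideal is determined by which monomials of degree $\le r$ are leading monomials of elements of $\I(S)\cap\Q[x]_{\le r}$. Those are precisely the non-standard monomials found above: a degree-$\le r$ polynomial whose leading monomial is standard cannot vanish on $S$, since subtracting closure identities reduces it to a nonzero combination of $\mathcal M$. Reduced basis elements of degree $\le r$ correspond to minimal generators of the initial ideal of degree $\le r$. Their tails are the unique normal forms, as in the uniqueness argument of Theorem~\ref{thm:supplied-route}. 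So the constructed list equals $\B_r$.

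\emph{Comparison.} Theorem~\ref{thm:supplied-route}(i) shows the basis route yields $K_r$, and the direct route yields the same affine hull. The hull is intrinsic, so $\Haff_r$ and $K_r$ coincide.

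For coordinates, the direct route in step 2 of Theorem~\ref{thm:main} can use the representatives $x_{T_C}$, $C\ne C_0$, as the free moments $T_1,\dots,T_p$. Every other moment $y_T$ is then set to $y_{T_C}$ or $0$. This is exactly formula~\eqref{eq:nf-coordinates}, where $\operatorname{NF}(x_T)$ is a single standard monomial or zero. The two parametrizations $y=y^0+Vz$ are therefore identical.
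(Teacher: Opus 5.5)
Your proposal is correct and follows essentially the paper's route: grlex-least closure representatives are the standard monomials by the triangular independence of Lemma~\ref{lem:minspan}, the minimal nonstandard monomials $m$ with tails $0$ or $x_{T_C}$ give $\B_r$, and the normal-form coordinates coincide with the direct class coordinates. The differences are minor. You identify the output with the true truncation $\B_r$ by an explicit initial-ideal and degree argument instead of citing Lemma~\ref{lem:min-basis}. You test all proper divisors rather than only the immediate ones. You obtain equality of $\Haff_r$ and $K_r$ from the intrinsic affine hull rather than from the row-space comparison. Note also that $x^\alpha-x_{T_C}\in\I(S)$ follows from the evaluation formula of Lemma~\ref{lem:minspan}, not from \eqref{eq:nf-kernel}.
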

\begin{proof}
The direct list and its completeness were proved in
Lemma~\ref{lem:minspan}. For the basis route, compute the same closures
for every squarefree monomial of degree at most $r$. For each finite
closure $C$ among them, choose the least monomial $u_C$ in the fixed
graded lexicographic order. This is also the least monomial in its
\emph{entire} evaluation class. Indeed the least representative is
squarefree, by Boolean reduction, and it cannot have degree greater
than a representative already present through degree $r$.
The independence argument of Lemma~\ref{lem:minspan}, or equivalently
the proof of the standard basis description in
Lemma~\ref{lem:min-basis}, shows that the $u_C$ are exactly the standard
monomials through degree $r$.

Enumerate all ordinary monomials $m=x^\alpha$ of degree at most $r$,
and define
\[
 \nu(m)=
 \begin{cases}
 0,&\cl(\operatorname{supp}\alpha)=\bot,\\
 u_C,&\cl(\operatorname{supp}\alpha)=C\ne\bot.
 \end{cases}
\]
For each nonstandard $m$, meaning $m\ne\nu(m)$, retain $m-\nu(m)$
if all its immediate proper divisors $m/x_i$, for $x_i\mid m$, are
standard. Nonstandard monomials are exactly the monomials in the initial
ideal. Testing these immediate divisors therefore selects precisely its
minimal generators through degree $r$: if any proper divisor is in the
initial ideal, an immediate divisor is also in it, by closure under
multiplication. The tail $\nu(m)$ is zero or a single standard monomial,
and in the latter case is smaller than $m$. The proof of
Lemma~\ref{lem:min-basis} identifies these polynomials as exactly the
corresponding reduced basis elements. Thus the output is $\B_r$,
with no missing or extra elements. Only stored closures and monomial
comparisons are used. There are $W$ monomials and at most $n$ immediate
divisors per monomial, and output coefficients belong to $\{0,1,-1\}$.
This gives the claimed polynomial bit complexity.

For the coordinate comparison, let $C_0=\cl(\varnothing)$.
The direct equations set $y_T=0$ in impossible classes, set $y_T=1$
in class $C_0$, and set $y_T=z_C$ in every other finite class $C$.
Independence of these class functions proves that the parameters $z_C$
are free affine coordinates. Choosing $x_{T_C}=u_C$ gives
$z_C=y_{u_C}$, where this notation denotes the moment indexed by the
squarefree monomial $u_C$. Normal form sends $x_T$ to zero or $u_C$,
so~\eqref{eq:nf-coordinates} gives the very same equations and variables.
The two identity lists consequently have the same row space after
multilinearization, and the same affine hull and augmented body.
\end{proof}

If a basis $\B_D$ with $D>2d$ has already been computed, it can instead
be filtered as in Theorem~\ref{thm:supplied-route}; this yields the same
construction at the target degree. In the Min-closed application
$D=2d$ already suffices. The degree $8d+4$ appearing later is the SoS
derivation degree for the higher-level comparison; either primal
implementation uses identity or basis degree $2d$. 

\subsection{Propagation and infeasibility have degree-four certificates}
\label{sec:propagation}
The closure algorithm also has a low-degree proof interpretation.
The certificates below multiply the original inequalities by Boolean
literals, represented as squares modulo the Boolean equations.
We prove the degree-four bound directly, including its preservation
along an entire propagation chain.

\begin{lemma}[Certificates for propagation and refutation]
\label{lem:propagation}
Run the closure algorithm from the empty support on a rowwise Min-closed
linear system. Every propagated assertion $x_i=1$ has a rational Gram
SoS proof of $x_i-1\ge0$ of degree at most four. If the system is
infeasible, it has a degree-four rational Gram refutation, that is, a
proof of $-1\ge0$. All coefficient bit lengths and construction time
are polynomial in $n,m,\mathsf{len}$.
\end{lemma}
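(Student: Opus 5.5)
We argue by induction along the propagation chain produced by the closure algorithm of Lemma~\ref{lem:closure} started from $P=\varnothing$. The invariant is that every $j\in P$ has a degree-four rational Gram proof of $x_j-1\ge0$. The basic tool is a single-row bound: for a row $C_i(x)=\beta_i+\sum_j\alpha_{ij}x_j$ and the current set $P$, write
\[
 U_i(P)-C_i(x)=\sum_{j\in P}\alpha_{ij}(1-x_j)+\sum_{j\notin P,\alpha_{ij}>0}\alpha_{ij}(1-x_j)+\sum_{j\notin P,\alpha_{ij}<0}|\alpha_{ij}|x_j .
\]
The terms $(1-x_j)$ and $x_j$ with nonnegative coefficients are squares modulo the Boolean equations, since $x_j=x_j^2+(x_j-x_j^2)$ and $1-x_j=(1-x_j)^2+(x_j-x_j^2)$. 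The terms $\alpha_{ij}(1-x_j)$ with $j\in P$ and $\alpha_{ij}<0$ have the wrong sign. For these we substitute the inductively available proof of $x_j-1\ge0$, multiplied by the constant $|\alpha_{ij}|$.

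\emph{Refutation step.} If $U_i(P)<0$, then $C_i\ge0$ together with the identity above gives
\[
 -U_i(P)\cdot 1=-C_i(x)+\bigl(C_i(x)-U_i(P)\bigr)\ \ge0
\]
as a combination of the row $C_i$, Boolean squares and the earlier proofs. Dividing by $-U_i(P)>0$ yields $-1\ge0$.

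\emph{Forcing step.} Suppose instead that $U_i(P)-\alpha_{ij}<0$ for a free $j$ with $\alpha_{ij}>0$; set $c=\alpha_{ij}-U_i(P)>0$. We want a proof of $x_j-1\ge0$. The key move is to multiply the row by the literal $(1-x_j)$, represented as a square. Modulo Boolean equations,
\[
 (1-x_j)C_i(x)=(1-x_j)\bigl(C_i(x)-\alpha_{ij}x_j\bigr),
\]
and $C_i-\alpha_{ij}x_j\le U_i(P)-\alpha_{ij}=-c$ holds with a certificate built from the same literal terms. So $(1-x_j)^2C_i\ge0$ gives
\[
 -c(1-x_j)\ge (1-x_j)\cdot[\text{nonnegative combination of literals and earlier proofs}].
\]
Products of two literals are squares modulo Boolean equations of degree two. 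Products of $(1-x_j)$ with an earlier proof of $x_k-1\ge0$ are handled by $(1-x_j)=(1-x_j)^2$ modulo the Boolean ideal. This yields $x_j-1\ge0$ after dividing by $c$.

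\emph{Degree bookkeeping is the main obstacle.} Naively, multiplying the earlier degree-four proofs by $(1-x_j)^2$ raises the degree along the chain. To avoid this, I would not nest the proofs. Instead, each earlier proof of $x_k-1\ge0$ is used only through the linear term $(x_k-1)$, and the product $(1-x_j)(x_k-1)$ is rewritten via the degree-two Boolean identity
\[
 (1-x_j)(x_k-1)=(x_k-1)-x_j(x_k-1).
\]
The first summand reuses the earlier proof linearly. The second is $x_j(x_k-1)=x_j^2(x_k-1)$, which needs the earlier certificate multiplied by the square $x_j^2$; this would again raise the degree.

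I would fix this with a stronger invariant. Each forced $x_k$ has a proof of $x_k-1\ge0$ of the specific form
\[
 \sum_\ell(\text{literal})^2C_\ell+\text{Boolean squares}+\text{Boolean multiples},
\]
all of degree at most four, so that $x_k-1$ is certified \emph{without} inner nesting. To maintain it, I would unroll the chain: $x_k-1$ is expressed as a nonnegative combination, with positive rational weights, of the forcing products $(1-x_{j'})C_{i'}$ from earlier steps plus squares of degree-two multilinear polynomials. Here multilinearity keeps every term at degree $\le4$, since each term is a squared literal of degree $\le2$ times a linear row, or a square of a degree-$\le2$ polynomial. Exact rational arithmetic is used throughout, and the number of steps is at most $n$.

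For the bit bound, the weights arise from $\le n$ divisions by the gaps $c$. Each gap is a difference of sums of input coefficients, so every coefficient has polynomial bit length by the argument in Lemma~\ref{lem:closure}.
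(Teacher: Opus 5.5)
Your outline matches the paper's: induction along the propagation chain, the single-row bound writing $U_i(P)-C_i$ through literals and earlier proofs, and multiplying the forcing row by the square $(1-x_j)^2$. The gap is in the one step that makes the lemma nontrivial, namely keeping the degree at four along the whole chain.

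In the forcing step you must certify $(1-x_j)(x_k-1)\ge0$ for each previously forced $k$ with $\alpha_{ik}<0$. You write the correct decomposition $(1-x_j)(x_k-1)=(x_k-1)-x_j(x_k-1)$. You then drop the minus sign and conclude that the second summand needs the earlier certificate multiplied by $x_j^2$. In fact $-x_j(x_k-1)=x_j(1-x_k)$ is a product of two literals. It is congruent modulo the Boolean ideal to the square $[x_j(1-x_k)]^2$ and needs no earlier certificate:
\[
 (1-x_j)(x_k-1)\equiv(x_k-1)+[x_j(1-x_k)]^2 .
\]
Each earlier proof of $x_k-1\ge0$ therefore enters only with the positive scalar weight $|\alpha_{ik}|/c$, never multiplied by a polynomial, so the chain stays at degree four. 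This is exactly the paper's identity~\eqref{eq:propagation-reuse}.

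Having rejected this route, you substitute a ``stronger invariant'' obtained by unrolling the chain. That invariant only restates the desired conclusion, and you give no way to maintain it. The natural way is to substitute the unrolled certificate of $x_k-1$ and multiply by $(1-x_j)^2$. This creates localizing terms $[(1-x_j)(1-x_{j'})]^2C_{i'}$ of degree five and squares $[(1-x_j)q]^2$ of degree up to six. Iterating makes the degree grow with the chain length, and multilinearity does not prevent this. As written, the degree-four claim is unproved.

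Two smaller points. First, your refutation display has the wrong sign: neither $-C_i$ nor $C_i-U_i(P)$ is certified nonnegative. The correct combination is $U_i(P)=C_i+(U_i(P)-C_i)$, where both summands are certified, followed by division by $-U_i(P)>0$. Second, each certificate is reused linearly by later steps, so the bit bound needs an account of how magnitudes accumulate. The paper uses the recursion $M_r+1\le c_0(n+1)^4H(M_{r-1}+1)$ together with denominators dividing the product of the gaps. Noting that each individual gap has polynomial length does not by itself control the growth over up to $n$ steps.
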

\begin{proof}
First consider a feasible row that forces $x_i=1$ without previous
assignments. Write it as
\[
 C(x)=t-\sum_j w_jx_j\ge0,
 \qquad
 N_-=\{j:w_j<0\},\quad P_+=\{j:w_j>0\},
 \qquad
 s_-=\sum_{j\in N_-}w_j.
\]
The minimum of $\sum_j w_jx_j$ over the Boolean cube is $s_-$.
Since the row is feasible and forces $x_i=1$, we must have $w_i<0$:
otherwise, changing $x_i$ from one to zero would preserve feasibility.
Under the restriction $x_i=0$, the minimum becomes $s_-+|w_i|$.
This restricted row is infeasible, so
\[
 \sigma:=s_-+|w_i|-t>0.
\]
The following congruence modulo the Boolean ideal supplies a proof:
\begin{align}
 \sigma(x_i-1)\equiv{}&(1-x_i)^2 C(x)
       +\sum_{j\in P_+}w_j[x_j(1-x_i)]^2\nonumber\\
       &+\sum_{j\in N_-\setminus\{i\}}|w_j|
                               [(1-x_j)(1-x_i)]^2.
 \label{eq:forcing-certificate}
\end{align}
Indeed, replace every squared Boolean literal by that literal. Terms
containing $x_j$ then cancel, and the remaining coefficient of $1-x_i$
is $t-s_--|w_i|=-\sigma$. Every weight is nonnegative, the localizing
term has degree three, and the other squares have degree at most four.
Lemma~\ref{lem:boolean} lifts this to an exact degree-four identity.

Now let $P$ be the previously forced set. A row maximum under $x_P=1$
is obtained by also setting all its negative-weight variables to one and
its other positive-weight variables to zero. Thus already fixed
negative-weight variables do not affect either its feasibility or its
forcing gap. Only $J=P\cap P_+$ matters for that gap. Define
\[
 C'(x)=C(x)-\sum_{j\in J}w_j(1-x_j).
\]
This deletes the variables in $J$ and subtracts their weights from the
capacity. If the row forces a free $x_i=1$, use
\eqref{eq:forcing-certificate} with $C'$ in place of $C$ and the positive
set $P_+\setminus J$. Replace its localizing term by
$(1-x_i)^2C-\sum_{j\in J}w_j(1-x_j)(1-x_i)^2$. For each $j\in J$,
\begin{equation}\label{eq:propagation-reuse}
 -(1-x_j)(1-x_i)^2\equiv(x_j-1)+[x_i(1-x_j)]^2
       \pmod{\I(\{0,1\}^n)}.
\end{equation}
This follows by multilinearizing both sides and lifts in degree four.
Substitute the already constructed proof of $x_j-1\ge0$ into the first
term on the right. It is multiplied only by the positive scalar $w_j$,
not by a further polynomial. Hence the whole propagation chain remains
of degree four. This proves the assertion by induction on its length.

If a residual row is infeasible, its gap is $\sigma=s_--t'>0$, where
$t'=t-\sum_{j\in J}w_j$. The identity
\begin{equation}\label{eq:row-refutation}
 -\sigma\equiv C'(x)
       +\sum_{j\in N_-}|w_j|(1-x_j)^2
       +\sum_{j\in P_+\setminus J}w_jx_j^2
\end{equation}
is verified by cancellation. Substitute
$C'=C+\sum_{j\in J}w_j(x_j-1)$ and the degree-four propagation proofs,
and divide by $\sigma$. This yields a degree-four proof of $-1$.
Lemma~\ref{lem:closure} ensures that an infeasible system does reach such
a row: otherwise the terminating assignment would satisfy every row.

For the bit bound, first scale each input row by a positive product of
its denominators. All weights and capacities become integers of magnitude
at most $H=2^{O(\mathsf{len}+1)}$. Each positive gap is an integer between $1$ and
$(n+2)H$. There are at most $n$ forcing steps and one refutation step.
If $\Delta_r$ is the product of the first $r$ gaps, all coefficients after
step $r$ have denominators dividing $\Delta_r$: the formulas only add
integer multiples of previous certificates and divide by the current gap.
If $M_r$ bounds their magnitudes, the explicit formulas, sums and Boolean
reductions give, for an absolute constant $c_0$,
\[
 M_r+1\le c_0(n+1)^4H(M_{r-1}+1).
\]
Consequently $\log\Delta_r$ and $\log(M_r+1)$ are polynomial in
$n,\mathsf{len}$. Restoring the original rows multiplies their Gram coefficients
by the positive integer scaling factors, preserving the bound. Only
degree-four polynomials and degree-at-most-two Gram monomials are used,
so the number of stored coefficients and all rational operations are
polynomial in $n,m,\mathsf{len}$. This also covers constant infeasible rows.
\end{proof}

\subsection{The reduced basis and its two-sided certificates}
The standard structural description proved in
Appendix~\ref{app:min-basis} (Lemma~\ref{lem:min-basis}) says that the reduced
graded basis consists of squarefree monomials, differences of two squarefree
monomials, and Boolean equations. We now derive both signs of these
identities from the original inequalities, with an explicit degree bound.
For finite-domain equation systems preserved by a semilattice or a dual
discriminator, Bortolotti--Mastrolilli--Vargas
\cite[Theorems~2.1 and~2.3, and Section~8]{BMV}
use low-degree polynomial-calculus derivations of truncated basis
elements to obtain degree automatability. Here the input consists of
linear inequalities and Boolean equations, and we construct the
two-sided SoS certificates required by our criterion.

\begin{lemma}[Conditioning an infeasible partial assignment]
\label{lem:indicator}
Suppose a partial Boolean assignment on disjoint sets $U,V$ has no
extension in $S$. Let $h=x_U\prod_{j\in V}(1-x_j)$ and $s=|U|+|V|$.
Then $h\ge0$ and $-h\ge0$ have rational Gram proofs of degree at most
$2s+4$ from $\mathcal C$ and the Boolean equations.
\end{lemma}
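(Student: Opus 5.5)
The plan is to condition the degree-four refutation of Lemma~\ref{lem:propagation} on the partial assignment, and then multiply by the indicator $h$. First, the easy sign. Boolean reduction gives $h\equiv h^2$, and $h^2$ is the square of a polynomial of degree $s$. So $h\ge0$ has a proof of degree $2s\le 2s+4$, lifted exactly by Lemma~\ref{lem:boolean}. All of the work goes into $-h\ge0$.

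\emph{Restricted system.} Substitute $x_j=1$ for $j\in U$ and $x_j=0$ for $j\in V$ in every row $C_i$. This gives rows $C_i^{U,V}$ on the remaining variables. Each restricted row is still Min-closed, because coordinatewise minima of two vectors agreeing on $U\cup V$ still agree there. By hypothesis the restricted system is infeasible. Lemma~\ref{lem:propagation} therefore yields a degree-four rational Gram refutation
\[
 -1=\sigma_0+\sum_i\sigma_iC_i^{U,V}+(\text{Boolean multiples}),
\]
where each $\sigma_i$ is a nonnegative combination of squared literals of degree at most two. Its bit length is polynomial in $n,m,\mathsf{len}$.

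\emph{Lifting back.} Multiply the refutation by $h=h^2$, working modulo the Boolean ideal. The key observation is that $h\,C_i^{U,V}\equiv h\,C_i$ modulo the Boolean ideal. On the support of $h$ the substituted variables equal their assigned values: $x_jh\equiv h$ for $j\in U$ and $x_jh\equiv0$ for $j\in V$. Hence
\[
 -h\equiv h^2\sigma_0+\sum_i h^2\sigma_iC_i .
\]
Each term $h^2\sigma_0$ is a square of the product of $h$ with a degree-$\le2$ literal, so it has degree at most $2s+4$. Each localizing term $h^2\sigma_iC_i$ has degree at most $2s+2+1\le 2s+4$. Lemma~\ref{lem:boolean} then lifts the congruence to an exact identity without raising the degree. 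The Gram matrices are obtained from the old ones by the multiplication map $T_h$, as in the proof of Theorem~\ref{thm:transfer}. Since $h$ has $0/\pm1$ coefficients and at most $2^{|V|}$ terms, this is the step where entry growth must be watched.

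\emph{Main obstacle.} I expect the main difficulty to be justifying the claim that $h\,C_i^{U,V}\equiv hC_i$ can be applied inside the certificate. The refutation also involves Boolean equalities for the substituted variables, and the propagation proofs reuse earlier proofs via~\eqref{eq:propagation-reuse}. A cleaner route may be to avoid substitution altogether and rerun the proof of Lemma~\ref{lem:propagation} on the original rows. In that version the variables in $U$ are treated as already forced, with trivial ``proofs'' $h(x_j-1)\equiv0$. For each $j\in V$, the factor $(1-x_j)$ in $h$ kills $x_j$, which is handled similarly to the negative-weight reduction. A further concern is coefficient size: expanding $h$ has $2^{|V|}$ monomials. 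This is acceptable when $s\le 2d$ is fixed, which is how the lemma will be used for basis elements. Alternatively, one keeps $h^2$ in product form in the Gram vector, so the bound is polynomial in $W$.
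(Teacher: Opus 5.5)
Your proposal is correct and is essentially the paper's argument. You condition the degree-four refutation of Lemma~\ref{lem:propagation} on the assignment, multiply by the square $h^2$, and use $h\equiv h^2$ for the positive sign. The one difference is cosmetic: the paper encodes the assignment by adjoining the Min-closed unit rows $x_i-1\ge0$ ($i\in U$) and $-x_j\ge0$ ($j\in V$) rather than substituting, so the added-row terms vanish after multiplication by $h^2$. That is equivalent to your step $hC_i^{U,V}\equiv hC_i$. The worry you raise about that step is unfounded, since the refutation is used only as a black-box identity and $h^2\sigma_i(C_i-C_i^{U,V})$ vanishes on the cube in degree at most $2s+3$, so Lemma~\ref{lem:boolean} absorbs it.
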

\begin{proof}
Add the Min-closed assignment rows $x_i-1\ge0$ for $i\in U$ and
$-x_j\ge0$ for $j\in V$. The augmented system is infeasible, so
Lemma~\ref{lem:propagation} gives a degree-four refutation. Multiply
that identity by the square $h^2$. Every term attached to an added
assignment row becomes zero modulo the Boolean ideal, because $h$
contains the corresponding literal. The left side is $-h^2\equiv-h$.
All resulting terms have degree at most $2s+4$, so
Lemma~\ref{lem:boolean} removes the zero terms and restores an exact
proof from the original axioms at that degree. Finally $h\equiv h^2$
gives a proof of the positive sign of degree at most $2s$.

The construction takes time polynomial in $n^{O(s+1)}$ and the input
length. For the uses below, $|V|\le1$, so $h$ has at most two terms.
In that case multiplication of Gram matrices by $h^2$ uses a matrix
with entries $0,1,-1$ and at most two contributions per column.
Together with the bit bound in Lemma~\ref{lem:propagation} and Boolean
reduction, all resulting coefficient bit lengths are polynomial in
$n,\mathsf{len},s$.
\end{proof}

\begin{theorem}[Two-sided SoS derivations of Min-closed bases]
\label{thm:mininput}
Let $\mathcal C$ be rowwise Min-closed and $b$ an element of its reduced
graded vanishing-ideal basis, with $t=\deg b$. Both $b\ge0$ and $-b\ge0$
have rational Gram SoS proofs of degree at most $4t+4$, with coefficient
bit lengths polynomial in $n,\mathsf{len},t$ and total size polynomial in
$n^{O(t+1)}$ and $\mathsf{len}$.
\end{theorem}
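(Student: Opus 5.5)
We argue by cases on the three kinds of elements of the reduced graded basis given by Lemma~\ref{lem:min-basis}: Boolean equations $x_a^2-x_a$, squarefree monomials $x_T$, and binomials $x_T-x_{T'}$ of squarefree monomials. The strategy is to express each element modulo the Boolean ideal as a signed sum of indicator polynomials $h=x_U\prod_{j\in V}(1-x_j)$ attached to infeasible partial assignments, with $|V|\le1$. Lemma~\ref{lem:indicator} then proves both signs of each indicator, and Lemma~\ref{lem:boolean} lifts the Boolean congruences to exact identities without raising the degree.

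The Boolean equations are original equality axioms, so $\pm(x_a^2-x_a)\ge0$ hold in degree two with unit multipliers. If $b=x_T$ with $|T|=t$, then $b$ vanishes on $S$, so the assignment $x_T=1$ has no extension, that is, $\cl(T)=\bot$. Lemma~\ref{lem:indicator} with $U=T$ and $V=\varnothing$ gives both signs in degree $2t+4\le4t+4$.

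The binomial case $b=x_T-x_{T'}$, with $x_{T'}$ the standard tail of smaller degree, is the main case. Since $b\in\I(S)$ and $x_T,x_{T'}$ take values in $\{0,1\}$, we have $\cl(T)=\cl(T')=C\ne\bot$. Hence for every $i\in T\setminus T'$ the assignment $x_{T'}=1$, $x_i=0$ is infeasible, and symmetrically for every $j\in T'\setminus T$ the assignment $x_T=1$, $x_j=0$ is infeasible.

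To bound the degree, telescope through the union $R=T\cup T'$ and write $b=(x_T-x_R)+(x_R-x_{T'})$. Ordering $R\setminus T'=\{i_1,\dots,i_k\}$, we have
\[
 x_R-x_{T'}=-\sum_{\ell}x_{T'\cup\{i_1,\dots,i_{\ell-1}\}}(1-x_{i_\ell}),
\]
and each summand is an indicator with $|V|=1$. Its assignment is infeasible because $x_{T'}=1$ already forces $x_{i_\ell}=1$. Its support has size at most $|R|\le2t$, so Lemma~\ref{lem:indicator} yields both signs in degree $2(2t)+4=4t+4$. The same telescoping handles $x_T-x_R$.

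Summing the two-sided certificates with signs $\pm1$ gives proofs of $b\ge0$ and $-b\ge0$, since a two-sided proof of $h$ gives one of $-h$ by swapping. Lemma~\ref{lem:boolean} then restores exact identities in degree at most $4t+4$.

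The main obstacle is the degree bookkeeping. I expect to check that the telescoping never uses supports larger than $|T\cup T'|\le2t$, which holds because $\deg x_{T'}\le t$. I also expect to check that the Boolean reductions of the products $x_U(1-x_i)$ stay within degree $4t+4$.

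For the size bounds, there are at most $2t$ indicator terms, each produced by Lemma~\ref{lem:indicator} with $|V|\le1$. Each therefore has coefficient bit length polynomial in $n,\mathsf{len},t$ and total size polynomial in $n^{O(t+1)}$ and $\mathsf{len}$. Summation adds only $O(t)$ certificates of this kind, which gives the stated bounds.
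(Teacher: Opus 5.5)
Your proposal is correct and follows essentially the paper's proof. Your decomposition $b=(x_T-x_R)+(x_R-x_{T'})$ is the paper's identity $p-q=p(1-q)-q(1-p)$ after multilinearization. As in the paper, you telescope into indicators with at most $2t$ literals and one negative literal, and Lemma~\ref{lem:indicator} then gives degree $4t+4$.

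The differences are cosmetic. You telescope only over $T\setminus T'$ and $T'\setminus T$, which avoids summands that reduce to zero. You justify infeasibility via closures rather than via pointwise nonnegativity. The case $S=\varnothing$, which the paper treats separately, is covered by your monomial case with $T=\varnothing$.
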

\begin{proof}
If $S=\varnothing$, the basis is $\{1\}$: the positive sign is trivial
and the negative sign follows from Lemma~\ref{lem:propagation}.
Assume $S\ne\varnothing$ and use Lemma~\ref{lem:min-basis}.
The two signs of a Boolean basis element are equality axioms.
If $b=x_T$ vanishes, the partial assignment $x_T=1$ is infeasible.
Lemma~\ref{lem:indicator} gives both signs in degree $2t+4\le4t+4$.

For $b=p-q$, write $p=x_P$, $q=x_Q$, with $|P|,|Q|\le t$. The
exact identity
\[
 p-q=p(1-q)-q(1-p)
\]
reduces the claim to two nonnegative functions that vanish on $S$.
For an ordering $Q=\{j_1,\ldots,j_a\}$,
\[
 p(1-q)=\sum_{r=1}^a p\,x_{\{j_1,\ldots,j_{r-1}\}}(1-x_{j_r}).
\]
After Boolean reduction each summand is zero, or a partial-assignment
indicator with at most $|P|+|Q|\le2t$ literals and at most one negative
literal. Each such indicator vanishes on $S$: their sum is zero there
and every summand is pointwise nonnegative. Lemma~\ref{lem:indicator}
therefore proves both signs of each one in degree at most $4t+4$.
Apply the same argument to $q(1-p)$ and combine the signs to obtain
both signs of $b$. All intervening telescoping and Boolean identities
have degree at most $2t$ and lift within the stated bound.

At most $2t$ nonzero indicators are used. Their certificates have the
polynomial bit bound proved in Lemma~\ref{lem:indicator}, and summing
them increases denominator and numerator lengths by at most a polynomial
factor in $t$. The number of monomials and Gram entries at degree
$4t+4$ is $n^{O(t+1)}$. This proves the asserted total size bound.
\end{proof}

\subsection{Both algorithmic conclusions and the exact degree accounting}
At this point the two inputs have been established separately:
Lemma~\ref{lem:minspan} constructs the full identity space, and
Theorem~\ref{thm:mininput} proves its basis identities from the given
rows. Theorem~\ref{thm:main} then supplies optimization of the augmented
moment body. The degree calculation below is the
transfer step that gives proof search over the original system and
the comparison to its higher moment relaxation.

\begin{corollary}[Min-closed proof search]\label{cor:min-dual}
Under the promise that $f\ge0$ has a degree-$2d$ SoS proof from
$\mathcal C$, one can find a rational Gram proof of $f+\varepsilon\ge0$
of degree at most $8d+4$ in time polynomial in $W,\mathsf{len},\bits f,
\bits\varepsilon$. An exact, possibly real, proof of $f$ with that degree
and coefficient magnitudes $2^{\operatorname{poly}(W,\mathsf{len}+\bits f)}$ exists.
\end{corollary}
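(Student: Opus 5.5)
The plan is to apply the dual criterion, Theorem~\ref{thm:dual}, with the two-sided certificates supplied by Theorem~\ref{thm:mininput}, and then to verify the degree accounting. First I dispose of the case $S=\varnothing$. Lemma~\ref{lem:propagation} gives a degree-four rational refutation of $-1\ge0$. Multiplying it by a suitable constant and adding $f+\varepsilon$ gives a certificate, since $f+\varepsilon=(f+\varepsilon+c)+c(-1)$ with $c$ large. This is a little delicate because $f$ need not be a sum of squares. I would instead use the empty-basis branch of Theorem~\ref{thm:dual}: there the basis is $\{1\}$, Theorem~\ref{thm:mininput} gives both signs of $1$ in degree $4\le 8d+4$, and the general argument covers this case uniformly. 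The algorithm itself does not need to know whether $S$ is empty.

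Assume now $S\ne\varnothing$. By Lemma~\ref{lem:min-basis}, every $b\in\B_{2d}$ is a squarefree monomial, a difference of two squarefree monomials, or a Boolean equation. Its coefficients are therefore in $\{0,\pm1\}$, so $C=1$. By Theorem~\ref{thm:mininput}, both signs of $b$ have rational Gram proofs of degree $D_b\le 4t+4$, where $t=\deg b\le 2d$. Their coefficient bit lengths are polynomial in $n,\mathsf{len},t$, which gives a computable bound $\log B=\operatorname{poly}(n,\mathsf{len},d)$. Plugging into \eqref{eq:dual-degree} gives
\[
D_b+2(2d-t)\le 4t+4+4d-2t=2t+4d+4\le 8d+4 .
\]
Since $8d+4$ is even and at least $2d$, we get $\Delta\le 8d+4$. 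We also have $D_b\le cd$ with $c=12$, since $d\ge1$.

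Theorem~\ref{thm:dual} then yields the existence statement. Every degree-$2d$ proof of $f\ge0$ can be replaced by a degree-$\Delta$ proof, and hence a degree-$(8d+4)$ proof, whose magnitudes are bounded by $2^{\operatorname{poly}(W,\mathsf{len}+\bits f,\log C,\log B)}=2^{\operatorname{poly}(W,\mathsf{len}+\bits f)}$. This uses $n,d\le W$, and padding a lower-degree proof to degree $8d+4$ is trivial. For the search, I run the bounded-certificate search of Theorem~\ref{thm:dual} (Lemma~\ref{lem:certificate-search}) at even degree $8d+4$ with this explicit magnitude bound. Its cost is polynomial in $W_{8d+4}\le W^{O(1)}$, $\mathsf{len}$, $\bits f$ and $\bits\varepsilon$. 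The search does not need the basis or its certificates, only the numerical bound.

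I expect the main obstacle to be checking that the bound $B$ is genuinely a computable function of the input alone, uniformly over the possibly exponentially many basis elements. It must not depend on $S$. It also must not come only from the per-element polynomial bound of Theorem~\ref{thm:mininput}; that theorem has to provide an explicit majorant, obtained from the recurrence $M_r+1\le c_0(n+1)^4H(M_{r-1}+1)$ in Lemma~\ref{lem:propagation} and from the conditioning factors in Lemma~\ref{lem:indicator}. I would extract this explicit closed-form majorant from those proofs and feed it into Theorem~\ref{thm:dual}.
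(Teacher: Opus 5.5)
Your proposal is correct and follows the paper's proof essentially verbatim. Both arguments take $C=1$ from the $\{0,\pm1\}$ basis coefficients, and $D_b\le4t+4$ and $B=2^{\operatorname{poly}(n,\mathsf{len},d)}$ from Theorem~\ref{thm:mininput}. Both check $D_b+2(2d-t)=4d+2t+4\le8d+4$, apply Theorem~\ref{thm:dual} with padding to degree $8d+4$, and cover the empty case by the basis $\{1\}$ together with the degree-four refutation. Your remark that the majorant $B$ must be extracted explicitly from the recurrences in Lemmas~\ref{lem:propagation} and~\ref{lem:indicator} spells out a step the paper only asserts.
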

\begin{proof}
The basis coefficients belong to $\{0,1,-1\}$, so $C=1$.
Theorem~\ref{thm:mininput} supplies $D_b=4t+4$ for $t=\deg b\le2d$
and computable coefficient bounds
$B=2^{\operatorname{poly}(n,\mathsf{len},d)}$.
For every basis element the degree in Theorem~\ref{thm:dual} is at most
\begin{equation}\label{eq:min-degree}
 D_b+2(2d-t)=(4t+4)+2(2d-t)=4d+2t+4\le8d+4.
\end{equation}
Apply that theorem, padding the search to degree $8d+4$ if necessary.
The empty-set case uses the degree-four refutation and is also covered.
\end{proof}

\begin{corollary}[Min-closed pseudoexpectation optimization]
\label{cor:min-primal}
Given a rowwise Min-closed rational linear system, $d\ge1$, a rational
$f$ of degree at most $2d$, and rational $\varepsilon>0$, an algorithm
with running time and output length polynomial in $W,\mathsf{len},\bits f,
\bits\varepsilon$ reports infeasibility if $S=\varnothing$.
Otherwise it constructs the list~\eqref{eq:minlist}, or equivalently
uses the graded-basis route of Proposition~\ref{prop:min-routes}, and returns a
rational $y^*\in K_{2d}$ satisfying
\begin{align*}
 K_{2d}&=S_{2d}\cap\aff\{e(s):s\in S\},\\
 L_{y^*}(f)&\ge\max_{y\in K_{2d}}L_y(f)-\varepsilon,\\
 \pi_{2d}(S_{8d+4})&\subseteq K_{2d}\subseteq S_{2d}.
\end{align*}
In particular $L_{y^*}(f)\ge\max_{u\in S_{8d+4}}L_u(f)-\varepsilon$.
All degree-$2d$ moment, localizing and equality constraints hold exactly.
\end{corollary}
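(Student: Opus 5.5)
The proof is an assembly of results already established, in four steps: emptiness test, construction of the identity list, optimization via Theorem~\ref{thm:main}, and the degree accounting for hypothesis~\ref{hyp:derivation}.

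\emph{Step 1: emptiness.} Run the closure algorithm of Lemma~\ref{lem:closure} from $T=\varnothing$. If $\cl(\varnothing)=\bot$, report $S=\varnothing$; this costs time polynomial in $n,m,\mathsf{len}$.

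\emph{Step 2: constructibility.} Otherwise build the list~\eqref{eq:minlist}. Lemma~\ref{lem:minspan} gives the spanning equality~\eqref{eq:spanning} with construction time and encoding length polynomial in $W,\mathsf{len}$, which is hypothesis~\ref{hyp:construction} of Theorem~\ref{thm:main}. Alternatively, construct $\B_{2d}$ by Proposition~\ref{prop:min-routes}; by that proposition and Theorem~\ref{thm:supplied-route} it gives the same $\Haff_{2d}$ and $K_{2d}$. The displayed description $K_{2d}=S_{2d}\cap\aff\{e(s):s\in S\}$ is~\eqref{eq:hull} from Step~1 of the proof of Theorem~\ref{thm:main}. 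Theorem~\ref{thm:main} then returns a rational $y^*$ with exact feasibility and $\varepsilon$-optimality, in time and output length polynomial in $W,\mathsf{len},\bits f,\bits\varepsilon$.

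\emph{Step 3: derivability and the degree count.} This is the only step needing care. Hypothesis~\ref{hyp:derivation} asks for two-sided proofs of the elements of $Q_{2d}$ themselves, not of basis elements. Rather than proving each list element directly, I would use Corollary~\ref{cor:basiscompare}. Theorem~\ref{thm:mininput} gives $D_b=4t+4$ for each $b\in\B_{2d}$ with $t=\deg b\le2d$. Then, exactly as in~\eqref{eq:min-degree},
\[
 \max\{2d,\,D_b+2(2d-t)\}\le 4d+2t+4\le 8d+4,
\]
and this bound is already even. So $\kappa\le8d+4$, and every degree-$2d$ vanishing identity, in particular every element of $Q_{2d}$, has two-sided degree-$(8d+4)$ proofs. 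If the computed $\kappa$ is smaller, monotonicity $\pi_\kappa(S_{8d+4})\subseteq S_\kappa$ lets us raise it to $8d+4$.

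\emph{Step 4: conclusion.} Corollary~\ref{cor:basiscompare}, equivalently~\eqref{eq:sandwich} and Step~7 of the proof of Theorem~\ref{thm:main}, gives $\pi_{2d}(S_{8d+4})\subseteq K_{2d}\subseteq S_{2d}$. Combining this with the optimization guarantee yields $L_{y^*}(f)\ge\max_{u\in S_{8d+4}}L_u(f)-\varepsilon$, because projection preserves $L(f)$ when $\deg f\le2d$.
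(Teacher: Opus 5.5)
Your proposal is correct and follows the paper's own proof essentially step for step: the closure test for emptiness, Lemma~\ref{lem:minspan} for constructibility, Theorem~\ref{thm:mininput} with Corollary~\ref{cor:basiscompare} and~\eqref{eq:min-degree} for the degree-$(8d+4)$ two-sided derivations, then Theorem~\ref{thm:main}, with~\eqref{eq:hull} identifying the affine hull. Your explicit checks, that $4d+2t+4$ is already even and that a smaller $\kappa$ can be raised to $8d+4$, are small clarifications the paper leaves implicit.
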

\begin{proof}
Compute $\cl(\varnothing)$ by Lemma~\ref{lem:closure}. If it is $\bot$,
report infeasibility. Otherwise Lemma~\ref{lem:minspan} constructs a
complete identity list with polynomial encoding length. Theorem~\ref{thm:mininput},
Corollary~\ref{cor:basiscompare} and~\eqref{eq:min-degree} prove two-sided
derivations for all these identities in degree $8d+4$. Apply
Theorem~\ref{thm:main}. Equation~\eqref{eq:hull} identifies the affine hull.
The direct implementation constructs only the identity list. Alternatively,
Proposition~\ref{prop:min-routes} constructs $\B_{2d}$ and
Theorem~\ref{thm:supplied-route} gives the identical optimization guarantee
for the same body. In both implementations the higher-degree certificates
are used only to prove the level comparison.
\end{proof}

Thus a single structural analysis supplies both the computable identity
space and its low-degree derivability. The moment optimization takes place
at degree $2d$; the number $8d+4$ records the soundness comparison and the
degree sufficient for proof search. The algorithm stores moments through
degree $2d$.

\subsection{Max-closed systems via Boolean complementation}
\label{sec:complement}
The standard coordinate substitution $x_i=1-z_i$ transfers the polynomial
system, its certificates and its moment functionals. Thus the
Max-closed conclusions use the same algorithmic and degree estimates.

\begin{proposition}[Degree-preserving Boolean complementation]
\label{prop:complement}
All conclusions of Corollaries~\ref{cor:min-dual} and~\ref{cor:min-primal}
hold for systems whose rows are all Max-closed. The transformation preserves
certificate degree, exact rational feasibility, objective values and the
inclusion between projected hierarchy levels. Its encoding cost is
polynomial in the number of monomials at the relevant degree and in the
input length.
\end{proposition}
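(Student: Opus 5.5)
We will use the affine involution $\phi:x\mapsto\mathbf 1-x$, that is, $x_i=1-z_i$, and check that every object in Corollaries~\ref{cor:min-dual} and~\ref{cor:min-primal} transports along it. The plan has three steps.

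\emph{Step 1: the system.} A Boolean vector set is Max-closed if and only if its image under $\phi$ is Min-closed, because $\phi(\max(s,s'))=\min(\phi(s),\phi(s'))$. Hence a rowwise Max-closed system $C_i(x)\ge0$ becomes the rowwise Min-closed system $C_i(\mathbf 1-z)\ge0$. This is again a rational linear system. Its coefficients are $-\alpha_{ij}$, and its constants are $\beta_i+\sum_j\alpha_{ij}$, so its encoding length is $O(\mathsf{len})$. The Boolean equations are preserved, since $(1-z_i)^2-(1-z_i)=z_i^2-z_i$. The feasible sets satisfy $S_z=\phi(S_x)$.

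\emph{Step 2: certificates.} The substitution $p(x)\mapsto p(\mathbf 1-z)$ is a degree-preserving ring automorphism of $\Q[x]$. It maps squares to squares, multiples of an axiom to multiples of the transformed axiom, and Boolean equations to Boolean equations. On Gram matrices it acts as $Q\mapsto TQT^\top$, where $T$ is the integer matrix expressing $v_s(\mathbf 1-z)$ in the basis $v_s(z)$. The entries of $T$ are binomial coefficients of bit length $O(d\log n)$, and $T$ has polynomially many entries in the monomial count $W_\Delta$.

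For the dual statement, we proceed in four moves. First, transform $f$ to $f(\mathbf 1-z)$. Second, observe that a degree-$2d$ proof of $f\ge0$ transforms to one for the transformed $f$. Third, run the Min-closed search of Corollary~\ref{cor:min-dual} at degree $8d+4$. Fourth, pull the result back with $T^{-1}$, which is the same substitution applied again, and note that $\varepsilon$ is unchanged because constants are fixed. The exact bounded-coefficient existence claim transfers in the same way, with an extra factor of $2^{\operatorname{poly}(W)}$ in the coefficients.

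\emph{Step 3: moments.} The map $y\mapsto y'$ defined by $L_{y'}(p(z))=L_y(p(\mathbf 1-z))$ is an invertible rational affine map; it is linear and fixes the normalization. It carries each of the following constraints to its transformed counterpart:
\begin{itemize}
\item moment matrices, by congruence with $T$;
\item localizing matrices;
\item equality constraints and vanishing identities, since $\I(\phi(S))$ is the image of $\I(S)$ in each degree.
\end{itemize}
It therefore maps $S_t$ onto $S'_t$ for every $t$, and $K_{2d}$ onto $K'_{2d}$. It commutes with $\pi_{2d}$ because the substitution does not raise degree. It preserves objective values: $L_y(f)=L_{y'}(f(\mathbf 1-z))$.

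So, given the input, we run Corollary~\ref{cor:min-primal} on the transformed instance and map $y'^*$ back. The result is a rational point of $K_{2d}$, exactly feasible, with the same additive error. The inclusion $\pi_{2d}(S_{8d+4})\subseteq K_{2d}\subseteq S_{2d}$ follows by applying the bijection.

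The main point needing care is the claim that the moment transformation maps $S_t$ exactly onto $S'_t$, rather than merely into it. This is settled by involutivity: applying the same argument to $\phi^{-1}=\phi$ gives the reverse inclusion. A second point is that the multilinear moment coordinates must be handled via $L_y$ on ordinary polynomials, reduced by $\ml$. Since Boolean reduction commutes with the substitution modulo the Boolean ideal, this is consistent.
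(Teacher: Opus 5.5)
Your proposal is correct and follows the paper's proof: the involution $\Theta p(z)=p(\mathbf 1-z)$ turns Max-closed rows into Min-closed ones, transports certificates by degree-preserving substitution (a rational Gram congruence), and transports pseudoexpectations by $L\mapsto L\circ\Theta$. This gives an objective-preserving affine bijection of the relaxations that commutes with $\pi_{2d}$ and is onto by involutivity. The only cosmetic difference is that you identify the image of $K_{2d}$ via $\Theta(\I(S)_{\le 2d})=\I(\phi(S))_{\le 2d}$, whereas the paper maps Boolean evaluation vectors and hence their affine hulls, which is equivalent by~\eqref{eq:hull}.
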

\begin{proof}
Define the involution $\Theta p(z)=p(\mathbf1-z)$.
If $z,z'$ satisfy a transformed row, $\mathbf1-z$ and $\mathbf1-z'$
satisfy the original Max-closed row. Their coordinatewise maximum is
$\mathbf1-\min(z,z')$, so the transformed row is Min-closed.
For a linear row $t-\sum_iw_ix_i$, substitution gives
$(t-\sum_iw_i)+\sum_iw_iz_i$, with polynomial coefficient bit length.
The Boolean polynomial transforms exactly as
$(1-z_i)^2-(1-z_i)=z_i^2-z_i$.

Substitution commutes with addition and multiplication, does not increase
degree, and sends a square to a square. Applying it to a SoS identity
therefore gives a certificate of the same degree for the transformed
system and target. In Gram form it is a rational congruence transformation
on each monomial vector. Applying it twice recovers the original identity.

For a degree-$r$ pseudoexpectation $L_z$ of the transformed system define
$L_x(p)=L_z(\Theta p)$ for $\deg p\le r$. Normalization is preserved;
nonnegativity on every square and every allowed localizing square follows
from the preceding degree observation, and equality multiples are
preserved. Consequently this defines an affine bijection between the two
original moment relaxations. In explicit multilinear moments it reads
\[
 y^x_T=\sum_{U\subseteq T}(-1)^{|U|}y^z_U,\qquad |T|\le r.
\]
This map is rational, invertible by the same formula, and commutes with
restriction to lower degree. It sends actual evaluation vectors to actual
evaluation vectors and hence their affine hulls to each other. Thus it
also maps the augmented bodies $K_r$ bijectively. With transformed target
$\Theta f$, the objective is preserved exactly. A complete identity list
may be transported by $\Theta$ if an explicit original-coordinate SDP is
desired.

Each ordinary monomial of degree at most $r$ expands into at most
$\binom{n+r}{r}$ distinct terms, with integer coefficient magnitudes
at most $2^r$. Each coefficient thus has at most $r+1$ bits.
Expanding and collecting terms therefore costs polynomial time
in $\binom{n+r}{r}$ and the input length, since $r\le\binom{n+r}{r}$.
The same bounds apply to the Gram congruences. For multilinear moments
there are at most $\sum_{j\le r}\binom nj$ possible terms. All transports
therefore preserve the required polynomial bit complexity.
\end{proof}

\subsection{Generalized packing and covering}
Min-closed rows define the generalized packing class considered here;
Max-closed rows define its generalized covering counterpart. Boolean
complementation transfers the algorithmic results between these classes.

Classical packing and covering systems satisfy the respective closure
promises and are special cases of Theorem~\ref{thm:intro}. The same
conclusion holds for the mixed-sign rows that individually satisfy the
corresponding promise: all rows are Min-closed for generalized packing,
and all are Max-closed for generalized covering. Closure is taken on the
Boolean cube.

For a maximization application whose rounding analysis uses degree-$2d$
pseudoexpectation inequalities, Corollary~\ref{cor:min-primal} supplies
those inequalities exactly. In addition its objective is at least the
true Boolean optimum minus $\varepsilon$, because actual evaluations
belong to every hierarchy level. To conclude an application-specific
approximation ratio, its rounding theorem and the conversion of the
additive optimization error to the required scale must still be supplied.

\phantomsection
\section*{Declaration on the use of AI tools}
\addcontentsline{toc}{section}{Declaration on the use of AI tools}

The paper's main ideas, results, and proofs were developed by the author,
building on the cited literature and the contributions acknowledged below.
AI tools assisted with English editing, the organization and drafting
of proofs, and checks of statements, arguments, and references.
This assistance included all appendices, which collect classical tools
and develop the quantitative specializations used in the paper.
The author takes full responsibility for the final content, including
its correctness, attribution, and originality.

\phantomsection
\section*{Acknowledgments}
\addcontentsline{toc}{section}{Acknowledgments}

Preliminary versions of Lemma~\ref{lem:bounded-modulo}  and Theorem~\ref{thm:dual}
were developed with contributions from
Alex Bortolotti and Luis F.\ Vargas while they were working with the
author on SNSF project 200021-207429. These results are included in this
paper with their permission.

\phantomsection
\addcontentsline{toc}{section}{References}
\bibliographystyle{plain}
\bibliography{sos_criterion}

\appendix

\section{Algebraic tools and coefficient bounds}
\label{app:algebra}
Polynomial division and reduced Gr\"obner bases are classical; see
\cite{CLO}. The Min-closed basis structure
is related to Mastrolilli~\cite{Mastrolilli}, with a closure-based proof
below. We prove the Boolean reductions, spectral estimates and explicit
coefficient bounds used in this paper. These quantitative formulations
and the residual correction specify the degree and bit bounds needed
by our algorithms.

\begin{lemma}[Degree-preserving Boolean reduction]\label{lem:boolean}
For a polynomial $p$ of degree at most $t$, the difference $p-\ml(p)$ is
a sum $\sum_i u_i(x)(x_i^2-x_i)$ with $\deg(u_i(x)(x_i^2-x_i))\le t$.
For rational input the multipliers can be computed in polynomial time in
the number of ordinary degree-$t$ monomials and the input length, with
polynomial coefficient bit length. Also $\norm{\ml(p)}_1\le\norm p_1$.
\end{lemma}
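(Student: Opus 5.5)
The plan is to argue monomial by monomial, using the identity $x_a^v-x_a^{v-1}=x_a^{v-2}(x_a^2-x_a)$ from the preliminaries and tracking degrees. Write $p=\sum_\alpha p_\alpha x^\alpha$ with $|\alpha|\le t$. Since $\ml$ is linear and $\ml(x^\alpha)=x_{\operatorname{supp}(\alpha)}$, it suffices to show that each difference $x^\alpha-x_{\operatorname{supp}(\alpha)}$ can be written as $\sum_i u_{\alpha,i}(x_i^2-x_i)$ with every term of degree at most $|\alpha|\le t$. We then take $u_i=\sum_\alpha p_\alpha u_{\alpha,i}$.

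For a single monomial, we lower the exponents one unit at a time. Pick a coordinate $a$ with $\alpha_a=v\ge2$ and write $x^\alpha=x^{\alpha'}x_a^v$, where $\alpha'$ omits $a$. Then
\[
 x^\alpha-x^{\alpha-e_a}=x^{\alpha'}x_a^{v-2}(x_a^2-x_a),
\]
and the multiplied term has degree exactly $|\alpha|$. Repeat this on $x^{\alpha-e_a}$. Each step strictly decreases the total degree and keeps the exponents positive on $\operatorname{supp}(\alpha)$. After $|\alpha|-|\operatorname{supp}(\alpha)|$ steps we reach $x_{\operatorname{supp}(\alpha)}$. Telescoping gives the representation with monomial multipliers, coefficients $1$, and every term of degree at most $|\alpha|$. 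The multiplier $u_{\alpha,a}$ therefore collects at most $t$ monomials with unit coefficients.

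For the algorithmic claim, the number of monomials $x^\alpha$ with $|\alpha|\le t$ is $\binom{n+t}{t}$. Each telescoping chain has length at most $t$, and each step is a fixed exponent shift. Each $u_i$ is a rational combination of at most $t\binom{n+t}{t}$ monomials, and its coefficients are sums of input coefficients $p_\alpha$. Hence the bit lengths grow only by $O(\log(t\binom{n+t}{t}))$ over the input, and collecting terms is polynomial in the monomial count and the input length.

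For the norm bound, $\ml(p)=\sum_\alpha p_\alpha x_{\operatorname{supp}(\alpha)}$ is obtained by merging coefficients of monomials with the same support. The triangle inequality gives $\norm{\ml(p)}_1\le\sum_\alpha|p_\alpha|=\norm p_1$. There is no real obstacle here. The only point needing care is the degree bookkeeping, namely that every multiplied term stays within degree $|\alpha|$ and not merely within $t+1$. The factorization above gives this directly.
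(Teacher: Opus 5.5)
Your proposal is correct and is essentially the paper's proof: lower exponents one unit at a time via $x_a^v-x_a^{v-1}=x_a^{v-2}(x_a^2-x_a)$ while retaining the other factors, so that at most $t$ degree-nonincreasing steps are needed per monomial, then sum and group terms; the norm bound $\norm{\ml(p)}_1\le\norm p_1$ likewise comes from grouping coefficients by support. One small precision: coefficients of the $u_i$ are sums of input coefficients that may have different denominators, so their bit length is bounded by the \emph{total} input length plus $O(\log(t\binom{n+t}{t}))$, not by a single coefficient's length plus that term. This still gives the required polynomial bound.
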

\begin{proof}
For each monomial, lower an exponent $a\ge2$ one unit at a time using
$x_i^a-x_i^{a-1}=x_i^{a-2}(x_i^2-x_i)$, retaining its other factors.
At most $t$ reductions per monomial suffice, none increasing degree.
Add the resulting identities and group equal terms. The coefficients are
sums of at most $t$ times the input number of terms, so rational bit
length and computation time are polynomial. Grouping coefficients by
support also proves the norm inequality.
\end{proof}

The degree-preserving representation below is the division algorithm
for a graded order~\cite[Chapter~2, Section~3]{CLO}, together with the
Gr\"obner-basis membership criterion
\cite[Chapter~2, Section~6, Corollary~2]{CLO}. The proof also records
the elementary magnitude and rational encoding bounds used here.

\begin{lemma}[Degree-preserving graded division with a size bound]
\label{lem:division}
Let $\B_r$ be the degree-$r$ truncation of a monic reduced graded basis
of an ideal $I$. If $q\in I$ and $\deg q\le r$, then
\[
 q=\sum_{b\in\B_r}a_b b,\qquad \deg a_b+\deg b\le r.
\]
Writing $W_r=\binom{n+r}{r}$ and assuming basis coefficients have
magnitude at most $C\ge1$, one may choose
$\norm{a_b}_1\le W_r\norm q_1(1+W_r C)^{W_r}$.
Division uses at most $W_r$ cancellations; for rational data it takes
polynomial bit time in $W_r$ and the supplied basis and polynomial lengths.
\end{lemma}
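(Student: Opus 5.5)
I would run the standard multivariate division algorithm of \cite[Chapter~2, Section~3]{CLO} with respect to the fixed graded lexicographic order, using only the elements of $\B_r$, and then track degrees, the number of steps, and coefficient growth.

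\emph{Degree.} Start with $p_0=q$ and all quotients $a_b=0$. At each step, let $x^\gamma$ be the largest monomial of the current remainder $p_k$ that is divisible by some leading monomial. Since $q\in I$ and every step subtracts an element of $I$, each remainder stays in $I$. By the Gr\"obner property, the leading monomial of any nonzero element of $I$ is divisible by a leading monomial of the full basis.

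That divisor has degree at most $\deg p_k\le r$. Because the order is graded, the degree of the divisor equals the degree of its leading monomial, so the divisor lies in $\B_r$. If $\operatorname{LM}(b)\mid x^\gamma$, subtract $c\,x^{\gamma-\operatorname{LM}(b)}b$, where $c$ is the coefficient of $x^\gamma$, and add $c\,x^{\gamma-\operatorname{LM}(b)}$ to $a_b$. Since $b$ is monic, this cancels $x^\gamma$.

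The other monomials of $x^{\gamma-\operatorname{LM}(b)}b$ are smaller than $x^\gamma$ in a graded order, so they have degree at most $|\gamma|\le r$. Hence $\deg p_{k+1}\le r$ and $\deg a_b+\deg b\le r$.

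\emph{Termination with zero remainder.} I would always reduce the \emph{leading} monomial of the remainder while it is nonzero. This is possible because the remainder lies in $I$, so its leading monomial is divisible by some $\operatorname{LM}(b)$ with $b\in\B_r$ as above.

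The leading monomial then strictly decreases in the order, and it always has degree at most $r$. So there are at most $W_r$ steps, after which the remainder is $0$. This gives $q=\sum_b a_bb$ with at most $W_r$ cancellations.

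\emph{Magnitude bound.} Let $M_k=\norm{p_k}_1$. The cancelled coefficient satisfies $|c|\le M_k$. The subtraction adds at most $|c|\norm b_1\le M_kW_rC$ to the norm, since $b$ has at most $W_r$ terms of degree $\le r$, each of magnitude at most $C$. Therefore
\[
M_{k+1}\le M_k(1+W_rC),\qquad\text{so}\qquad M_k\le\norm q_1(1+W_rC)^{k}.
\]
Each quotient update adds at most $|c|\le M_k$ to $\sum_b\norm{a_b}_1$. Summing over at most $W_r$ steps gives, for every $b$,
\[
\norm{a_b}_1\le W_r\norm q_1(1+W_rC)^{W_r}.
\]

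\emph{Bit complexity.} This is the step I expect to need the most care. Magnitude bounds do not by themselves control denominators. For rational data, every coefficient produced is a polynomial expression in the input coefficients of degree at most $W_r+1$ and with at most exponentially many terms.

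A cleaner route is to clear denominators. Let $\delta$ be a common denominator of all basis coefficients, with bit length at most $\mathsf{len}_B$, and let $\delta_q$ be a common denominator of $q$. After $k$ steps, every coefficient of $p_k$ and of the $a_b$ has denominator dividing $\delta_q\delta^{k}$. Its numerator is then bounded in magnitude by $M_k\delta_q\delta^k$.

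Both the numerator and denominator bit lengths are therefore $O(W_r(\log(W_rC)+\mathsf{len}_B)+\bits q)$, which is polynomial. Each step performs $O(W_r)$ arithmetic operations on such numbers, so the total running time is polynomial in $W_r$ and the supplied lengths.
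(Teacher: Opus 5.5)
Your proposal is correct and follows the paper's proof essentially step for step. You reduce the leading term and use gradedness to confine the divisors to $\B_r$ and the degrees to at most $r$. The strictly decreasing leading monomial gives at most $W_r$ steps, and the Gr\"obner property forces a zero remainder. The per-step factor $1+W_rC$ on the $\ell_1$ norm gives the magnitude bound, and a common denominator raised to the number of steps gives the rational bit bound. Your explicit denominator $\delta_q\delta^k$ is just a more detailed version of the paper's remark that a common input denominator raised to $W_r+1$ suffices.
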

\begin{proof}
At each step cancel the current leading term by a divisible leading
monomial of a basis element. Because the order is graded, every term
introduced has degree at most the cancelled degree, hence at most $r$.
Every chosen divisor belongs to $\B_r$. The leading monomial strictly
decreases, so at most $W_r$ steps occur. The remainder is zero: a nonzero
remainder in the ideal would have leading monomial divisible by a basis
leading monomial, contrary to reduction. Each cancellation multiplies the
current coefficient $\ell_1$ norm by at most $1+W_r C$. The quotient
coefficients are the cancellation coefficients, grouped into at most
$W_r$ terms, which gives the stated bound. For rational input, a common
denominator for all input coefficients raised to $W_r+1$ bounds the
denominators arising from monic division; the magnitude bound controls
the numerators. Both have polynomial bit length. Real division has the
same magnitude argument without any rationality assertion.
\end{proof}

The following integer-matrix estimate is
\cite[Lemma~6]{RW}.
We include its short proof.

\begin{lemma}[Nonzero spectrum of an integer PSD matrix]
\label{lem:integer-spectrum}
Let $A$ be an integer PSD matrix of order $t\ge1$, with entries of
magnitude at most $U\ge1$. Every positive eigenvalue is at least
$(tU)^{-t}$.
\end{lemma}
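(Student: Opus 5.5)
The plan is the standard determinant argument. Let $\lambda_1\ge\cdots\ge\lambda_k>0$ be the positive eigenvalues of $A$, with $k=\rank A\ge1$; if $A=0$ there is nothing to prove. Look at the characteristic polynomial $\det(\lambda I-A)=\lambda^{t-k}\prod_{i=1}^k(\lambda-\lambda_i)$. Its coefficients are integers, because $A$ is an integer matrix. The coefficient of $\lambda^{t-k}$ is $(-1)^k e_k(\lambda_1,\ldots,\lambda_k)=(-1)^k\prod_{i}\lambda_i$. This product is positive, so it is a positive integer and hence at least $1$. Equivalently, $e_k(\lambda)$ equals the sum of all $k\times k$ principal minors of $A$, which is a nonzero integer.

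The second step bounds the large eigenvalues from above. Since $A\succeq0$, every eigenvalue satisfies $\lambda_i\le\operatorname{tr}A\le tU$. One could also use the row-sum bound $\lambda_i\le\max_u\sum_v|A_{uv}|\le tU$.

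Combining the two steps gives
\[
 1\le\prod_{i=1}^k\lambda_i\le\lambda_k\,(tU)^{k-1},
\]
so $\lambda_k\ge(tU)^{-(k-1)}\ge(tU)^{-t}$, using $tU\ge1$ and $k\le t$.

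The only point needing care is the integrality claim: the coefficient of $\lambda^{t-k}$ must be a nonzero integer. I would justify it through the principal-minor expansion of the characteristic polynomial, which gives integrality directly. Nonvanishing then follows from the eigenvalue factorization, since the lower-order coefficients vanish and the $\lambda^{t-k}$ coefficient is $\pm\prod_i\lambda_i\ne0$. No other obstacle is expected.
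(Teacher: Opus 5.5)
Your proposal is correct and follows the paper's proof essentially verbatim: the product of the $k$ positive eigenvalues is, up to sign, the integer coefficient of $\lambda^{t-k}$ in $\det(\lambda I-A)$, and it is nonzero. Dividing this product by the other $k-1$ eigenvalues, each at most $tU$, gives $(tU)^{-(k-1)}\ge(tU)^{-t}$. The only cosmetic difference is that you bound the eigenvalues by the trace or by row sums, whereas the paper uses $\lVert A\rVert_2\le tU$.
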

\begin{proof}
If the rank is $s>0$, the product of the $s$ positive eigenvalues is,
up to sign, the coefficient of $\lambda^{t-s}$ in
$\det(\lambda I-A)$. It is a nonzero integer, so has magnitude at least
one. Every eigenvalue is at most $\norm A_2\le tU$. Dividing the product
bound by the other $s-1$ eigenvalues gives a lower bound
$(tU)^{-(s-1)}\ge(tU)^{-t}$. The rank-zero case is vacuous.
\end{proof}

The next lemma is a Boolean residual-correction counterpart of
\cite[Lemma~23]{Gribling23}. There the correction uses certificates for
$R^{|\alpha|}-x^\alpha$ from the ball constraint; here Boolean equations
give certificates for $1\pm x_T$ and preserve the prescribed even degree.

\begin{lemma}[Degree-preserving correction of a Boolean residual]
\label{lem:residual}
Let $t\ge2$ be even and let $e$ be a rational polynomial of degree at
most $t$. If $\varepsilon\ge\norm{\ml(e)}_1$, then $\varepsilon+e$
has a rational Gram SoS certificate from the Boolean equations alone
of degree at most $t$, constructible in polynomial bit time in the
degree-$t$ monomial count and the encodings of $e,\varepsilon$.
\end{lemma}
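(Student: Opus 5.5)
First reduce to the multilinear case. By Lemma~\ref{lem:boolean}, $e-\ml(e)=\sum_i u_i(x_i^2-x_i)$ with every term of degree at most $t$, and these multipliers are computable in polynomial bit time. It therefore suffices to certify $\varepsilon+\ml(e)\ge0$ from the Boolean equations in degree $t$ and then add this equality combination back. Write $\ml(e)=\sum_{T\in\J_t}e_Tx_T$. Since $\varepsilon\ge\norm{\ml(e)}_1=\sum_T|e_T|$, we have
\[
 \varepsilon+\ml(e)=\Bigl(\varepsilon-\sum_T|e_T|\Bigr)+\sum_T|e_T|\bigl(1+\operatorname{sgn}(e_T)\,x_T\bigr).
\]
The first summand is a nonnegative rational constant, which is a trivial Gram certificate of degree zero. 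So the task reduces to certifying $1+x_T\ge0$ and $1-x_T\ge0$ for each $|T|\le t$, within degree $t$ and modulo Boolean equations of degree at most $t$.

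For $1-x_T$, use $1-x_T\equiv(1-x_T)^2 \pmod{\I(\{0,1\}^n)}$, which holds since $x_T^2\equiv x_T$. For $1+x_T$, use $1+x_T\equiv 1+x_T^2$. The hard case is degree: if $|T|>t/2$, the squares $(1\pm x_T)^2$ have degree $2|T|>t$, which violates the degree bound. I will avoid this by writing $T=T_1\sqcup T_2$ with $|T_1|,|T_2|\le t/2$, which is possible because $t$ is even and $|T|\le t$. Then
\[
 1\pm x_T\equiv \tfrac12\bigl(x_{T_1}\pm x_{T_2}\bigr)^2+\tfrac12\bigl(1-x_{T_1}\bigr)^2+\tfrac12\bigl(1-x_{T_2}\bigr)^2 .
\]
To check this, multilinearize the right side: the first square gives $\tfrac12(x_{T_1}+x_{T_2}\pm2x_T)$, and each of the other two gives $\tfrac12(1-x_{T_i})$. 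The sum is $1\pm x_T$. Every square on the right has degree at most $2\cdot t/2=t$. The difference between the two sides has degree at most $t$, so Lemma~\ref{lem:boolean} lifts it exactly to Boolean multiples within degree $t$. The degenerate case $T=\varnothing$ gives the constant $1\pm1\ge0$ directly.

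Finally, assemble the certificate. Each of these squares is $v^\top Qv$ with $v=v_{t/2}$ and $Q=ww^\top$ for a $\{0,\pm1\}$ coefficient vector $w$, scaled by $\tfrac12|e_T|$. Summing over the at most $W$ sets $T$ gives a rational PSD Gram matrix, and the constant $\varepsilon-\norm{\ml(e)}_1$ is placed in the $(\varnothing,\varnothing)$ entry. All entries are sums of at most $W$ terms of the form $\pm\tfrac12|e_T|$, so their bit length is polynomial. The Boolean multipliers come from Lemma~\ref{lem:boolean}, applied to polynomials of degree at most $t$ whose coefficients are bounded by the input. The running time is therefore polynomial in the degree-$t$ monomial count and in $\bits e,\bits\varepsilon$. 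The only delicate point is the degree bound, which the splitting of $T$ handles; everything else is bookkeeping.
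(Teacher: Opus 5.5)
Your proposal is correct and follows the paper's proof essentially step for step. It uses the same decomposition $\varepsilon+\ml(e)=(\varepsilon-\sum_T|e_T|)+\sum_T|e_T|(1+\operatorname{sgn}(e_T)x_T)$, the same balanced split $T=T_1\sqcup T_2$ with $1\pm x_T\equiv\frac12\bigl((x_{T_1}\pm x_{T_2})^2+(1-x_{T_1})^2+(1-x_{T_2})^2\bigr)$, and the same exact lifting of these congruences and of $e-\ml(e)$ via Lemma~\ref{lem:boolean}. The only cosmetic slip is bounding the number of sets $T$ by $W$, which the paper ties to degree $2d$; here the relevant count is $|\J_t|\le\binom{n+t}{t}$, the degree-$t$ monomial count in the lemma's running-time claim.
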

\begin{proof}
For a squarefree monomial $m=x_T$, split $T=A\mathbin{\dot\cup}B$ with
$|A|,|B|\le\lceil |T|/2\rceil$, and set $a=x_A$, $b=x_B$. Modulo the
Boolean ideal,
\[
 1\pm m\equiv\tfrac12\big((a\pm b)^2+(1-a)^2+(1-b)^2\big).
\]
Both sides have degree at most $2\lceil |T|/2\rceil\le t$; the identity
also holds for $T=\varnothing$. Write $\ml(e)=\sum_T e_Tx_T$ and use
\[
 \varepsilon+\ml(e)=\varepsilon-\sum_T|e_T|
       +\sum_{T:e_T\ne0}|e_T|(1+\operatorname{sgn}(e_T)x_T).
\]
The constant and the weights are nonnegative rational numbers, and so
give rational PSD Gram terms. Lift the congruences and $e-\ml(e)$ by
Lemma~\ref{lem:boolean}. There are polynomially many terms and all
operations have polynomial bit complexity.
\end{proof}

\subsection{A bit bound for a supplied Boolean basis}
Interpolation on a finite set underlies the Buchberger--M\"oller
construction of its vanishing-ideal basis~\cite{MollerBuchberger82}.
The elementary determinant argument below gives the coefficient bit
bound for Boolean evaluations; the statement concerns size rather than
an algorithm for finding an unknown feasible set.

\begin{lemma}[Bit bound from Boolean evaluations]\label{lem:basisbits}
For nonempty $S\subseteq\{0,1\}^n$ and $r\ge1$, put
$N_r=\sum_{j=0}^{\min\{n,r\}}\binom nj$ and $W_r=\binom{n+r}{r}$.
Every coefficient of the reduced graded basis truncation $\B_r$ is
rational with numerator and denominator of $O(N_r\log(N_r+1))$ bits.
There are at most $W_r$ basis elements in this truncation, and their
total encoding length, including monomial supports, is polynomial in
$W_r$. This is a size bound, not a construction algorithm.
\end{lemma}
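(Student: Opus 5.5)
Since every element of $\B_r$ is characterized by linear algebra over the evaluation data, we derive the bound by Cramer's rule. Every $b\in\B_r$ is monic, with leading monomial $\mathrm{lm}(b)$, and its remaining terms lie on standard monomials smaller than $\mathrm{lm}(b)$ in the graded order. These standard monomials therefore have degree at most $\deg b\le r$. Since $x_i^2-x_i\in\I(S)$, the standard monomials are squarefree, so there are at most $N_r$ of them through degree $r$; let $\mathcal M_r$ denote this set. The tail of $b$ is $-\operatorname{NF}(\mathrm{lm}(b))$. It is the unique combination $\sum_{u\in\mathcal M_r}c_u u$ for which $\mathrm{lm}(b)-\sum c_u u$ vanishes on $S$.

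We make this a square linear system. Let $E$ be the $|S|\times|\mathcal M_r|$ matrix of evaluations $u(s)$, with entries in $\{0,1\}$. Uniqueness of normal forms, as in the proof of Theorem~\ref{thm:supplied-route}, makes the columns of $E$ linearly independent: a nonzero combination of standard monomials vanishing on $S$ would be an ideal element with a standard leading monomial. We then choose $k=|\mathcal M_r|\le N_r$ rows forming a nonsingular $k\times k$ $0/1$ submatrix $E'$. Existence suffices here, since the statement is only a size bound. The coefficient vector $c$ is the unique solution of $E'c=w$, where $w=(\mathrm{lm}(b)(s))_s$ is a $0/1$ vector. The full system then holds automatically, because a solution exists and is unique.

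Cramer's rule gives $c_u=\det E'_u/\det E'$, where both determinants are of $k\times k$ matrices with $0/1$ entries. Hadamard's inequality bounds each determinant by $k^{k/2}$, so numerators and denominators have $O(N_r\log(N_r+1))$ bits. The count of basis elements follows because the leading monomials of distinct reduced basis elements are distinct monomials of degree at most $r$, giving at most $W_r$ of them. Each element has at most $N_r+1$ terms, with supports of $O(n+r)$ bits each, and $n\le W_r$ and $N_r\le W_r$. The total length is therefore $O(W_r\cdot N_r\cdot N_r\log N_r)$ plus the support encodings, which is polynomial in $W_r$.

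The one delicate step is the claim that the tail lives entirely on standard monomials of degree at most $r$, including the constant $1$, so that the reduction to the full-column-rank system is legitimate. This follows from reducedness and the graded order, and the fact that $1$ is standard follows from $S\neq\varnothing$.
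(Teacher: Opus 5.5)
Your proposal is correct and follows essentially the same route as the paper: squarefree standard monomials, full column rank of the $0/1$ evaluation matrix on $S$, an existentially chosen nonsingular $k\times k$ submatrix, Cramer's rule, and distinct leading monomials for the count. The only cosmetic difference is that you bound the determinants by Hadamard's $k^{k/2}$ where the paper uses $t!$; both give the same $O(N_r\log(N_r+1))$ bit bound.
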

\begin{proof}
The standard monomials of degree at most $r$ are squarefree, because
$x_i^2$ belongs to the initial ideal. Let their number be $t\le N_r$.
They are linearly independent as functions on $S$: otherwise an ideal
polynomial would have a standard leading monomial. Their evaluation
matrix on $S$ thus has column rank $t$. Select existentially $t$ rows
giving an invertible $0/1$ matrix $E_0$.

For $b\in\B_r$, its monic leading term is a monomial $m$ and its tail
is a linear combination of these standard monomials, because the basis
is reduced and the order is graded. Write $b=m-\sum_u c_u u$.
Evaluating on the chosen rows yields $E_0c=v$, where $v$ is the $0/1$
evaluation vector of $m$. Cramer's rule expresses each $c_u$ as a ratio
of integer determinants of magnitude at most $t!$, with nonzero
denominator. The reduced numerator and denominator have
$O(t\log(t+1))$ bits.

Different reduced basis elements have distinct leading monomials;
there are at most $W_r$ monomials through degree $r$, and at most $W_r$
terms per element. Monomial supports have encoding length polynomial
in $n,r$, and $n+1,r+1\le W_r$. This proves the total size bound.
For $S=\varnothing$ the basis is $\{1\}$.
\end{proof}

\subsection{Reduced bases for Min-closed Boolean sets}
\label{app:min-basis}
For the Boolean Min-closed ideals considered here, the closure argument
below shows that the normal form of a monomial is either zero or a
monomial with coefficient one, and identifies the possible elements
of the reduced Gr\"obner basis.

\begin{lemma}[Reduced graded bases for Min-closed Boolean sets]
\label{lem:min-basis}
For nonempty Min-closed $S\subseteq\{0,1\}^n$, every element of its
reduced graded lexicographic basis is a squarefree monomial, a difference
of two squarefree monomials, or a Boolean polynomial $x_i^2-x_i$.
All nonzero coefficients are $1$ or $-1$.
\end{lemma}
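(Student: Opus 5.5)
The plan is to compute the normal form of every monomial explicitly, using the closure map $\cl$ of Section~\ref{sec:min}, now defined directly for the given set $S$. Min-closedness of $S$ guarantees that $\mathbf 1_{\cl(T)}\in S_T$ whenever $S_T\ne\varnothing$. Once all normal forms are known, the reduced basis elements follow at once: each is $m-\operatorname{NF}(m)$ for a minimal generator $m$ of the initial ideal.

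First, every monomial $x^\alpha$ with some exponent at least two is congruent to $x_{\operatorname{supp}\alpha}$ modulo $x_i^2-x_i$. Hence $x_i^2$ lies in the initial ideal, and the only minimal generators that are not squarefree are the $x_i^2$. Their basis elements are $x_i^2-\operatorname{NF}(x_i^2)=x_i^2-x_i$, because $x_i$ is standard or else is itself a vanishing monomial. For squarefree $x_T$ I use the evaluation formula from the proof of Lemma~\ref{lem:minspan}. On a feasible support $D$, $x_T$ evaluates to $0$ if $\cl(T)=\bot$, and to $\mathbf 1[\cl(T)\subseteq D]$ otherwise. Consequently, as functions on $S$, two squarefree monomials agree exactly when their closures coincide, and $x_T$ vanishes exactly when $\cl(T)=\bot$.

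For each finite closure class $C$, let $u_C$ be the grlex-least squarefree monomial with closure $C$. The key claim is that the standard squarefree monomials are exactly the $u_C$, and that $\operatorname{NF}(x_T)$ is $0$ if $\cl(T)=\bot$ and $u_{\cl(T)}$ otherwise.

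To prove this, note first that $x_T-u_{\cl(T)}$ (respectively $x_T$) lies in $\I(S)$ with leading monomial $x_T$ whenever $x_T\ne u_{\cl(T)}$. So every non-representative is nonstandard. Next, the $u_C$ are linearly independent on $S$, by the triangular matrix $\mathbf 1[C_i\subseteq C_j]$ used in Lemma~\ref{lem:minspan}. If some $u_C$ were a leading monomial of an ideal element, reducing that element's lower terms through the non-representatives would leave a nonzero combination of representatives vanishing on $S$. This contradicts independence, so each $u_C$ is standard. The remainder $u_{\cl(T)}$ (or $0$) is supported on standard monomials and is congruent to $x_T$. By the uniqueness of normal forms it is the normal form.

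The reduced basis consists of $m-\operatorname{NF}(m)$ over the minimal generators $m$ of the initial ideal. Combining the two cases above, each such element is $x_i^2-x_i$, or a squarefree $x_T$ with $\cl(T)=\bot$, or $x_T-u_C$. Every coefficient is $\pm1$.

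The main obstacle is the standardness of the representatives. That step must be stated carefully, because the Gröbner-basis initial ideal involves all polynomials of $\I(S)$, not just binomials. I will handle it by the ``reduce the tail and invoke independence'' argument just described. The reduction must stay inside a finite set of monomials bounded by the leading monomial, which works because grlex is a well-order.
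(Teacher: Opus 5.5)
Your proposal is correct and follows essentially the paper's argument: closures sort monomials into evaluation classes, and the triangular independence of the class functions shows that the grlex-least representatives $u_C$ are exactly the standard monomials (your ``reduce the lower terms to representatives'' step is the paper's observation that the coefficients of an identity must sum to zero within each class). The reduced elements are then read off as $m-\operatorname{NF}(m)$. One small fix: $x_i^2-x_i$ is a basis element because minimality of the generator $x_i^2$ forces its divisor $x_i$ to be standard. Your alternative ``or else $x_i$ is a vanishing monomial'' would give $\operatorname{NF}(x_i^2)=0$, and it also misses the case where $x_i$ is nonstandard but nonvanishing. In both of these cases $x_i^2$ is not a minimal generator, so the conclusion is unaffected.
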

\begin{proof}
We give the closure proof of this structural description for
Min-closed Boolean ideals; see also~\cite{Mastrolilli}. Consider all feasible
supports, ordered by nondecreasing cardinality. The functions
$D\mapsto\mathbf1[C\subseteq D]$, one for each feasible support $C$,
are linearly independent: their evaluation matrix on these same supports
is upper triangular with diagonal entries one. Every monomial is either
zero on $S$, or agrees on $S$ with exactly one such function, namely
the one indexed by the closure of its support.

In every nonzero evaluation class choose the least monomial $u_C$ in the
fixed graded order. It is squarefree, since multilinearization gives the
same function with smaller degree whenever a square occurs. Every
nonchosen monomial $m$ is a leading monomial of an identity: use $m$ if
it is zero on $S$, and $m-u_C$ otherwise. Conversely a chosen $u_C$
cannot be the leading monomial of an identity. Independence of the
class functions would require another monomial of its class with
nonzero coefficient in the identity; such a monomial is larger than
$u_C$, contradicting leadingness. The standard monomials are therefore
exactly the $u_C$.

For a reduced basis element with leading monomial $m$, all other terms
are standard. Independence forces its tail to be zero in the zero class,
and to be $-u_C$ otherwise. If the minimal leading generator $m$ contains
a square, then $x_i^2$ divides $m$ and is itself in the initial ideal
because $x_i^2-x_i$ is an identity. Minimality gives $m=x_i^2$.
Its proper divisor $x_i$ must be standard, so its basis element is
$x_i^2-x_i$. Otherwise $m$ is squarefree and its tail is squarefree.
This proves every case. For $S=\varnothing$ the reduced basis is $\{1\}$.
\end{proof}

\section{Rational linear algebra and Boolean moment geometry}
\label{app:linear}
Rational elimination, determinant estimates and Schur complements are
classical tools. We give direct proofs of the precise consequences used
by the primal algorithm: polynomial bit bounds, a uniform radius for an
integral simplex, Boolean moment bounds and a rational negative witness
for a failed PSD test. These bounds apply to the free-coordinate
formulation of the moment SDP.

The following rational-elimination fact is classical; see Edmonds'
algorithm as stated in~\cite{GLS}.

\begin{lemma}[Bit bounds for rational elimination]\label{lem:elimination}
A rational linear system of total encoding length $H$ can be solved,
rank-tested and parametrized in polynomial bit time and output length.
The free variables can be chosen among its original coordinates.
\end{lemma}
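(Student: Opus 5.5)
The plan is to reduce the statement to exact fraction-free Gaussian elimination (Bareiss's variant of Edmonds' algorithm), where every intermediate quantity is a minor of the input matrix. The whole argument rests on the Hadamard bound for such minors.

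\emph{Integer normalization.} Clear denominators row by row. Multiply each equation by the product of its denominators; the resulting integer system has encoding length $O(H^2)$ at worst, and in fact $O(H)$ per row up to the logarithmic overhead of the dimensions. Write it as $Ax=c$ with $A\in\mathbb Z^{k\times p}$. Every entry of $[A\ c]$ has at most $H$ bits, and $k,p\le H$.

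\emph{Elimination with minor control.} Run fraction-free elimination on $[A\ c]$, choosing pivots by column order. With that choice the pivot columns form the lexicographically first column basis of $A$, and the non-pivot columns are original coordinates, which will serve as free variables.

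\begin{itemize}
\item Each entry produced at stage $k'$ equals a $(k'+1)\times(k'+1)$ minor of $[A\ c]$, and each division is exact. This is Sylvester's identity, the standard invariant.
\item By Hadamard's inequality, such a minor has absolute value at most $(\sqrt{p+1}\,2^{H})^{p+1}$, hence $O(p(H+\log p))$ bits.
\item There are $O(kp\min(k,p))$ arithmetic operations, each on numbers of this size, so the running time is polynomial.
\end{itemize}

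\emph{Reading off the outputs.}

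\begin{itemize}
\item The rank is the number of pivots.
\item Inconsistency is detected by a nonzero last-column entry in a zero row.
\item Otherwise, Cramer's rule on the nonsingular pivot block $A_{I,P}$ gives the parametrization. It expresses $x_P=A_{I,P}^{-1}(c_I-A_{I,F}x_F)$, whose rational coefficients are ratios of minors. This is exactly the formula used in step 2 of the proof of Theorem~\ref{thm:main}.
\end{itemize}

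Numerators and denominators thus have polynomially bounded bit length, so the output length is polynomial in $H$. A kernel basis, useful elsewhere, comes from the same formula by setting one free variable to $1$ and the rest to $0$.

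\emph{Main obstacle.} The only delicate point is certifying that intermediate entries do not blow up. Naive rational elimination can double bit lengths at each stage, which is why the fraction-free invariant (every entry is a minor) is the step to verify carefully. I would cite it from \cite{GLS} (Edmonds' theorem) rather than reprove Sylvester's identity, and prove only the Hadamard bit estimate explicitly.
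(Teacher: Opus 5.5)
Your proposal is correct and follows essentially the paper's route: clear denominators, bound the minors of the resulting integer matrix, observe that elimination entries are (ratios of) such minors, and obtain the parametrization from the nonsingular pivot block by Cramer's rule, with the non-pivot original coordinates as free variables. Using Bareiss's fraction-free variant and Hadamard's bound, rather than the paper's elimination with reduced fractions and its cruder $s!\,2^{sO(H+1)}$ estimate, is only cosmetic. Your aside that the cleared system has length $O(H)$ per row is not true in general, but the argument only uses the $O(H)$-bit bound per entry.
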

\begin{proof}
Clear denominators by their product. Integer entries then have $O(H+1)$
bits. Every $s\times s$ minor has absolute value at most $s!2^{sO(H+1)}$.
Gaussian elimination with reduced fractions has entries given by ratios
of such minors. There are polynomially many arithmetic operations on
polynomial-length integers. Choosing pivot columns and solving the
nonsingular pivot system gives the stated parametrization; Cramer's rule
gives the same bit bound for its coefficients. The zero-rank case needs
no inversion.
\end{proof}

A barycentre argument of this type is used
in~\cite[Lemma~6.2.6]{GLS} to bound an inscribed radius from facet
complexity. The following specialization instead uses integral $0/1$
vertices and proves the explicit radius needed here by determinant
bounds.

\begin{lemma}[Uniform ball inside an integral simplex]\label{lem:ball}
Let $p\ge1$, let $v^0,\ldots,v^p\in\{0,1\}^p$ be affinely independent,
and put $\Delta=\conv\{v^0,\ldots,v^p\}$ and
$U=[v^1-v^0\ \cdots\ v^p-v^0]$. Define
\begin{equation}\label{eq:rho}
 z^\circ=\frac1{p+1}\sum_{j=0}^p v^j,\qquad
 \rho_* =\frac{1}{2p^2(p+1)p!}.
\end{equation}
Then
\begin{equation}\label{eq:ball-inclusion}
 B(z^\circ,\rho_*)\subseteq\Delta,
 \qquad \log(1/\rho_*)=O(p\log(p+1)).
\end{equation}
The lower bound $\rho_*$ is computable from $p$ alone in polynomial
time and has polynomial rational encoding length.
\end{lemma}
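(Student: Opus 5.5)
We work in barycentric coordinates. Since the $v^j$ are affinely independent, $U$ is an invertible integer matrix with entries in $\{-1,0,1\}$, so $|\det U|\ge1$. Every $z\in\R^p$ then has a unique barycentric representation $z=\sum_{j=0}^p\lambda_j(z)v^j$ with $\sum_j\lambda_j=1$. The coordinates $\lambda_1,\ldots,\lambda_p$ are given by $\lambda_{1:p}(z)=U^{-1}(z-v^0)$, and $\lambda_0=1-\sum_{i\ge1}\lambda_i$. The centre $z^\circ$ has every $\lambda_j(z^\circ)=1/(p+1)$. Membership $z\in\Delta$ is equivalent to $\lambda_j(z)\ge0$ for all $j$. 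It therefore suffices to show that a perturbation $w$ with $\norm w\le\rho_*$ changes every barycentric coordinate by strictly less than $1/(p+1)$ in absolute value.

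The key estimate bounds the entries of $U^{-1}$. By Cramer's rule, $(U^{-1})_{ij}=\pm\det U_{(ji)}/\det U$, where $U_{(ji)}$ is a $(p-1)\times(p-1)$ minor with entries in $\{-1,0,1\}$. Expanding by Leibniz gives $|\det U_{(ji)}|\le(p-1)!$. Combined with $|\det U|\ge1$, this yields $|(U^{-1})_{ij}|\le(p-1)!$. Hence, for $i\ge1$,
\[
 |\lambda_i(z^\circ+w)-\lambda_i(z^\circ)|=|(U^{-1}w)_i|\le\sum_k|(U^{-1})_{ik}||w_k|\le p\,(p-1)!\,\norm w=p!\,\norm w.
\]
For $\lambda_0$ the change is minus the sum of the other $p$ changes, so it is bounded by $p\cdot p!\,\norm w$. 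With $\norm w\le\rho_*=1/(2p^2(p+1)p!)$, every change is at most $1/(2p(p+1))<1/(p+1)$. So all coordinates stay positive and $B(z^\circ,\rho_*)\subseteq\Delta$. The constant in $\rho_*$ leaves slack, so a crude bound such as $|w_k|\le\norm w$ suffices and no Cauchy--Schwarz refinement is needed.

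The remaining assertions are routine. We have $\log(1/\rho_*)=1+2\log p+\log(p+1)+\log p!$, and $\log p!\le p\log p$, so $\log(1/\rho_*)=O(p\log(p+1))$. The quantity $\rho_*$ is the reciprocal of an integer computed by $O(p)$ multiplications of integers with $O(p\log(p+1))$ bits. This gives polynomial time and polynomial encoding length depending on $p$ alone.

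The only real obstacle is the entry bound on $U^{-1}$, which relies on integrality through $|\det U|\ge1$ and on the Hadamard/Leibniz bound for $\pm1$ minors. I expect the rest to go through directly once the change of $\lambda_0$ is also controlled.
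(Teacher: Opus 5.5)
Your proposal is correct and follows essentially the same route as the paper: you use barycentric coordinates via $U^{-1}$, obtain the cofactor bound $|(U^{-1})_{ij}|\le(p-1)!$ from $|\det U|\ge1$, and show that a perturbation of norm at most $\rho_*$ moves every barycentric coordinate by at most $1/(2p(p+1))$. The paper bundles these coordinate changes into the single estimate $\norm{\delta}_1\le p^2(p-1)!\norm{h}_\infty$ instead of treating $\lambda_0$ separately, and it handles $p=1$ explicitly through the empty determinant; both differences are only cosmetic.
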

\begin{proof}
The matrix $U$ is integral with entries in $\{-1,0,1\}$ and
$|\det U|\ge1$. Each entry of $U^{-1}$ is a signed cofactor divided by $\det U$.
In the determinant expansion of a $(p-1)\times(p-1)$ matrix with
entries in $\{-1,0,1\}$ there are $(p-1)!$ products of magnitude at
most one. Hence
\begin{equation}\label{eq:inverse-bound}
 |(U^{-1})_{ij}|\le(p-1)!.
\end{equation}
For $p=1$, use the empty determinant $1$ and $0!=1$.

Take any $h\in\R^p$ with $\norm h_2\le\rho_*$, and put
$\delta=U^{-1}h$. Then
\begin{align}
 \norm\delta_1
 &\le p^2(p-1)!\norm h_\infty
 \le p^2(p-1)!\norm h_2\label{eq:delta-bound}\\
 &\le p^2(p-1)!\rho_*
 =\frac{1}{2p(p+1)}\le\frac{1}{2(p+1)}.\nonumber
\end{align}
Define the $p+1$ barycentric coefficients
\[
 \lambda_i=\frac1{p+1}+\delta_i\quad(1\le i\le p),\qquad
 \lambda_0=\frac1{p+1}-\sum_{i=1}^p\delta_i.
\]
Their sum is one. Each is at least $1/(2(p+1))$, by
the bound on $\norm\delta_1$. In particular, they are nonnegative.
Using $U\delta=h$, we get
\[
 \sum_{j=0}^p\lambda_jv^j
 =\frac1{p+1}\sum_{j=0}^p v^j
     +\sum_{i=1}^p\delta_i(v^i-v^0)
 =z^\circ+h.
\]
Thus the entire closed ball lies in $\Delta$.

Finally $p!\le p^p$ implies
\[
 \log(1/\rho_*)
 \le1+2\log p+\log(p+1)+p\log p=O(p\log(p+1)).
\]
Computing $p!$ requires $p$ integer multiplications on integers of
$O(p\log(p+1))$ bits. The formula for $\rho_*$ therefore has the
claimed construction cost and encoding bound.
\end{proof}

The next bound is an elementary PSD consequence of the Boolean moment
matrix formulation; see~\cite{Laurent2003} for that
formulation. We give the short argument for our moment convention.

\begin{lemma}[Bounds on Boolean pseudo-moments]\label{lem:outer}
Every $y\in S_{2d}$ satisfies $|y_T|\le1$ for every $T\in\J_{2d}$.
Consequently, for $p\ge1$,
\begin{equation}\label{eq:outer}
 \widehat K_{2d}\subseteq[-1,1]^p\subseteq B(0,p).
\end{equation}
For nonempty $S$, the set $\widehat K_{2d}$ of~\eqref{eq:F} is nonempty,
compact, and convex.
\end{lemma}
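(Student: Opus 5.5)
The bound $|y_T|\le1$ comes from $2\times2$ principal minors of the moment matrix $M_d(y)$, combined with the Boolean equalities that force $y$ to be a multilinear moment vector. The remaining assertions then follow from the free-coordinate parametrization and the evaluation vectors.

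\emph{Diagonal entries.} Take $y\in S_{2d}$ and any $A\in\J_d$. The diagonal entry of $M_d(y)$ at $A$ is $y_A$, so $y_A\ge0$. The $2\times2$ principal minor on rows $\{\varnothing,A\}$ is $\begin{pmatrix}1&y_A\\ y_A&y_A\end{pmatrix}$, using $y_{\varnothing\cup A}=y_A$ and $y_{A\cup A}=y_A$ in the squarefree indexing. Its PSD-ness gives $y_A\ge y_A^2$, hence $0\le y_A\le1$.

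\emph{Arbitrary $T$.} For $T\in\J_{2d}$, split $T=A\mathbin{\dot\cup}B$ with $|A|,|B|\le d$; this is possible because $|T|\le2d$. The principal submatrix on rows $\{A,B\}$ is $\begin{pmatrix}y_A&y_T\\ y_T&y_B\end{pmatrix}$. Its PSD-ness gives $y_T^2\le y_Ay_B\le1$. If $A=B$ is forced (only when $T=A=\varnothing$), then $y_\varnothing=1$ directly.

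The only point needing care is the convention check. The moment vector is indexed by $\J_{2d}$ and $M_{d}(y)=(y_{A\cup B})$ already uses unions, so these identifications are built into~\eqref{eq:moment}. No appeal to~\eqref{eq:equalities} is needed beyond that. I expect this bookkeeping to be the only possible obstacle.

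\emph{Consequences.} By~\eqref{eq:Phi}, each free coordinate satisfies $z_i=\Phi(z)_{T_i}$, which is an original moment of a point of $K_{2d}\subseteq S_{2d}$. Hence $\widehat K_{2d}\subseteq[-1,1]^p$. Since $\norm z_2\le\sqrt p\,\norm z_\infty\le p$, this gives~\eqref{eq:outer}.

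Nonemptiness follows from~\eqref{eq:actualvertices} and~\eqref{eq:hull}: for $s\in S$, the vector $v(s)$ lies in $\widehat K_{2d}$. Closedness and convexity hold because $\widehat K_{2d}$ is an intersection of preimages of the closed convex PSD cone under the affine maps $F_j$ of~\eqref{eq:F}. Boundedness was just shown, so compactness follows by Heine--Borel.
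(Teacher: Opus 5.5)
Your proof is correct and follows essentially the same route as the paper: the $2\times2$ principal minor on $\{\varnothing,A\}$ gives $0\le y_A\le1$, the minor on a disjoint split $T=A\cup B$ gives $y_T^2\le y_Ay_B\le1$, and the remaining claims follow because the free coordinates are original moments, the evaluation points $v(s)$ lie in $\widehat K_{2d}$, and the $F_j$ are affine. As you observe yourself, the Boolean equalities mentioned in your opening sentence are never used, since the indexing by $\J_{2d}$ and the union convention in $M_d(y)$ already build in multilinearity.
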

\begin{proof}
For $\varnothing\ne A\in\J_d$, the principal submatrix of $M_d(y)$
indexed by $\varnothing,A$ is
\[
 \begin{pmatrix}1&y_A\\y_A&y_A\end{pmatrix}\succeq0.
\]
Its determinant gives $y_A-y_A^2\ge0$, so $0\le y_A\le1$.
This also holds for $A=\varnothing$ by normalization.
For $T\in\J_{2d}$ choose a disjoint decomposition $T=A\cup B$
with $|A|,|B|\le d$. Positive semidefiniteness gives
\[
 |y_T|^2=|M_d(y)_{A,B}|^2
 \le M_d(y)_{A,A}M_d(y)_{B,B}=y_Ay_B\le1.
\]
The same inequality is valid when the two indices coincide.
Each $z_i$ is a selected original moment $y_{T_i}$, so
$\norm z_2\le\sqrt p\le p$. This proves~\eqref{eq:outer}.
The PSD cone is closed and convex and the maps $F_j$ are affine;
therefore $\widehat K_{2d}$ is closed and convex. It is bounded by
\eqref{eq:outer} and nonempty because it contains every $v(s)$.
It is consequently compact.
\end{proof}

Polynomial-time testing of rational positive semidefiniteness by
Gaussian elimination is described in~\cite[proof of Theorem~9.3.30,
pp.~294--295]{GLS}. 
We include the elimination proof and its witness encoding bound.
\begin{lemma}[Exact PSD test with a rational negative witness]
\label{lem:psd}
Given a rational symmetric $t\times t$ matrix $G$, one can in polynomial
bit complexity either certify $G\succeq0$ or find $w\in\Q^t$ with
$w^\top Gw<0$. The witness has polynomial encoding length.
\end{lemma}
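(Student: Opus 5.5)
The plan is symmetric Gaussian elimination (an $LDL^\top$ / Schur-complement scheme) carried out in exact rational arithmetic, with the negative witness read off from the first pivot failure. Process the indices $1,\ldots,t$ in order, maintaining the current Schur complement $G^{(k)}$ together with a rational unit upper-triangular matrix $R_k$. The invariant is $R_k^\top G R_k=\operatorname{diag}(d_1,\ldots,d_{k})\oplus G^{(k)}$ with $d_1,\ldots,d_k\ge0$. At step $k+1$ look at the leading diagonal entry $a$ of $G^{(k)}$, and split into three cases.
\begin{enumerate}[label=\textup{(\alph*)}]
\item If $a>0$, eliminate the rest of that row and column. This replaces $G^{(k)}$ by its Schur complement $G^{(k)}_{22}-a^{-1}g g^\top$ and updates $R$ by a unit triangular factor.
\item If $a<0$, return $w=R_ke_{k+1}$. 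Then $w^\top Gw=a<0$.
\item If $a=0$ but some entry $g_j\ne0$ in that row, take $u=\lambda e_{k+1}+e_j$ in the current coordinates. This gives $u^\top G^{(k)}u=2\lambda g_j+G^{(k)}_{jj}$. Choose the rational $\lambda=-(|G^{(k)}_{jj}|+1)/(2g_j)$, which makes the form at most $-1<0$, and return $w=R_ku$.
\end{enumerate}
If $a=0$ and the row vanishes, simply skip the index. If all steps finish, then $R^\top GR$ is diagonal with nonnegative entries and $R$ is invertible. Hence $G\succeq0$, and the congruence is itself the certificate.

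Correctness is immediate from the congruence invariant. Since $R_k$ is invertible, $w^\top Gw=u^\top(R_k^\top GR_k)u$. The witness $u$ is supported on coordinates $\ge k+1$, so only the Schur-complement block contributes.

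The main obstacle is the bit bound. I would not argue step-by-step growth of the Schur complements, since that could double lengths. Instead I would use the classical fact (Edmonds, as in Lemma~\ref{lem:elimination}) that every entry of $G^{(k)}$ is a ratio of two minors of $G$. Concretely, after pivoting on a set $P$ of positive pivots, the entry $(i,j)$ equals $\det G_{P\cup i,P\cup j}/\det G_{P,P}$. Skipped zero rows do not affect this formula, because those indices are simply excluded from $P$. Similarly, the entries of $R_k$ are cofactor ratios of $G_{P,P}$, so by Cramer's rule they are ratios of minors. Clearing denominators as in Lemma~\ref{lem:elimination}, every minor has $O(t(H+\log t))$ bits, where $H$ is the encoding length of $G$. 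The arithmetic has $O(t^3)$ operations on such numbers, and we reduce fractions after each one. The final witness costs one more matrix–vector product and a constant number of operations for $\lambda$, so its encoding length is polynomial in $t$ and $\bits G$.
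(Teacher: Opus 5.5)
Your proposal is correct and is essentially the paper's proof. It uses symmetric elimination with the same three witness cases (negative diagonal, zero diagonal with a nonzero off-diagonal entry via a two-coordinate vector, deletion of a zero row) and the same bit bound via the bordered-minor formula $\det G_{P\cup i,P\cup j}/\det G_{PP}$ for Schur entries. Your witness $w=R_ku$ coincides with the paper's recovery $w_P=-\bar G_{PP}^{-1}\bar G_{PQ}w_Q$, and the remaining differences are cosmetic: processing indices in order without permuting, and using $|G^{(k)}_{jj}|+1$ instead of $G_{jj}+1$.
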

\begin{proof}
Use symmetric elimination. The empty matrix is PSD. If a diagonal
entry $G_{ii}$ is negative, $w=e_i$ is a witness. If $G_{ii}=0$ and
$a=G_{ij}\ne0$, set
\[
 w_j=1,\qquad w_i=-\frac{G_{jj}+1}{2a},\qquad w_u=0\ (u\notin\{i,j\}).
\]
Then $w^\top Gw=G_{jj}+2aw_i=-1$. A zero diagonal whose entire row
is zero can be deleted, restoring a zero coordinate on return.
If none of these cases applies and the matrix is nonempty, a positive
diagonal entry can be permuted to the first position. Write
\[
 G=\begin{pmatrix}a&b^\top\\ b&C\end{pmatrix},\qquad
 a>0,\qquad H=C-bb^\top/a.
\]
The identity
\begin{equation}\label{eq:schur}
 (s,v)^\top G(s,v)
 =a\left(s+\frac{b^\top v}{a}\right)^2+v^\top Hv
\end{equation}
shows that $G$ is PSD exactly when $H$ is PSD. A negative witness $v$
for $H$ lifts to $(-b^\top v/a,v)$ for $G$. Recurse on $H$.
There are at most $t$ stages and polynomially many rational operations.

To bound the bits, suppose each input entry has length at most $H_0$.
Clear denominators by their positive product, producing an integer
matrix $\bar G$ with entries of $H_1=O(t^2H_0+1)$ bits.
This positive scaling preserves the sign of quadratic forms.
Every minor has $O(t(H_1+\log(t+1)))$ bits, using the determinant
bound $s!2^{sH_1}$ for an $s\times s$ minor.
If $P$ denotes the indices of the positive pivots already eliminated,
then every remaining Schur entry has the form
\[
 \bar G_{ij}-\bar G_{iP}\bar G_{PP}^{-1}\bar G_{Pj}
 =\frac{\det\begin{pmatrix}\bar G_{PP}&\bar G_{Pj}\\
                             \bar G_{iP}&\bar G_{ij}\end{pmatrix}}
        {\det\bar G_{PP}}.
\]
The denominator is nonzero; all reduced fractions have polynomial
bit length. Deleted zero rows do not affect the formula for the
remaining indices. The terminal witness has at most two nonzero
coordinates, of polynomial bit length by the explicit cases above.
If $Q$ is the complement of $P$, insert zeros for deleted coordinates
in $w_Q$ and recover all pivot coordinates at once by
\[
 w_P=-\bar G_{PP}^{-1}\bar G_{PQ}w_Q.
\]
The cofactor formula and the minor bound give polynomial bit length
for this vector. Equation~\eqref{eq:schur} applied successively
verifies that it is a negative witness for $\bar G$ and hence for $G$.
Exact elimination with fractions reduced at each step therefore has
polynomial bit cost, including recovery of the witness.
\end{proof}

\section{The ellipsoid method and exact-feasible optimization}
\label{app:ellipsoid}
This appendix records the standard rational central-cut ellipsoid theorem
of GLS~\cite[Theorem~3.2.1]{GLS}, and proves the exact-feasible optimization
consequence used in the paper. The rational ellipsoid implementation itself
is the classical external algorithmic theorem; its statement, accepted-point
convention and precise reference are given below. All subsequent reductions
and radius-dependent complexity estimates are proved here. The argument is
also valid when the centre of the inner ball is not supplied.
In this appendix $p$ is the ambient dimension, $\mathsf{len}$ the oracle-instance
length, and $\tau$ the requested accuracy. Other letters are local.

\subsection{Preliminaries and the GLS ellipsoid theorem}
\label{ell:sec:preliminaries}

\paragraph{Geometry and encoding.}
All balls are closed Euclidean balls:
\[
 B(a,\varrho):=\{x\in\R^p:\norm{x-a}_2\le \varrho\}.
\]
For a closed set $C$, write
\[
 C^{+\delta}:=\{y\in\R^p:\dist(y,C)\le\delta\},
 \qquad \varnothing^{+\delta}:=\varnothing.
\]
For a rational positive definite matrix $Q$ and a rational vector $z$, let
\[
 E(Q,z):=\{x:(x-z)^\top Q^{-1}(x-z)\le1\}.
\]
Volume means $p$-dimensional Lebesgue measure. All logarithms are to base two.
The notation $\bits{q}$ denotes the binary encoding length of a rational number
(numerator and positive denominator), with the corresponding sum convention
for vectors and matrices. Arithmetic is in the Turing bit model.

Let $\mathsf{len}$ denote the length of the data specifying the separation oracle for the
instance. An oracle-polynomial algorithm makes polynomially many oracle calls,
with polynomially bounded query lengths, and otherwise performs polynomially
many bit operations. If the oracle itself runs in time polynomial in $\mathsf{len}$ and
the query length, this gives an ordinary polynomial-time algorithm, with $\mathsf{len}$
included among the parameters.

We use strong rational separation as in Definition~\ref{def:strong}.

\paragraph{Numerical conventions.}
To make a running-time bound involving logarithms unambiguous, the main lemma
uses supplied bounds and accuracy of the form
\begin{equation}\label{ell:eq:dyadic}
 R=2^{k_1},\qquad \rho=2^{-k_2},\qquad
 \tau=2^{-k_3},\qquad k_1,k_2,k_3\in\mathbb Z_{\ge0},
\end{equation}
in ordinary binary rational encoding. Their encoding lengths are $O(1+k_1)$,
$O(1+k_2)$, and $O(1+k_3)$, respectively. This entails no loss of accuracy or
geometric validity. Given arbitrary positive rational inputs $R_0,\rho_0,
\tau_0$, replace them by
\begin{align*}
 R&=2^{\lceil\log\max\{1,R_0\}\rceil},\\
 \rho&=2^{\lfloor\log\min\{1,\rho_0\}\rfloor},\\
 \tau&=2^{\lfloor\log\min\{1,\tau_0\}\rfloor}.
\end{align*}
Thus $R\ge R_0$, $\rho\le\rho_0$, and $\tau\le\tau_0$.
These powers are found by integer comparisons and bit lengths, without
evaluating real logarithms. For arbitrarily encoded rational inputs, the time
to read and process those inputs must also be included. An arbitrary real
accuracy $\tau_0>0$ is handled by supplying a rational accuracy no larger
than it. In the main text $\tau_0=\varepsilon$ is the accuracy requested
in Theorem~\ref{thm:main}.

\begin{theorem}[Rational central-cut ellipsoid method; GLS]\label{ell:thm:GLS}
Let $p\ge2$, let $C\subseteq B(0,R)$ be closed and convex, and let $R\ge1$
and $0<\nu<1$ be rational. Suppose an oracle, given $y\in\Q^p$ and rational
$\delta>0$, either declares $y\in C^{+\delta}$ or returns $h\in\Q^p$ with
$\norm{h}_\infty=1$ satisfying
\begin{equation}\label{ell:eq:GLS-oracle}
 h^\top x\le h^\top y+\delta\qquad(x\in C).
\end{equation}
There is an oracle-polynomial rational algorithm returning either
\begin{enumerate}[label=\textup{(\roman*)}]
\item a rational point $z\in C^{+\nu}$ that is a queried ellipsoid centre
      accepted by the oracle; or
\item a rational ellipsoid $E(Q,z)$ containing $C$ with
      $\vol(E(Q,z))\le\nu$.
\end{enumerate}
The number of oracle calls is
\[
 O\!\left(p^2\log(2R)+p\log(1/\nu)\right),
\]
and query lengths and other bit costs are polynomial in
$p,\bits R,\bits\nu$ and the oracle input/output lengths.
\end{theorem}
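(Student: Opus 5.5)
This theorem is the classical rational central-cut ellipsoid method of Gr\"otschel--Lov\'asz--Schrijver, so my plan is to follow the proof of \cite[Theorem~3.2.1]{GLS}. I would only adapt its conventions to the weak-separation oracle in~\eqref{ell:eq:GLS-oracle} and to the accepted-centre output. I will run a shallow-rounded central-cut ellipsoid iteration on rationals, starting from $E_0=E(R^2I,0)\supseteq B(0,R)\supseteq C$.

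\textbf{Main loop.} At step $k$, with current rational ellipsoid $E(Q_k,z_k)$, query the oracle at $y=z_k$ with a tolerance $\delta_k$ chosen below. If the oracle accepts, return $z_k$ as outcome (i). Since $\delta_k\le\nu$, we then have $z_k\in C^{+\delta_k}\subseteq C^{+\nu}$.

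Otherwise the oracle returns $h$ with $\norm h_\infty=1$. The weak cut $h^\top x\le h^\top z_k+\delta_k$ contains $C$. Form the exact Löwner--John update of the half-ellipsoid,
\[
 z'=z_k-\tfrac1{p+1}\tfrac{Q_kh}{\sqrt{h^\top Q_kh}},\qquad
 Q'=\tfrac{p^2}{p^2-1}\Big(Q_k-\tfrac{2}{p+1}\tfrac{Q_khh^\top Q_k}{h^\top Q_kh}\Big).
\]
Then round $z'$ and $Q'$ to a fixed number $\beta$ of binary digits. To absorb both the rounding and the $\delta_k$ slack, blow the rounded matrix up by a factor $\zeta=1+1/(4p^2)$. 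The standard GLS estimates then show three things. First, the new ellipsoid still contains $C$, provided $\delta_k$ is small compared with $\sqrt{h^\top Q_kh}$ and $2^{-\beta}$ is small relative to the smallest eigenvalue. Second, the volume shrinks by at least a factor $e^{-1/(5p)}$ per step. Third, the eigenvalues of $Q_k$ stay within $[R^22^{-4k},\,R^22^{4k}]$, which keeps all encodings polynomial.

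\textbf{Step count and termination.} Since $\vol E_0\le(2R)^p$, after $K=\lceil 5p(p\log(2R)+\log(1/\nu))\rceil$ non-accepting steps the volume is below $\nu$. At that point I return $E(Q_K,z_K)$ as outcome (ii). This gives the stated $O(p^2\log(2R)+p\log(1/\nu))$ oracle bound.

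The choices of $\beta$ and $\delta_k$ are polynomial in $p,\bits R,\bits\nu$ and $K$. For $\delta_k$ I would take $\nu$ times a lower bound on $\sqrt{h^\top Q_kh}$, which is available from the eigenvalue bound. All queries $z_k$ are then rational with $O(\beta+\log R)$ bits per coordinate.

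\textbf{Main obstacle.} The delicate step is the rounding and weak-cut error analysis. I need to show that the rounded, blown-up ellipsoid still contains the weak half-ellipsoid, that it stays positive definite with controlled eigenvalues, and that it still loses a constant fraction of volume. I would follow GLS Lemmas~3.2.3--3.2.8 essentially verbatim, taking the additional slack $\delta_k$ from the weak cut into the containment inequality. Because this is the cited classical theorem, the intended proof may simply invoke \cite[Theorem~3.2.1]{GLS}. My plan records only the conventions checked here: $p\ge2$ as GLS requires, the normalization $\norm h_\infty=1$, and the fact that an accepted point is a queried centre.
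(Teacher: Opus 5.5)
Your route is the paper's. The paper gives no independent proof of this statement; it restates \cite[Theorem~3.2.1]{GLS} and points to the rounding, containment and volume lemmas there, and those are what you propose to reproduce. The one place where you depart from GLS is wrong as written: the tolerance $\delta_k=\nu L_k$, where $L_k\ge R2^{-2k}$ is your lower bound on $\sqrt{h^\top Q_kh}$. It fails in two ways.

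First, $\delta_k\le\nu$ does not follow. At $k=0$ one has $\sqrt{h^\top Q_0h}=R\norm h_2$, so $L_0$ may equal $R$ and $\delta_0=\nu R$. An acceptance then only yields $z_0\in C^{+\nu R}$, which breaks alternative~(i) whenever $R>1$.

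Second, the relative depth $\alpha_k=\delta_k/\sqrt{h^\top Q_kh}$ is bounded only by $\nu$, which may be close to $1$. Normalize $E(Q_k,z_k)$ to the unit ball and $h$ to $e_1$. A point of the unit sphere with $x_1=\alpha$ then has value $1+2(p+1)\alpha(1+\alpha)/p^2$ in the quadratic form of the exact central-cut ellipsoid. Hence your blow-up $\zeta=1+1/(4p^2)$ absorbs only $\alpha_k\lesssim1/(8(p+1))$.

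Concretely, take $Q_0=R^2I$, $\nu=1/2$ and $C=\{(R/2,\sqrt{3}\,R/2,0,\ldots,0)\}$. The oracle cannot accept $z_0=0$, since $\dist(0,C)=R>\delta_0$, and $h=e_1$ is a legitimate answer. The point of $C$ has relative depth $1/2$, so its value $1+3(p+1)/(2p^2)$ exceeds $\zeta$ and $C\not\subseteq E_1$. The invariant $C\subseteq E_k$ is lost, and alternative~(ii) is no longer justified.

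To repair this, use a single tolerance that is exponentially small in the iteration count, as GLS do; for example $\delta=\nu2^{-2K}/(16(p+1))$. Since $\sqrt{h^\top Q_kh}\ge\norm h_2R2^{-2k}\ge2^{-2K}$ for $k\le K$, this gives $\delta\le\nu$ and $\alpha_k\le1/(16(p+1))$. That depth needs an enlargement of only about $1+1/(8p^2)$, which $\zeta$ absorbs with room left for the rounding error. With that change your sketch is the GLS argument the paper cites.
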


\noindent
This is a restatement, in our notation and normalized parameter range, of
\cite[Theorem~3.2.1, p.~87]{GLS}. The iteration bound follows from
GLS~(3.2.2); its rational implementation and containment/volume analysis are
on pp.~87--93, especially Lemmas~3.2.8--3.2.10 on p.~88. The acceptance clause
in alternative~\textup{(i)} records the stopping rule on p.~88.
The restriction $p\ge2$ follows the dimension assumption that GLS make for
all proofs of their Chapter~3 (p.~66) and use in the proofs of
Lemmas~3.2.8 and~3.2.10 (pp.~89--90 and~93); the case $p=1$ is
handled explicitly below. No nonempty-interior hypothesis is needed here.

\begin{corollary}[Exact feasibility or a small enclosing ellipsoid]
\label{ell:cor:exact-or-small}
Let $p\ge2$ and let $R\ge1$ be rational. If a closed convex set
$C\subseteq B(0,R)$ has a strong rational separation oracle, then for every
rational $0<\nu<1$ an oracle-polynomial algorithm returns either a rational
point $z\in C$, or an ellipsoid containing $C$ of volume at most $\nu$.
Consequently, if $\vol(C)>\nu$, the algorithm must return a point in $C$.
\end{corollary}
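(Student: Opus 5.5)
We prove Corollary~\ref{ell:cor:exact-or-small} by running Theorem~\ref{ell:thm:GLS} with a weak oracle built from the strong one, and then showing that every point accepted in alternative~(i) actually lies in $C$. The weak oracle works as follows. On a query $(y,\delta)$ it calls the strong rational separation oracle of Definition~\ref{def:strong} at the rational point $y$. If the strong oracle accepts, then $y\in C\subseteq C^{+\delta}$, so the weak oracle declares $y\in C^{+\delta}$. Otherwise the strong oracle returns $a\ne0$ and $b$ with $a^\top x\le b<a^\top y$ for all $x\in C$. In that case we set $h=a/\norm a_\infty$, which is rational because $\norm a_\infty$ is the maximum absolute entry of $a$. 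Then $h^\top x\le h^\top y<h^\top y+\delta$ for every $x\in C$, which is~\eqref{ell:eq:GLS-oracle}. The normalization only divides by one rational entry of $a$, so the output length stays polynomial in the strong oracle's output, and the query length is whatever the GLS algorithm supplies. The weak oracle is therefore polynomial whenever the strong one is.

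Next we apply Theorem~\ref{ell:thm:GLS} with the given $R$ and $\nu$. If it returns alternative~(ii), we output the ellipsoid, whose volume is at most $\nu$ and which contains $C$. If it returns alternative~(i), the point $z$ is a queried centre that the oracle accepted. Our weak oracle accepts only when the strong oracle has accepted, and the strong oracle accepts a rational query only if it lies in $C$. Hence $z\in C$ exactly, not merely $z\in C^{+\nu}$. This is the key observation, and it relies on the acceptance clause recorded in the theorem. The main thing to check is this bookkeeping: the returned point must be an accepted query, not some post-processed centre. The theorem states this explicitly, so nothing more is needed. The oracle-call count and bit costs are inherited from Theorem~\ref{ell:thm:GLS}.

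For the final sentence, suppose $\vol(C)>\nu$. Any ellipsoid containing $C$ then has volume at least $\vol(C)>\nu$, by monotonicity of Lebesgue measure. So alternative~(ii) is impossible, and the algorithm must return a point of $C$. Emptiness of $C$ causes no difficulty: if $C=\varnothing$, then $\vol(C)=0$, the consequence is vacuous, and the strong oracle never accepts, so only alternative~(ii) can occur.
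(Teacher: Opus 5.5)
Your proposal is correct and follows essentially the same route as the paper's proof. You use the same adapter: call the strong oracle, accept only on its exact acceptance, and normalize the returned normal by $\norm{\cdot}_\infty$. You then invoke the acceptance clause of Theorem~\ref{ell:thm:GLS} to get $z\in C$ exactly, and use monotonicity of volume to rule out the ellipsoid alternative when $\vol(C)>\nu$; your remark on empty $C$ is a harmless addition.
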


\begin{proof}
Use the algorithm of Theorem~\ref{ell:thm:GLS}, with the following oracle adapter.
On a query $(y,\delta)$, call the strong oracle at $y$. Accept only if it
certifies $y\in C$. Otherwise normalize the returned vector by
$\widehat h=h/\norm h_\infty$. The strict separation inequality implies
\[
 \widehat h^\top x<\widehat h^\top y
 \le\widehat h^\top y+\delta\qquad(x\in C),
\]
so this is a valid oracle for Theorem~\ref{ell:thm:GLS}. Normalization is rational
and has polynomial bit cost. By the acceptance clause in Theorem~\ref{ell:thm:GLS}, every point returned
by the algorithm was accepted by the strong oracle and therefore belongs
\emph{exactly} to $C$. All rounding in the ellipsoid updates is already covered
by the GLS theorem; no rounding of the final point is performed.
If the ellipsoid alternative occurs, monotonicity of volume gives
$\vol(C)\le\nu$. This excludes that alternative when $\vol(C)>\nu$.
\end{proof}

\begin{remark}[The volume certificate]\label{ell:rem:not-empty}
The ellipsoid alternative certifies a volume upper bound, including for
nonempty sets with small volume. This is the certificate used by the
optimization proof below.
\end{remark}

\subsection{An explicit ball of nearly optimal points}
The separation-to-optimization arguments of GLS use lower volume
bounds to detect improving feasible regions; see the proof
of~\cite[Theorem~4.2.2, p.~106]{GLS}.
The next two elementary lemmas give the particular near-optimal-ball
and rational-volume estimates used here, starting from an inner ball
whose centre need not be supplied.

\begin{lemma}[Near-optimal ball]\label{ell:lem:cap-ball}
Suppose $K$ is compact and convex and
\[
 B(a,\rho)\subseteq K\subseteq B(0,R),\qquad
 R\ge1,\quad \rho>0.
\]
Let $c\in\R^p$, set $C_c:=\max\{1,\norm c_1\}$ and $M:=RC_c$, and let
$0<\eta\le1$. If $x^*$ maximizes $c^\top x$ over $K$, then
\begin{equation}\label{ell:eq:cap-ball}
 B\!\left((1-\alpha)x^*+\alpha a,\alpha\rho\right)
 \subseteq
 \{x\in K:c^\top x\ge\OPT-\eta/2\},
 \qquad \alpha:=\frac{\eta}{4M},
\end{equation}
where $\OPT=c^\top x^*$.
\end{lemma}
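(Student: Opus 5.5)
The plan is a homothety argument combined with a Lipschitz bound on the objective. The centre $w:=(1-\alpha)x^*+\alpha a$ is a convex combination of the optimizer $x^*$ and the centre of the inner ball. I show two things: the ball $B(w,\alpha\rho)$ lies in $K$, and the objective on it drops by at most $\eta/2$ below $\OPT$. First note $\alpha=\eta/(4M)\le1/4<1$, since $\eta\le1$ and $M\ge1$; so $w$ is a genuine convex combination.

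\emph{Containment.} Take $x=w+\alpha h$ with $\norm h_2\le\rho$ and rewrite it as
\[
 x=(1-\alpha)x^*+\alpha(a+h).
\]
Here $a+h\in B(a,\rho)\subseteq K$ and $x^*\in K$. Convexity of $K$ then gives $x\in K$.

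\emph{Objective bound.} Linearity gives
\[
 c^\top x=(1-\alpha)\OPT+\alpha c^\top(a+h),
 \qquad\text{so}\qquad
 \OPT-c^\top x=\alpha\bigl(\OPT-c^\top(a+h)\bigr).
\]
Both $x^*$ and $a+h$ lie in $K\subseteq B(0,R)$. Hence each has $\ell_\infty$ norm at most $R$, and Hölder gives $|c^\top u|\le\norm c_1\norm u_\infty\le C_cR=M$ for each of them. The bracket is therefore at most $2M$, and
\[
 \OPT-c^\top x\le\alpha\cdot2M=\eta/2,
\]
which is the claimed inclusion.

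The step that most needs care is this Hölder pairing. The outer ball is Euclidean, so I pass to $\ell_\infty\le\ell_2$ and pair it with the $\ell_1$ norm of $c$; this is why $C_c$ involves $\norm c_1$. Compactness of $K$ is used only to guarantee that the maximizer $x^*$ exists. No property of $a$ is needed beyond $B(a,\rho)\subseteq K$, which is consistent with the centre not being supplied.
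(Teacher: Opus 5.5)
Your proof is correct and is essentially the paper's argument: convexity puts $(1-\alpha)x^*+\alpha B(a,\rho)$ inside $K$, and the objective loss on that ball is $\alpha\bigl(\OPT-c^\top(a+h)\bigr)\le 2\alpha M=\eta/2$. The only difference is cosmetic: the paper bounds $\alpha c^\top(x^*-w)$ by Cauchy--Schwarz, using $\norm c_2\le\norm c_1\le C_c$ and $\norm{x^*-w}_2\le 2R$, whereas you pair $\norm c_1$ with $\norm{\cdot}_\infty\le\norm{\cdot}_2$ for each point separately, which gives the same constant.
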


\begin{proof}
The maximum exists by compactness. Since $0<\alpha\le1/4$, convexity gives
\[
 (1-\alpha)x^*+\alpha B(a,\rho)\subseteq K.
\]
The set on the left is the ball in~\eqref{ell:eq:cap-ball}. Any of its points
has the form $z=(1-\alpha)x^*+\alpha w$, with $w\in B(a,\rho)\subseteq K$.
Both $x^*$ and $w$ belong to $B(0,R)$; hence
\[
 \OPT-c^\top z
 =\alpha c^\top(x^*-w)
 \le\alpha\norm c_2\norm{x^*-w}_2
 \le2\alpha RC_c=\eta/2.
\]
This proves the inclusion. Neither $a$ nor $x^*$ needs to be known to the
algorithm.
\end{proof}

\begin{lemma}[A rational volume threshold]\label{ell:lem:volume}
Let $p\ge1$ and $q>0$, and define
\[
 \nu(q):=\frac12\left(\frac{q}{p}\right)^p.
\]
Every convex set $C\subseteq\R^p$ containing a ball of radius $q$ satisfies
$\vol(C)>\nu(q)$. If $0<q\le1$, then $0<\nu(q)<1$.
Moreover,
\[
 \log(1/\nu(q))=1+p\log p+p\log(1/q).
\]
If $q$ is rational, then $\nu(q)\in\Q$ and
\begin{equation}\label{ell:eq:volume-bits}
 \bits{\nu(q)}=O\!\left(p(\bits q+\log(p+1))\right).
\end{equation}
\end{lemma}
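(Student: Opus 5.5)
The three claims are independent, and I would prove them in the order stated.

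\emph{Volume lower bound.} Let $C$ contain a ball $B(a,q)$. Then $C$ contains the axis-parallel cube $a+[-q/\sqrt p,\,q/\sqrt p]^p$, because every point of this cube lies within Euclidean distance $q$ of $a$. Its volume is
\[
 (2q/\sqrt p)^p .
\]
For $p\ge1$ we have $2/\sqrt p\ge 1/p$, since $2\sqrt p\ge1$. Hence
\[
 \vol(C)\ge (q/p)^p>\tfrac12(q/p)^p=\nu(q),
\]
where the last inequality is strict because $q>0$. Convexity of $C$ is not needed here; containment of the ball is enough.

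\emph{Range and logarithm.} If $0<q\le1$, then $0<q/p\le1$. Therefore $0<(q/p)^p\le1$, and so $0<\nu(q)\le\tfrac12<1$. The logarithm identity follows directly from the definition:
\[
 \log(1/\nu(q))=\log2+p\log(p/q)=1+p\log p+p\log(1/q).
\]

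\emph{Encoding length.} Write $q=a/b$ in lowest terms. Then
\[
 \nu(q)=\frac{a^p}{2\,p^p\,b^p}.
\]
The numerator has at most $p\bits a$ bits. The denominator has at most $1+p\log(p+1)+p\bits b$ bits, up to rounding. Reducing the fraction can only shorten it. Summing the two lengths gives $O(p(\bits q+\log(p+1)))$. Computing $\nu(q)$ takes $O(\log p)$ multiplications by repeated squaring, or $p$ naive multiplications, on integers of this size.

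The only point needing care is the inscribed-cube comparison: checking that $2/\sqrt p\ge1/p$ holds for every $p\ge1$, including $p=1$. This is immediate, so I expect no real obstacle.
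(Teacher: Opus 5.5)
Your proposal is correct and follows essentially the same route as the paper: inscribe an axis-parallel cube in the ball, compare its volume with $\nu(q)$, and read the range, logarithm and encoding bounds directly from $\nu(q)=a^p/(2p^pb^p)$. The only cosmetic difference is that you use the maximal inscribed cube of side $2q/\sqrt p$ and then invoke $2/\sqrt p\ge 1/p$, whereas the paper inscribes the smaller cube of side $2q/p$ directly.
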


\begin{proof}
The ball contains a cube of side length $2q/p$: the Euclidean distance from
its centre to a corner is $q/\sqrt p\le q$. Therefore
\[
 \vol(C)\ge(2q/p)^p>\tfrac12(q/p)^p.
\]
The range $0<\nu(q)<1$ for $0<q\le1$ and the logarithmic identity follow
from the definition. When $q$ is rational, rational exponentiation gives
the encoding estimate. This avoids computing the volume of the Euclidean
unit ball.
\end{proof}

\subsection{Exact-feasible approximate optimization}

The reduction from separation to approximate linear optimization over a
well-bounded convex body is classical; see
GLS~\cite[Theorem~3.2.1, Theorem~4.2.2, and Remark~4.2.5]{GLS}.
The interior-point result~\cite[Theorem~1.1]{deKlerkVallentin16},
recalled in~\cite[Theorem~15]{Gribling23}, also returns an exactly
feasible rational point. Its input includes a rational feasible centre
and inner and outer radius bounds in the affine space of the SDP.
Here we use the classical strong-separation approach, for which the
radius bounds suffice without supplying the centre.
Lemma~\ref{ell:lem:main} records the exact-feasibility version needed
here under a strong rational separation oracle, including the explicit
dependence on the inner radius and the requested accuracy.

\begin{lemma}[Exact-feasible approximate optimization]\label{ell:lem:main}
Let $p\ge1$, and let $R,\rho,\tau$ be supplied positive rational
numbers normalized as in~\eqref{ell:eq:dyadic}; in particular,
$R\ge1$ and $0<\rho,\tau\le1$.
Let $K\subseteq\R^p$ be compact and convex, with
\begin{equation}\label{ell:eq:main-geometry}
 B(a,\rho)\subseteq K\subseteq B(0,R).
\end{equation}
Assume that $K$ admits a polynomial-time strong rational separation oracle.
The centre $a$ need not be supplied. For every $c\in\Q^p$, the algorithm
described in the proof, using the rational central-cut ellipsoid method,
returns $y\in K\cap\Q^p$ such that
\begin{equation}\label{ell:eq:main-guarantee}
 c^\top y\ge\max_{x\in K}c^\top x-\tau.
\end{equation}
Its running time is polynomial in
\begin{equation}\label{ell:eq:main-time}
 p,\quad \mathsf{len},\quad 1+\log R,\quad 1+\log(1/\rho),\quad
 \bits c,\quad 1+\log(1/\tau).
\end{equation}
For arbitrary positive rational bounds and accuracy, first apply the
normalization in Section~\ref{ell:sec:preliminaries} and include the input
encoding lengths. The resulting point satisfies the originally requested
accuracy, and all feasibility assertions refer to the original set $K$.
\end{lemma}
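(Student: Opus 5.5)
The plan is the classical ``shallow-cut by objective'' reduction: binary search on the objective value, using Corollary~\ref{ell:cor:exact-or-small} as a feasibility subroutine on sliced bodies and Lemmas~\ref{ell:lem:cap-ball} and~\ref{ell:lem:volume} to guarantee that a good slice has large volume. Dimension $p=1$ is handled separately. There $K$ is an interval $[\ell,u]$ of length at least $2\rho$ inside $[-R,R]$, so exact bisection works: query the strong oracle at dyadic points of denominator $2^{k_2+k_3+\bits c+1}$ and use the returned separator signs to locate the interval. Any accepted dyadic grid point within $\tau/\max\{1,|c|\}$ of the optimal endpoint suffices, and such a point exists because the interval is long compared with the grid. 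So assume $p\ge2$.

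\textbf{Step 1 (parameters).} If $c=0$, run Corollary~\ref{ell:cor:exact-or-small} on $K$ with $\nu=\nu(\rho)$; Lemma~\ref{ell:lem:volume} forces a point of $K$, which is exactly optimal. Otherwise set $C_c=\max\{1,\norm c_1\}$, $M=RC_c$, $\eta=\tau$, $\alpha=\eta/(4M)$ and $q=\alpha\rho$. Round $q$ down to a power of two, so $\bits{\nu(q)}$ is polynomial in the quantities in~\eqref{ell:eq:main-time}.

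\textbf{Step 2 (sliced oracle).} For a rational threshold $\gamma$, let $K_\gamma=\{x\in K:c^\top x\ge\gamma\}$. This set is closed, convex, and inside $B(0,R)$. Its strong separation oracle first checks $c^\top y\ge\gamma$ exactly; if that fails it returns $(-c,-\gamma)$, and otherwise it calls the oracle of $K$. Corollary~\ref{ell:cor:exact-or-small} with $\nu=\nu(q)$ then returns either a rational point of $K_\gamma$, exactly feasible in $K$, or a certificate that $\vol(K_\gamma)\le\nu(q)$. In the latter case, Lemma~\ref{ell:lem:cap-ball} and Lemma~\ref{ell:lem:volume} show that $\gamma>\OPT-\eta/2$, since otherwise $K_\gamma$ contains a ball of radius $q$.

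\textbf{Step 3 (search).} Keep $\gamma_{\rm lo}$ equal to the value of the best point found so far, starting from a point obtained with $\gamma=-\norm c_1R$, which must succeed. Keep $\gamma_{\rm hi}=\norm c_1R$, a valid upper bound. Bisect with dyadic midpoints. On success, record the point and raise $\gamma_{\rm lo}$. On failure, set $\gamma_{\rm hi}=\gamma+\eta/2$, which is a valid bound on $\OPT$ by Step~2. Stop once $\gamma_{\rm hi}-\gamma_{\rm lo}\le\tau$. This takes $O(\log(M/\tau))$ rounds, and each midpoint has polynomial encoding length because all thresholds are dyadic with bounded exponents.

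The main obstacle is bookkeeping rather than geometry. The inner-ball centre $a$ and the optimizer $x^*$ are never computed, and they enter only through existence in Lemma~\ref{ell:lem:cap-ball}. Exact feasibility of the output rests on the acceptance clause of Theorem~\ref{ell:thm:GLS}: returned points are centres accepted by the strong oracle, which accepts only points of $K_\gamma$. The remaining care is verifying that $\log(1/\nu(q))=O(p(\log p+\log(1/\rho)+\log M+\log(1/\tau)))$, which bounds the oracle-call count. The final point $y$ then satisfies $c^\top y\ge\gamma_{\rm hi}-\tau\ge\OPT-\tau$.
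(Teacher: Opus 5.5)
Your architecture is the paper's: a first point of $K$ from Corollary~\ref{ell:cor:exact-or-small}, a sliced oracle for $K_\gamma$, failure certified by Lemmas~\ref{ell:lem:cap-ball} and~\ref{ell:lem:volume}, and exact feasibility from the acceptance clause. But the search in Step~3 does not terminate with your choice $\eta=\tau$.

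\textbf{The search does not terminate.} A failure sets $\gamma_{\rm hi}\leftarrow\gamma+\tau/2$. Starting from width $w=\gamma_{\rm hi}-\gamma_{\rm lo}>\tau$, the new width is therefore $w/2+\tau/2$, which is again $>\tau$. The map $w\mapsto w/2+\tau/2$ has fixed point exactly $\tau$. Failures only drive the width to $\tau$ asymptotically, and your stopping test $\gamma_{\rm hi}-\gamma_{\rm lo}\le\tau$ is never met through them.

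Failures are also forced in exactly the situation you want to reach. Once the incumbent satisfies $c^\top y>\OPT-\tau/2$, every midpoint $\gamma_{\rm lo}+w/2$ with $w>\tau$ exceeds $\OPT$. Then $K_\gamma=\varnothing$, and the corollary must return the ellipsoid alternative. From that point $\gamma_{\rm lo}$ never moves and the loop runs forever, so your claimed $O(\log(M/\tau))$ round count is false.

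The repair is to make the failure slack a fixed fraction smaller than $\tau$. The paper takes $\eta=\tau/4$ and updates $u\leftarrow\beta+\eta$, so a failure gives width $w/2+\tau/4<3w/4$. With your sharper failure bound $\OPT<\gamma+\eta/2$, the choice $\eta=\tau/2$ would also work. Either change only shrinks $q$ by a constant factor.

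\textbf{The threshold lengths are not controlled.} You set $\gamma_{\rm lo}$ to the incumbent value $c^\top y$ and start from $\gamma_{\rm hi}=\norm c_1R$, so the thresholds are not dyadic as you assert. Each midpoint inherits the encoding length of the previous ellipsoid output. That length is only polynomial in the previous threshold's length, and iterating such a bound over $\Theta(\log(M/\tau))$ rounds is not polynomial.

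The paper avoids this by setting $\ell\leftarrow\beta$ on success, not $c^\top z$. Then after $j$ rounds both endpoints lie in $[-M,M]$ with denominators dividing $2^jD_0$, and the output points never feed back into the thresholds. Rounding thresholds to a fixed dyadic grid would also control lengths. Termination would then rest on discreteness and the rounding direction: rounding up can stall at widths just above $\tau$. You would need to argue that separately.

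\textbf{Your $p=1$ case is fine.} Direct bisection on a dyadic grid is a valid, more elementary alternative to the paper's lift to $K\times[-1,1]$. The strong oracle decides the relevant monotone predicate, either by accepting or by the sign of its separator. Since the mesh is below $2\rho$, the last grid point before the optimal endpoint lies in $K$, and its loss is at most $\tau/2$.
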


\begin{proof}
First suppose $p\ge2$. The bounds and accuracy are normalized by hypothesis.
Denote the optimum by $\OPT$ and define the
following rational quantities:
\begin{equation}\label{ell:eq:parameters}
 C_c=\max\{1,\norm c_1\},\qquad
 M=RC_c,\qquad \eta=\frac\tau4,\qquad
 q=\frac{\eta\rho}{4M}=\frac{\tau\rho}{16M},\qquad
 \nu=\frac12\left(\frac{q}{p}\right)^p.
\end{equation}
Here $M\ge1$, $0<\eta\le1/4$, $0<q\le\rho\le1$, and $0<\nu<1$. Also
\begin{equation}\label{ell:eq:objective-bound}
 |c^\top x|\le\norm c_2\norm x_2\le M\qquad(x\in K).
\end{equation}

\paragraph{Step 1: obtain one exactly feasible point.}
Apply Corollary~\ref{ell:cor:exact-or-small} to $K$ with threshold $\nu$.
Since $K$ contains a ball of radius $\rho\ge q$,
Lemma~\ref{ell:lem:volume} gives $\vol(K)>\nu$. Thus the call returns a rational
point $y\in K$. If $c=0$, return this point immediately.
In what follows $c\ne0$.

\paragraph{Step 2: implement a threshold query with a controlled gap.}
For $\beta\in\Q$, let
\[
 K_\beta:=K\cap\{x:c^\top x\ge\beta\}.
\]
This is a closed convex subset of $B(0,R)$, possibly empty. A strong oracle
for it is constructed as follows. If a query $z$ satisfies $c^\top z<\beta$,
return $h=-c$ and $b=-\beta$, which satisfy
\[
 h^\top x\le b<h^\top z\qquad(x\in K_\beta).
\]
Otherwise call the oracle for $K$. Its separating inequality remains valid
for $K_\beta$, and a positive answer certifies $z\in K_\beta$.

Run Corollary~\ref{ell:cor:exact-or-small} on $K_\beta$ with the same threshold
$\nu$. The resulting threshold procedure has the following alternatives:
\begin{align}
 \textnormal{success: }& z\in K\cap\Q^p,\quad c^\top z\ge\beta;
 \label{ell:eq:threshold-success}\\
 \textnormal{failure: }& \OPT<\beta+\eta.
 \label{ell:eq:threshold-failure}
\end{align}
To prove the failure implication, suppose instead that
$\beta\le\OPT-\eta$. Lemma~\ref{ell:lem:cap-ball} gives a ball of radius $q$
whose objective values are all at least $\OPT-\eta/2\ge\beta$.
That ball is contained in $K_\beta$, so
$\vol(K_\beta)>\nu$ by Lemma~\ref{ell:lem:volume}. The enclosing-ellipsoid
alternative is then impossible. This proves~\eqref{ell:eq:threshold-failure}.
The procedure need not decide whether $K_\beta$ is empty.

\paragraph{Step 3: search while retaining an exactly feasible point.}
Initialize numerical bounds $\ell=-M$ and $u=M$, keeping the point $y$ found
in Step~1. Maintain the invariant
\begin{equation}\label{ell:eq:invariant}
 y\in K\cap\Q^p,\qquad
 \ell\le c^\top y\le\OPT\le u.
\end{equation}
It holds initially by~\eqref{ell:eq:objective-bound}. While $u-\ell>\tau$,
set $\beta=(\ell+u)/2$ and execute the threshold procedure. Update as follows:
\begin{enumerate}[label=\textup{(\alph*)}]
\item On success with point $z$, set $\ell\leftarrow\beta$ and
      $y\leftarrow z$, leaving $u$ unchanged.
\item On failure, set $u\leftarrow\beta+\eta$, leaving $\ell$ and $y$
      unchanged.
\end{enumerate}
Both updates preserve~\eqref{ell:eq:invariant}, by
\eqref{ell:eq:threshold-success} and~\eqref{ell:eq:threshold-failure}.
In a failure update, $\beta+\eta<u$ because
$u-\ell>\tau>2\eta$. Thus the upper endpoint strictly decreases.
If the current width is $w=u-\ell>\tau$, a success reduces it to
$w/2$, whereas a failure changes it to
\[
 w/2+\eta=w/2+\tau/4<3w/4.
\]
The loop therefore terminates after
\begin{equation}\label{ell:eq:search-count}
 O\!\left(1+\log(M/\tau)\right)
\end{equation}
threshold queries. At termination,~\eqref{ell:eq:invariant} yields
\[
 0\le\OPT-c^\top y\le u-\ell\le\tau,
\]
which is~\eqref{ell:eq:main-guarantee}. The point $y$ is exactly feasible because
it is always one of the points accepted by the strong oracle.

\paragraph{Step 4: bit complexity.}
The numbers in~\eqref{ell:eq:parameters} have polynomial encoding length in
the parameters of~\eqref{ell:eq:main-time}. In particular,
\begin{equation}\label{ell:eq:threshold-size}
 \log(1/\nu)
 =1+p\log p+p\log\!\left(\frac{16M}{\tau\rho}\right).
\end{equation}
Furthermore $\bits{C_c}$ is polynomial in $\bits c$ and
$\log M\le\log R+O(1+\bits c)$. Each ellipsoid call thus uses polynomially
many oracle queries of polynomial length, provided the threshold $\beta$
also has polynomial length. The following observation makes that last
point explicit and prevents a recursive bit-length bound.

Let $D_0$ be the product of the denominators of $M$ and $\eta$ in lowest
terms. After $j$ threshold queries, both $\ell$ and $u$ have denominator
dividing $2^jD_0$. Indeed, taking a midpoint introduces at most one additional
factor of two, and adding $\eta$ introduces no other denominator.
Both endpoints stay in $[-M,M]$, because the interval contracts at every
iteration. Hence their encoding lengths, and the length of the next
threshold, are
\[
 O\!\left(1+\log(M+1)+\log D_0+j\right).
\]
Together with~\eqref{ell:eq:search-count}, this is a uniform polynomial bound.
Notice that a successful update uses $\ell=\beta$, not $\ell=c^\top z$;
the returned point's coordinates never feed into the search thresholds.

The oracle for $K_\beta$ uses only a rational scalar-product comparison and
at most one call to the oracle for $K$. Its bit cost is polynomial in the
query length, $\mathsf{len}$, $\bits c$, and $\bits\beta$. Combining this fact,
Theorem~\ref{ell:thm:GLS},~\eqref{ell:eq:search-count}, and~\eqref{ell:eq:threshold-size}
proves the claimed total polynomial running time, including output length.
For example, the total number of calls to the oracle for $K$ is at most
\[
 O\!\left(
 \bigl(1+\log(M/\tau)\bigr)
 \bigl(p^2\log(2R)+p\log(1/\nu)\bigr)
 \right),
\]
with polynomial bit work per call.

\paragraph{Dimension one.}
If $p=1$, apply the preceding proof in dimension two to
$\overline K=K\times[-1,1]$ and objective $\overline c=(c,0)$.
The product contains $B((a,0),\rho)$ and is contained in $B(0,2R)$, because
$\rho\le1$ and $R\ge1$. Its strong oracle acts on a rational query
$z=(z_1,z_2)$ as follows. If $|z_2|>1$, return
\[
 h=(0,\operatorname{sgn}z_2),\qquad b=1.
\]
Then $h^\top x\le1<|z_2|=h^\top z$ for every $x\in\overline K$.
Otherwise query the oracle for $K$ at $z_1$: accept $z$ if it accepts $z_1$,
and lift a separating pair $(h_1,b)$ to $((h_1,0),b)$ if it rejects.
This is a polynomial-time strong rational separation oracle for $\overline K$.
Projecting the returned point onto
its first coordinate preserves exact feasibility and objective value.
The change of parameters is polynomial. This completes the proof for all
$p\ge1$.
\end{proof}

\paragraph{Application to the main text.}
The algorithm uses the numerical radius bound. The existential Boolean
simplex in Section~\ref{sec:primal} provides this bound directly from
the number of free coordinates.

\section{Bounded-certificate search and residual repair}
\label{app:search}
The passage from coefficient bounds to approximate SoS proof search uses
classical convex optimization; see the bit-complexity context in
Raghavendra--Weitz~\cite{RW} and the rounding and residual-repair
arguments in~\cite[Theorem~22 and Lemma~23]{Gribling23}.
Here we prove the particular reduction used
in the paper: a full-dimensional perturbation of the coefficient SDP,
followed by exact repair of a small residual using Boolean axioms.
A related exact-arithmetic technique is rational rounding followed by
projection onto the affine space of Gram matrices, under a strict
feasibility and error bound
\cite[Section~3.1, Proposition~8]{PeyrlParrilo08}.
The reduction below combines a perturbation of the coefficient SDP
with the degree-preserving Boolean correction of
Lemma~\ref{lem:residual}.
The proof invokes the GLS consequence established in
Appendix~\ref{app:ellipsoid} and the explicit correction in
Appendix~\ref{app:algebra}.

\begin{lemma}[Bounded-certificate search]\label{lem:certificate-search}
Let $t\ge2$ be even. Suppose a rational Boolean polynomial system admits
a real degree-$t$ Gram proof of a specified rational polynomial $f\ge0$
whose matrix entries and equality-multiplier coefficients have magnitude
at most a supplied rational $B\ge1$. For rational $\varepsilon>0$, an
algorithm finds a rational degree-$t$ Gram proof of $f+\varepsilon\ge0$
in time polynomial in $\binom{n+t}{t}$, the system and target encodings,
$\bits B$ and $\bits\varepsilon$.
\end{lemma}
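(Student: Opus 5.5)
The plan is to set up the coefficient SDP whose variables are the Gram matrices $Q_0,\ldots,Q_m$ (of the block sizes $a_i=\lfloor(t-\deg g_i)/2\rfloor$) and the coefficients of the equality multipliers $u_j$. The linear constraint is the coefficient identity
\[
 f=\sum_i g_i v_{a_i}^\top Q_i v_{a_i}+\sum_j u_jh_j .
\]
We add the box $|Q_{i,uv}|,|u_{j,\alpha}|\le B$. By hypothesis this set $F$ is nonempty. The obstacle to running Lemma~\ref{ell:lem:main} directly is that $F$ lies in a proper affine subspace and may have no interior there. Its positive semidefiniteness may even be forced to be degenerate, so no inner ball exists. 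We therefore replace exact feasibility by a full-dimensional relaxation. Fix $\eta>0$, to be chosen below, and let $F_\eta$ be the set of all Gram/multiplier tuples with $Q_i+\eta I\succeq0$, entries bounded by $B+1$, and residual
\[
 e:=f-\sum_i g_iv_{a_i}^\top Q_iv_{a_i}-\sum_j u_jh_j
\]
satisfying $\norm{e}_1\le\eta'$ coefficientwise. Here $\eta'$ is chosen so that the subsequent repair costs at most $\varepsilon/2$.

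Next we supply the ellipsoid data. Take the real bounded certificate $(Q^\star,u^\star)$. Any perturbation of its entries by at most $\theta$ keeps $Q_i+\eta I\succeq0$ whenever $\theta\cdot(\text{block size})\le\eta$. It also changes the residual by at most $\theta$ times a polynomial factor $P=\operatorname{poly}(\binom{n+t}{t},2^{\mathsf{len}})$, namely the $\ell_1$ mass of the $g_i$ and $h_j$ times the number of terms. So with $\theta=\min\{\eta/W_t,\eta'/P,1\}$, the box of radius $\theta$, and hence a ball, around $(Q^\star,u^\star)$ lies in $F_\eta$. The outer radius is $(B+1)$ times the square root of the dimension. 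Strong rational separation for $F_\eta$ works as follows. The box and the $\ell_1$ residual constraints are finitely many explicit linear inequalities; we check each one and return the violated one. For the PSD part we apply Lemma~\ref{lem:psd} to $Q_i+\eta I$ and return the witness $w$, which gives the cut $w^\top Q_iw\ge-\eta\norm w^2$. The constraint $\norm e_1\le\eta'$ can be written with auxiliary variables, or handled by taking the sign pattern of the violated residual as the normal. We then run Lemma~\ref{ell:lem:main} with objective $c=0$ (only feasibility is needed; Step 1 there suffices). This yields rational $(\bar Q,\bar u)\in F_\eta$ in time polynomial in the dimension, $\bits B$, $\log(1/\eta)$ and $\log(1/\eta')$.

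The final step is repair. Set $Q_i'=\bar Q_i+\eta I\succeq0$. This gives
\[
 f+\varepsilon=\sum_i g_iv_{a_i}^\top Q_i'v_{a_i}+\sum_j\bar u_jh_j+\bigl(\varepsilon+e-\eta\textstyle\sum_i g_i\norm{v_{a_i}}^2\bigr).
\]
The bracket is $\varepsilon+e'$ with $\norm{\ml(e')}_1\le\eta'+\eta\sum_i\norm{g_i}_1W_t$. We choose $\eta,\eta'$ as powers of two making this at most $\varepsilon$; their bit length is polynomial. Lemma~\ref{lem:residual} then turns $\varepsilon+e'$ into a rational degree-$t$ Gram certificate from the Boolean equations. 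We add it in, merging it into $Q_0'$ and the Boolean multipliers. Every $g_i$ term has degree at most $t$, so the degrees are preserved.

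I expect the main obstacle to be the bookkeeping in the inner-ball step. The perturbation radius must dominate both the PSD slack and the residual tolerance simultaneously, while $\log(1/\theta)$ stays polynomial. This is also where the choice of $\ell_1$ versus Euclidean norms and the treatment of the $\ell_1$ residual constraint in the separation oracle must be made carefully.
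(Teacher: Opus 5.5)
Your proposal is correct and follows essentially the same route as the paper: relax the coefficient SDP by a PSD slack $\eta I$, a residual tolerance and a box; place a ball around the unknown exact certificate; separate with Lemma~\ref{lem:psd}; call Lemma~\ref{ell:lem:main} with zero objective; then shift the Gram blocks by $\eta I$ and repair the residual with Lemma~\ref{lem:residual}. The differences are cosmetic. The paper imposes coordinatewise residual bounds $|(Az-\mathbf f)_j|\le\eta$ with a single parameter, so separating the residual constraint is immediate, and it parametrizes by the independent symmetric Gram entries. Your perturbation argument implicitly needs that parametrization for full-dimensionality.
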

\begin{proof}
List the independent symmetric Gram entries and equality-multiplier
coefficients as $z\in\R^T$, and write the coefficient equations of a
degree-$t$ proof as $Az=\mathbf f$. They include all ordinary monomials
of degree at most $t$, of which there are $M=\binom{n+t}{t}$.
The rational matrix $A$ is explicitly constructed by polynomial
multiplication. Its dimensions and encoding length are polynomial in
the stated input. Put $g_0=1$, and let $Q_i(z)$ denote its Gram blocks.
Use the retained degrees as in Section~\ref{sec:prelim}, and define the
known polynomial
\[
 G=\sum_i g_i\sum_{m\text{ indexing }Q_i}m^2,
 \quad U=\max\{1,M+\norm G_1\},\quad
 \eta=\min\{1,\varepsilon/(2U)\}.
\]
All these are rational and have polynomial encoding length. Set
$H=\max\{2,\max_j\sum_i|A_{ji}|\}$ and consider the compact convex set
\[
 C_\eta=\{z:|z_i|\le2B,\quad
    |(Az-\mathbf f)_j|\le\eta,\quad Q_i(z)+\eta I\succeq0\}.
\]
Let $z^0$ encode the assumed exact certificate. For $\norm h_2\le
\eta/(4H)$, each box coordinate of $z^0+h$ has magnitude less than
$2B$, each equation residual has magnitude at most $H\norm h_2\le\eta/4$,
and the perturbation in any symmetric Gram block has Frobenius norm at
most $2\norm h_2<\eta$. Thus
\[
 B(z^0,\eta/(4H))\subseteq C_\eta\subseteq B(0,2BT).
\]
The centre $z^0$ is not needed. Box and coefficient inequalities have
explicit rational separators. A failed PSD condition gives a rational
negative witness by Lemma~\ref{lem:psd}, and hence a strong separator
linear in $z$. A zero normal with a negative constant is impossible
because $C_\eta$ is nonempty. This is a polynomial-time strong rational
oracle. Normalizing the radius bounds to powers of two, apply
Lemma~\ref{ell:lem:main} with objective zero to obtain a rational
$z\in C_\eta$ exactly. All required logarithms and oracle encodings
are polynomial in the input parameters.

Replace each block $Q_i(z)$ by $\bar Q_i=Q_i(z)+\eta I\succeq0$, leaving
the equality multipliers unchanged, and let $P$ be the polynomial they
certify. It is an exact rational SoS proof of $P\ge0$. Before the shift
the coefficient error from $f$ has $\ell_1$ norm at most $M\eta$; the
shift adds $\eta G$. Hence for $e=f-P$,
\[
 \norm{\ml(e)}_1\le\norm e_1
 \le\eta(M+\norm G_1)\le\varepsilon/2.
\]
Lemma~\ref{lem:residual} gives a rational Boolean certificate of
$\varepsilon+e\ge0$ at degree $t$. Adding it to the certificate of
$P$ gives the exact polynomial identity for $f+\varepsilon$.
All final matrix and coefficient lengths are polynomial by the oracle
output bound and the explicit repair construction.
\end{proof}

\end{document}